\newtheorem{thm}{Theorem}[section]
\newtheorem{thm*}{Theorem}[section]
\newtheorem{cor}{Corollary}[section]
\newtheorem{cor*}{Corollary}[section]
\newtheorem{definition}{Definition}[section]
\newtheorem{example}{Example}[section]
\newtheorem{remark}{Remark}[section]
\newtheorem{prop}{Proposition}[section]
\newtheorem{prop*}{Proposition}[section]
\newtheorem{property}{Property}[section]
\newtheorem{assumption}{Assumption}
\newcommand{\bX}{\boldsymbol{X}}
\DeclareMathOperator*{\argmax}{arg\,max}
\newcommand{\Rb}{\mathbb{R}}
\newcommand{\Ex}{\mathbb{E}}
\newcommand{\wbX}{\widetilde{\bX}}
  \providecommand\BibTeX{{%
    \normalfont B\kern-0.5em{\scshape i\kern-0.25em b}\kern-0.8em\TeX}}}
\begin{document}

%%
%% The "title" command has an optional parameter,
%% allowing the author to define a "short title" to be used in page headers.
\title{A Marketplace for Data: An Algorithmic Solution}

%%
%% The "author" command and its associated commands are used to define
%% the authors and their affiliations.
%% Of note is the shared affiliation of the first two authors, and the
%% "authornote" and "authornotemark" commands
%% used to denote shared contribution to the research.
\author{Anish Agarwal}
\email{anish90@mit.edu}
\affiliation{%
  \institution{Massachusetts Institute of Technology}
}

\author{Munther Dahleh}
\email{dahleh@mit.edu}
\affiliation{%
  \institution{Massachusetts Institute of Technology}
}

\author{Tuhin Sarkar}
\email{tsarkar@mit.edu}
\affiliation{%
  \institution{Massachusetts Institute of Technology}
}

%%
%% By default, the full list of authors will be used in the page
%% headers. Often, this list is too long, and will overlap
%% other information printed in the page headers. This command allows
%% the author to define a more concise list
%% of authors' names for this purpose.
\renewcommand{\shortauthors}{Agarwal, et al.}

%%
%% The abstract is a short summary of the work to be presented in the
%% article.
\begin{abstract}
In this work, we aim to design a data marketplace; a robust real-time matching mechanism to efficiently buy and sell training data for Machine Learning tasks. While the monetization of data and pre-trained models is an essential focus of industry today, there does not exist a market mechanism to price training data and match buyers to sellers while still addressing the associated (computational and other) complexity. The challenge in creating such a market stems from the very nature of data as an asset: (i) it is freely replicable; (ii) its value is inherently combinatorial due to correlation with signal in other data; (iii) prediction tasks and the value of accuracy vary widely; (iv) usefulness of training data is difficult to verify a priori without first applying it to a prediction task. As our main contributions we: (i) propose a mathematical model for a two-sided data market and formally define the key associated challenges; (ii) construct algorithms for such a market to function and analyze how they meet the challenges defined. We highlight two technical contributions: (i) a new notion of ``fairness" required for cooperative games with freely replicable goods; (ii) a truthful, zero regret mechanism to auction a class of combinatorial goods based on utilizing Myerson's payment function and the Multiplicative Weights algorithm. These might be of independent interest.

%We highlight two technical contributions: (i) a remarkable link between Myerson's payment function arising in mechanism design and the Lovasz extension arising in submodular optimization; (ii) a novel notion of ``fairness" required for cooperative games with freely replicable goods. These might be of independent interest.

\end{abstract}

%%
%% The code below is generated by the tool at http://dl.acm.org/ccs.cfm.
%% Please copy and paste the code instead of the example below.
%%
\begin{CCSXML}
<ccs2012>
<concept>
<concept_id>10003752.10010070.10010099</concept_id>
<concept_desc>Theory of computation~Algorithmic game theory and mechanism design</concept_desc>
<concept_significance>500</concept_significance>
</concept>
</ccs2012>
\end{CCSXML}

\ccsdesc[500]{Theory of computation~Algorithmic game theory and mechanism design}

%%
%% Keywords. The author(s) should pick words that accurately describe
%% the work being presented. Separate the keywords with commas.
\keywords{Data Marketplaces, Value of Data, Shapley Value, Online Combinatorial Auctions}

%% A "teaser" image appears between the author and affiliation
%% information and the body of the document, and typically spans the
%% page.
%\begin{teaserfigure}
%  \includegraphics[width=\textwidth]{sampleteaser}
%  \caption{Seattle Mariners at Spring Training, 2010.}
%  \Description{Enjoying the baseball game from the third-base
%  seats. Ichiro Suzuki preparing to bat.}
%  \label{fig:teaser}
%\end{teaserfigure}

%%
%% This command processes the author and affiliation and title
%% information and builds the first part of the formatted document.
\maketitle

\newpage
% INTRODUCTION
	\section{Introduction}
	\label{sec: Introduction}
	\noindent \textbf{A Data Marketplace - Why Now?} 
Machine Learning (ML) is starting to take the place in industry that "Information Technology" had in the late 1990s: businesses of all sizes and in all sectors, are recognizing the necessity to develop predictive capabilities for continued profitability. To be effective, ML algorithms rely on high-quality training data -- however, obtaining relevant training data can be very difficult for firms to do themselves, especially those early in their path towards incorporating ML into their operations. This problem is only further exacerbated, as businesses increasingly need to solve these prediction problems in real-time (e.g. a ride-share company setting prices, retailers/restaurants sending targeted coupons to clear inventory), which means data gets ``stale" quickly. Therefore, we aim to design a data marketplace -- a real-time market structure for the buying and selling of training data for ML.

\medskip 
 \noindent \textbf{What makes Data a Unique Asset?} (i) Data can be replicated at zero marginal cost -- in general, modeling digital goods (i.e., freely replicated goods) as assets is a relatively new problem (cf. \cite{aiello2001priced}). (ii) Its value to a firm is inherently combinatorial i.e., the value of a particular dataset to a firm depends on what other (potentially correlated) datasets are available - hence, it is not obvious how to set prices for a collection of datasets with correlated signals. (iii) Prediction tasks and the value of an increase in prediction accuracy vary widely between different firms - for example, a 10\% increase in prediction accuracy has very different value for a hedge fund maximizing profit compared to a logistics company trying to decrease inventory costs. (iv) The authenticity and usefulness of data is difficult to verify a priori without first applying it to a prediction task - continuing the example from above, a particular dataset of say satellite images may be very predictive for a specific financial instrument but may have little use in forecasting demand for a logistics company (and this is infeasible to check beforehand).

\medskip 
\noindent \textbf{Why Current Online Markets Do Not Suffice?} Arguably, the most relevant real-time markets to compare against are: (i) online ad auctions (cf. \cite{varian2009online}); (ii) prediction markets (cf. \cite{wolfers2004prediction}). Traditionally, in these markets (e.g. online ad auctions) the commodity (e.g. ad-space) is not a replicable good and buyers have a strong prior (e.g. historical click-through-rate) on the value of the good sold (cf. \cite{liu2006designing, Zhang:2014:ORB:2623330.2623633}). In contrast for a data market, \textit{it is infeasible for a firm to make bids on specific datasets as it is unlikely they have a prior on its usefulness}. Secondly, it is infeasible to run something akin to a second price auction (and variants thereof) since data is freely replicable (unless a seller artificially restricts the number of replications, which may be suboptimal for maximizing revenue). This problem only gets significantly more complicated due to the combinatorial nature of data.  \textit{Thus any market which matches prediction tasks and training features on sale, needs to do so based on which datasets collectively are, empirically the most predictive and ``cheap" enough for a buyer}. This is a capability online ad markets and prediction markets do not currently have. See Section \ref{sec:appendix_literature_review} for a more thorough comparison with online ad and prediction markets. 

\subsection{Overview of Contributions} 
\noindent \textbf{Mathematical Model of Two-Sided Data Market. Formal Definition of Key Challenges.} As the main contribution of this paper, we propose a mathematical model of a system design for a data marketplace; we rigorously parametrize the participants of our proposed market - the buyers, the sellers and the marketplace itself (Sections \ref{sec:seller_definition}, \ref{sec:buyer_definition}, \ref{sec:marketplace_definition}) - and the mechanism by which they interact (Section \ref{sec:market_dynamics}). 
{\em This is a new formulation, which lays out a possible architecture for a data marketplace, and takes into account some of the key properties that make data unique}; 
it is freely replicable, it is combinatorial (i.e., features have overlapping information), buyers having no prior on usefulness of individual datasets on sale and the prediction tasks of buyers vary widely. In Section \ref{sec:marketplace_properties}, we study the key challenges for such a marketplace to robustly function in real-time, which include: (i) incentivizing buyers to report their internal valuations truthfully; (ii) updating the price for a collection of correlated datasets such that revenue is maximized over time; (iii) dividing the generated revenue ``fairly" among the training features so sellers get paid for their marginal contribution; (iv) constructing algorithms that achieve all of the above and are efficiently computable (e.g. run in polynomial time in the parameters of the marketplace)?

\medskip 
\noindent \textbf{Algorithmic Solution. Theoretical Guarantees.} In Section \ref{sec:market_construction}, we construct algorithms for the various functions the marketplace must carry out: (i) allocate training features to and collect revenue from buyers; (ii) update the price at which the features are sold; (iii) distribute revenue amongst the data sellers.  In Section \ref{sec:main_results}, we prove these particular constructions do indeed satisfy the desirable marketplace properties laid out in Section \ref{sec:marketplace_properties}. We highlight two technical contributions: (i) Property \ref{prop:replication_robustness}, a novel notion of ``fairness" required for cooperative games with freely replicable goods, which generalizes the standard notion of Shapley fairness; (ii) a truthful, zero regret mechanism for auctioning a particular class of combinatorial goods based on utilizing Myerson's payment function (cf. \cite{myerson1981optimal}) and the Multiplicative Weights algorithm (cf. \cite{mw}). These might be of independent interest.

\subsection{Motivating Example from Inventory Optimization} \label{sec:motivating_retail_example}
We begin with an example from inventory optimization to help build intuition for our proposed architecture for a data marketplace (see Section \ref{sec:model} for a mathematical formalization of these dynamics). We refer back to this example throughout the paper as we introduce various notations and algorithmic constructions.
\medskip

\noindent \textbf{Inventory Optimization Example}: Imagine data sellers are retail stores selling anonymized minute-by-minute foot-traffic data streams into a marketplace and data buyers are logistics companies who want features (i.e. various time series) that best forecast future inventory demand. In such a setting, even though a logistics company clearly knows there is predictive value in these data streams on sale, it is reasonable to assume that the company does not have a good prior on what collection of foot-traffic data streams are most predictive for demand forecasting, and within their budget. Thus, practically speaking, such a logistics company cannot make accuracy, individual bids for each data stream (this is even without accounting for the significant additional complication arising from the overlap in signal i.e., the correlation that invariably will exist between the foot-traffic data streams of the various retail stores). 

Instead what a logistics company does realistically have access to is a well-defined cost model for not predicting demand well (cf. \cite{heyman2004stochastic, bullwhipdemand}) - e.g., ``10\% over/under-capacity costs \$10,000 per week''. Hence it can make a bid into a data market of what a marginal increase in forecasting accuracy of inventory demand is worth to it - e.g. ``willing to pay \$1000 for a percentage increase in demand forecasting accuracy from the previous week". 

In such a setting, the marketplace we design performs the following steps: 
\begin{enumerate}
\item The logistics company supplies a prediction task (i.e., a time series of historical inventory demand) and a bid signifying what a marginal increase in accuracy is worth to it
\item The market supplies the logistics company with foot-traffic data streams that are ``cheap" enough as a function of the bid made and the current price of the data streams
\item A ML model is fit using the foot-traffic data streams sold and the historical inventory demand
\item Revenue is collected based {\em only on the increased accuracy in forecasting inventory demand} \footnote{Model evaluation could potentially be done on an out-of-sample test set or based on actual prediction performance on future unseen demand}
\item Revenue is divided amongst all the retail stores who provided foot-traffic data
\item The price associated with the foot-traffic data streams is then updated
\end{enumerate}
What we find especially exciting about this example is that it can easily be adapted to a variety of commercial settings. Examples include: (i) hedge funds sourcing alternative data to predict certain financial instruments; (ii) utility companies sourcing electric vehicle charging data to forecast electricity demand during peak hours; (iii) retailers sourcing online social media data to predict customer churn. 

Thus we believe the {\em dynamic described above can be a natural, scalable way for businesses to source data for ML tasks, without knowing a priori what combination of data sources will be useful}.

\subsection{Literature Review} \label{sec:appendix_literature_review}

\noindent \textbf{Auction design and Online Matching}. In this work, we are specifically concerned with online auction design in a two--sided market. There is a rich body of literature on optimal auction design theory initiated by~\cite{myerson1981optimal},~\cite{riley1981optimal}. We highlight some representative papers. In~\cite{rochet2003platform} and~\cite{caillaud2003chicken}, platform design and the function of general intermediary service providers for such markets is studied; in~\cite{gomes2014optimal}, advertising auctions are studied; in the context of ride--sharing such as those in Uber and Lyft, the efficiency of matching in~\cite{chen2016dynamic} and optimal pricing in~\cite{banerjee2015pricing} are studied. An extensive survey on online matching, in the context of ad allocation, can be found in~\cite{mehta2013online}. These paper generally focus on the tradeoff between inducing participation and extracting rent from both sides. Intrinsic to such models is the assumption that the value of the goods or service being sold is known partially or in expectation. This is the key issue in applying these platform designs for a data marketplace; as stated earlier, it is unrealistic for a buyer to know the value of the various data streams being sold a priori (recall the inventory example in Section \ref{sec:motivating_retail_example} in which a logistic company cannot realistically make accurate bids on separate data streams or bundles of data streams). Secondly these prior works do no take into account the freely replicable, combinatorial nature of a good such as data.

\medskip 
\noindent \textbf{Online Ad Auctions}. See \cite{varian2009online} for a detailed overview. There are two key issues with online ad markets that make it infeasible for data. Firstly, ad-space is not a replicable good i.e., for any particular user on an online platform, at any instant in time, only a single ad can be shown in an ad-space. Thus an \textit{online ad market does not need to do any ``price discovery"} - it simply allocates the ad-space to the highest bidder; to ensure truthfulness, the highest bidder pays the second highest bid i.e., the celebrated second price auction (and variants thereof). In contrast, for a freely replicable good such as data, a second price auction does not suffice (unless a seller artificially restricts a dataset to be replicated a fixed number of times, which may be suboptimal for maximizing revenue). Secondly, buyers of online ad-space have a strong prior on the value of a particular ad-space - for example, a pharmaceutical company has access to historical click-through rates (CTR) for when a user searches for the word ``cancer ". So it is possible for firms to make separate bids for different ad-spaces based on access to past performance information such as CTR (cf. \cite{liu2006designing, Zhang:2014:ORB:2623330.2623633}). In contrast, since prediction tasks vary so greatly, past success of a specific training feature on sale has little meaning for a firm trying to source training data for its highly specific ML task; again, making it is infeasible for a firm to make bids on specific datasets as they have no prior on its usefulness. 

\medskip
\noindent \textbf{Prediction Markets}. Such markets are a recent phenomenon and have generated a lot of interest, rightly so. See \cite{wolfers2004prediction} for a detailed overview. Typically in such markets, there is a discrete random variable, $W$, that a firm wants to accurately predict. The market executes as follows: (i) ``experts" sell probability distributions $\Delta_W$ i.e., predictions on the chance of each outcome; (ii) the true outcome, $w$, is observed; (iii) the market pays the ``experts" based on $\Delta_W$ and $w$. In such literature, payment functions based on the Kullback--Leibler divergence are commonly utilized, as they incentivize ``experts" to be truthful (cf. \cite{hanson2012logarithmic}). Despite similarities, prediction markets remain infeasible for data as ``experts" have to explicitly choose which tasks to make specific predictions for. In contrast, it is not known a priori whether a particular dataset has any importance for a prediction task; in the inventory optimization example in Section \ref{sec:motivating_retail_example}, retail stores selling foot-traffic data cannot realistically know which logistics company's demand forecast their data stream will be predictive for (again, this is only exacerbated when taking into account the overlap of information between features). A data market must instead provide a real-time mechanism to match training features to prediction tasks based on the increase in predictive value from the allocated features.

\medskip
\noindent \textbf{Information Economics}. There has been an exciting recent line of work that directly tackles data as an economic good which we believe to be complimentary to our work. We divide them into three major buckets and highlight some representative papers: (i) data sellers have detailed knowledge of the specific prediction task and incentives to exert effort to collect high-quality data (e.g. reduce variance) are modeled~\cite{optmechsellinfo, accuracyforsale}; (ii) data sellers have different valuations for privacy and mechanisms that tradeoff privacy loss vs. revenue gain are modeled~\cite{sellingprivacyauction, surveyprivacycost}; (iii) studying the profitability of data intermediaries who supply consumer data to firms that want to sell the very same customers more targeted goods ~\cite{designpriceinfo}. These are all extremely important lines of work to pursue, but they focus on different (but complementary) objectives. 

Referring back to the inventory optimization example in Section \ref{sec:motivating_retail_example}), we model the sellers (retail stores) as simply trying to maximize revenue by selling foot-traffic data they already collect. Hence we assume they have {\em (i) no ability to fundamentally increase the quality of their data stream; (ii) no knowledge of the prediction task; (iii) no concerns for privacy}. In many practical commercial settings, these assumptions do suffice as the data is sufficiently anonymized, and these sellers are trying to monetize data they are already implicitly collecting through their operations. We focus our work on such a setting, where firms are trying to buy features to feed their ML models, and believe our formulation to be the most relevant for it. It would be interesting future work to find ways of incorporating privacy, feedback and the cost of data acquisition into our model.

	% MODEL
	\section{The Model - Participants and Dynamics} 
	\label{sec:model}
	%\subsection{Notations and Definitions}

%\subsection{Market Participants}\label{sec:market_participants}
%%%%%%%%%%%%% SELLER DEFINITION %%%%%%%%%%%%%
\subsection{Sellers}\label{sec:seller_definition} Let there be $M$ sellers, each supplying data streams in this marketplace. We formally parameterize a seller through the following single quantity: 

\medskip
% Definition of Feature being sold
%{\color{red}
%\noindent \textbf{Feature.} $X_j \in \mathbb{R}^T, \  j \in [M]$ is a single univariate time series (i.e., a single feature) over $T$ time steps. 
%For simplicity, we associate with each seller a single feature and thus restrict $X_j$ to be in $\mathbb{R}^T$. Our model is naturally extended to the case where sellers are selling multiple streams of data by considering each stream as another ``seller" in the marketplace. We refer to the matrix denoting any subset of features as $\bX_S$, $S \subset [M]$. Recall from the motivations we provide in Sections \ref{sec:motivating_retail_example} and \ref{sec:appendix_literature_review} for our model, we assume data sellers do not have the ability to change the quality of the data stream (e.g. reducing variance) they supply into the market nor any concerns for privacy (we assume data is sufficiently anonymized as is common in many commercial settings). Additionally sellers have no knowledge of the prediction tasks their data will be used for and simply aim to maximize revenue from the datasets that they already have on hand. 
%}

\noindent \textbf{Feature.} $X_j \in \mathbb{R}^T, \  j \in [M]$ is a vector of length $T$. 

For simplicity, we associate with each seller a single feature and thus restrict $X_j$ to be in $\mathbb{R}^T$. Our model is naturally extended to the case where sellers are selling multiple streams of data by considering each stream as another ``seller" in the marketplace. We refer to the matrix denoting any subset of features as $\bX_S$, $S \subset [M]$. Recall from the motivations we provide in Sections \ref{sec:motivating_retail_example} and \ref{sec:appendix_literature_review} for our model, we assume data sellers do not have the ability to change the quality of the data stream (e.g. reducing variance) they supply into the market nor any concerns for privacy (we assume data is sufficiently anonymized as is common in many commercial settings). Additionally sellers have no knowledge of the prediction tasks their data will be used for and simply aim to maximize revenue from the datasets that they already have on hand.

%%%%%%%%%%%%% BUYER DEFINITION %%%%%%%%%%%%%
\subsection{Buyers}\label{sec:buyer_definition}
Let there be $N$ buyers in the market, each trying to purchase the best collection of datasets they can afford in this marketplace for a particular prediction task. We formally parameterize a buyer through the following set of quantities, for $n \in [N]$: 

\medskip
% Definition of Prediction Data
%{\color{red}
%\noindent \noindent \textbf{Prediction Task.} $Y_n  \in \mathbb{R}^T$ is a time series over $T$ time steps that Buyer $n$ wants to predict well. 
%
%To avoid confusion, we clarify what we mean by a ``time series" for both $Y_n$ and $X_j$ using the inventory optimization example in Section \ref{sec:motivating_retail_example} - The historical inventory demand over time for the logistics company is the ``time series" $Y_n$. Similarly the time stamped foot-traffic data sold by retailers is referred to as the `time series" of features, $X_j, j \in [M]$. Hence each scalar in $Y_n$ and $X_j$ has an associated time stamp and the ``prediction task" in this example would be to forecast inventory demand from time-lagged foot traffic data. Our aim is to keep the mechanism as agnostic as possible to the particulars of the prediction task (i.e., data sellers having no knowledge of the prediction task) and so we think of time stamps as the most natural way to connect features $X_j$ to the labels $Y_n$. 
%}

\noindent \noindent \textbf{Prediction Task.} $Y_n  \in \mathbb{R}^T$ is a vector of $T$ labels that Buyer $n$ wants to predict well
\footnote{To reduce notational overload, we abstract away the partition of $Y_n$ into training and test data.}.

We provide a clarifying example of $Y_n$ and $X_j$, using the inventory optimization example in Section \ref{sec:motivating_retail_example}. There, the historical inventory demand for the logistics company is $Y_n$ and each historical foot-traffic data stream sold by retailers is $X_j$. The ``prediction task" is then to forecast inventory demand, $Y_n$, from time-lagged foot traffic data, $X_j$ for $j \in [M]$.

\medskip
% Definition of Gain Function
\noindent \textbf{Prediction Gain Function.} $\mathcal{G}_n : \mathbb{R}^{2T} \to [0, 1]$, the prediction gain function, takes as inputs the prediction task $Y_n$ and an estimate $\hat{Y}_n$, and outputs the quality of the prediction. 

For regression, an example of $\mathcal{G}_n$ is $1 - \text{RMSE}$ \footnote{$\text{RMSE} = \frac{1}{Y_{\max} - Y_{\min}}\sqrt{\sum_{i=1}^T (\hat{Y_i} - Y_i)^2 / T
}$, where: (i) $\hat{Y_i}$ is the predicted value for $i \in [T]$ produced by the machine learning algorithm, $\mathcal{M}$, (ii) $Y_{\max},  Y_{\min}$ are the max and min of $Y_n$ respectively.} (root-mean-squared-error). For classification, an example of $\mathcal{G}_n$ is Accuracy \footnote{$\text{Accuracy} = \frac{1}{T}\sum_{i=1}^T \mathbbm{1}({\hat{Y_i} = Y_i})$, with $\hat{Y_i}$ defined similarly to that above}. In short, a larger value for $\mathcal{G}_n$ implies better prediction accuracy. To simplify the exposition (and without any loss of generality of the model), we assume that all buyers use the same gain function i.e., $\mathcal{G} = \mathcal{G}_n$ for all $n$.

\medskip
% Definition of Private Data Valuation
\noindent \textbf{Value of Accuracy.} $\mu_n  \in \mathbb{R}_{+}$ is how much Buyer $n$ values a \textit{marginal} increase in accuracy. 

As an illustration, recall the inventory optimization example in Section \ref{sec:motivating_retail_example} where a logistics company makes a bid of the form, ``willing to pay \$1000 for a percentage increase in demand forecasting accuracy from the previous week". We then have the following definition for how a buyer values an increase in accuracy, 

\begin{definition}\label{def:value_from_prediction}
Let $\mathcal{G}$ be the prediction gain function. We define the value Buyer $n$ gets from estimate $\hat{Y}_n$ as: 
\[
\mu_n \cdot \mathcal{G}(Y_n, \hat{Y}_n)
\]
i.e., $\mu_n$ is what a buyer is willing to pay for a unit increase in $\mathcal{G}$.
\end{definition}

\begin{remark}
Though a seemingly natural definition, we view it as one of the key modeling decisions we make in our design of a data marketplace. In particular, a buyer's valuation for data does not come from specific datasets, but rather from an increase in prediction accuracy of a quantity of interest. 
\end{remark}

\begin{remark}
A potential source of confusion is we require $\mu_n$ to be linear while many ML error metrics are non-linear. For example, in balanced, binary classification problems, randomly guessing labels has expected accuracy of $50\%$, which has zero value (and not $\mu_n / 2$). However, such non-linearities can easily be captured in the gain function, $\mathcal{G}$. For the balanced, binary classification problem, $\mathcal{G}$ can easily be normalized such that $50\%$-accuracy has value $0$ and $100\%$-accuracy has value 1 (specifically, $G=\frac{max(0, \widehat{\text{accuracy}})}{0.5}$). $\mu_n$ can thus be thought of as a buyer-specific scaling of how much they value an increase in accuracy. Indeed, linear utility models are standard in information economics (c.f. ~\cite{designpriceinfo}).
\end{remark}

\begin{remark}
Recall that to reduce notational overload, we let $Y_n$ refer to both test and train data. Specifically, $Y_n=(Y_n^{train},Y_n^{test})$. The ML-algorithm accesses $Y_n^{train}$ and the Gain function, $\mathcal{G}$ accesses $Y_n^{test}$ (i.e. $\hat{Y}_n$), as is standard in ML workflows.
\end{remark}

\smallskip
% Definition of Public Data Valuation Bid
\noindent \textbf{Public Bid Supplied to Market.} $b_n \in \mathbb{R}_{+}$ is the public bid supplied to the marketplace. 

Note that $\mu_n$ is a private valuation. If Buyer $n$ is strategic, $\mu_n$ is not necessarily what is revealed to the marketplace. Thus we define $b_n$, which refers to the actual bid supplied to the marketplace (not necessarily equal to $\mu_n$).

%%%%%%%%%%%%% MARKETPLACE DEFINITION %%%%%%%%%%%%%
\subsection{Marketplace}\label{sec:marketplace_definition} 

The function of the marketplace is to match buyers and sellers as defined above. As we make precise in Section \ref{sec:market_dynamics}, we model the $M$ sellers as fixed and the $N$ buyers as coming one at a time. We formally parameterize a marketplace through the following set of quantities, for $n \in [N]$: 

\medskip
% Definition of price for each feature
\noindent \textbf{Price.} $p_n  \in \mathbb{R}_{+}$ is the price the marketplace sets for the features on sale when Buyer $n$ arrives.

As we make precise in Property \ref{prop:rev_maximizing}, we measure the quality of the prices $(p_1, \dots, p_N)$ set by the marketplace for each buyer by comparing against the optimal fixed price in hindsight (i.e., standard definition of regret). However, it is well-known that standard price update algorithms for combinatorial goods, which satisfy Property \ref{prop:rev_maximizing}, scale very poorly in $M$ (cf. ~\cite{regrethardenvyeasy}). Specifically, if we maintain separate prices for every data stream (i.e., if $p_n \in \mathbb{R}^M_{+}$)  it is easily seen that regret-minimizing algorithms such as Multiplicative Weights (cf. \cite{mw}) or Upper Confidence Bandits (cf. \cite{bandits}), will have exponential running time or exponentially loose guarantees (in $M$) respectively. In fact from~\cite{regrethardenvyeasy}, we know regret minimizing algorithms for even very simple non-additive buyer valuations are provably computationally intractable.

Thus to achieve a zero-regret price update algorithm, without making additional restrictive assumptions, we restrict $p_n$ to be a scalar rather than a $M$-dimensional vector. This is justified due to Definition \ref{def:value_from_prediction}, where we model a buyer's ``value for accuracy" (and the associated public bid) through the scalar, $\mu_n$ (and the scalar $b_n$ respectively). This allows the marketplace to control the quality of the predictions based on the difference between $p_n$ and $b_n$ (see Section \ref{sec:allocation_revenue_func} for details).

\medskip
% Definition of Machine Learning Algorithm
\noindent \textbf{Machine Learning/Prediction Algorithm.} $\mathcal{M} : \mathbb{R}^{MT} \to \mathbb{R}^{T}$,  the learning algorithm utilized by the marketplace, takes as input the features on sale $\bX_M$, and produces an estimate $\hat{Y}_n$ of Buyer $n$'s prediction problem $Y_n$. 

$\mathcal{M}$ does not necessarily have to be supplied by the marketplace and is a simplifying assumption. Instead buyers could provide their own learning algorithm that they intend to use, or point towards one of the many excellent standard open-source libraries widely used such as SparkML, Tensorflow and Scikit-Learn (cf. \cite{MlLib2015, PedregosaSciKit2011, TensonFlow2016}) 
\footnote{Indeed a key trend in many business use cases, is that the ML algorithms used are simply lifted from standard open-source libraries. Thus the accuracy of predictions is primarily a function of the quality of the data fed to these ML algorithms.}. 

\medskip
% Definition of Allocation Function 
\noindent \textbf{Allocation Function.} $\mathcal{AF} : (p_n, b_n; \bX_M) \to \wbX_M, \wbX_M \in \mathbb{R}^M$, takes as input the current price $p_n$ and the bid $b_n$ received, to decide the quality at which Buyer $n$ gets allocated the features on sale $\bX_M$ (e.g. by adding noise or subsampling the features). 

In Section \ref{sec:allocation_revenue_func}, we provide explicit instantiations of $\mathcal{AF}$ and detailed reasoning for why we choose this particular class of allocation functions.

\medskip
% Definition of Revenue Function
\noindent \textbf{Revenue Function.} $\mathcal{RF} : (p_n, b_n, Y_n; \mathcal{M}, \mathcal{G}, \bX_M) \to r_n, \ r_n \in \mathbb{R}_{+}$, the revenue function, takes as input the current price $p_n$, in addition to the bid and the prediction task provided by the buyer ($b_n$ and $Y_n$ respectively), to decide how much revenue $r_n$ to extract from the buyer.

\medskip
% Definition of Payment-Division Function
\noindent \textbf{Payment Division Function.} $\mathcal{PD} : (Y_n, \wbX_M; \mathcal{M}, \mathcal{G}) \to  \psi_n, \ \psi_n \in  [0, 1]^{M}$,  the payment-division function, takes as input the prediction task $Y_n$ along with the features that were allocated $\wbX_M$, to compute $\psi_n$, a vector denoting the marginal value of each allocated feature for the prediction task. 

\medskip
% Definition of Cost Update Function
\noindent \textbf{Price Update Function.} $\mathcal{PF} : (p_n, b_n, Y_n; \mathcal{M}, \mathcal{G}, \bX_M) \to  p_{n+1}, \ p_{n+1} \in  \mathbb{R}_{+}$, the price-update function, takes as input the current price $p_n$, in addition to the bid and the prediction task provided by the buyer ($b_n$ and $Y_n$ respectively) to update the price for Buyer $n+1$. 

%%%%%%%%%%%%% BUYER UTILITY %%%%%%%%%%%%%
\subsubsection{Buyer Utility}\label{sec:buyer_utility}
We can now precisely define the utility function, $\mathcal{U}: \mathbb{R}_{+}  \times \mathbb{R}^{T}\to \mathbb{R}$, each buyer is trying to maximize,
\begin{definition}\label{def:buyer_utility}
The utility Buyer $n$ receives by bidding $b_n$ for prediction task $Y_n$ is given by 
\begin{align}\label{eq:buyer_utility}
\mathcal{U}(b_n, Y_n) \coloneqq \mu_n \cdot \mathcal{G}(Y_n, \hat{Y}_n) - \mathcal{RF}(p_n, b_n, Y_n)
\end{align}
where $\hat{Y}_n = \mathcal{M}(Y_n, \wbX_M)$ and $\wbX_M = \mathcal{AF}(p_n, b_n; \bX_M)$.
\end{definition}
In words, the first term on the right hand side (r.h.s) of \eqref{eq:buyer_utility} is the value derived from a gain in prediction accuracy (as in Definition \ref{def:value_from_prediction}). Note this is a function of the quality of the features that were allocated based on the bid $b_n$. The second term on the r.h.s of \eqref{eq:buyer_utility} is the amount the buyer pays, $r_n$. Buyer utility as in Definition \ref{def:buyer_utility} is simply the difference between these two terms.

%%%%%%%%%%%%% MARKETPLACE DYNAMICS %%%%%%%%%%%%%
\subsection{Marketplace Dynamics}\label{sec:market_dynamics}
We can now formally define the per-step dynamic within the marketplace (see Figure \ref{fig:market_architecture} for a graphical overview). Note this is a formalization of the steps laid out in the inventory optimization example in Section \ref{sec:motivating_retail_example}. When Buyer $n$ arrives, the following steps occur in sequence (we assume $p_0, b_0, Y_0$ are initialized randomly): 

\centerline{\rule{\linewidth}{.2pt}}
For $n \in [N]$:
\begin{enumerate}
	\item Market sets price $p_n$, where $p_n = \mathcal{PF}(p_{n-1}, b_{n-1}, Y_{n-1})$
	\item Buyer $n$ arrives with prediction task $Y_n$
	\item Buyer $n$ bids $b_n$ where $b_n = \argmax_{z \in \mathbb{R}_{+}} \mathcal{U}(z, Y_n)$
	\item Market allocates features $\wbX_M$ to Buyer $n$ , where $\wbX_M = \mathcal{AF}(p_n, b_n; \bX_M)$
	\item Buyer $n$ achieves $\mathcal{G}\Big(Y_n, \mathcal{M}(\wbX_M)\Big)$ gain in prediction accuracy
	\item Market extracts revenue, $r_n$, from Buyer $n$, where $r_n = \mathcal{RF}(p_n, b_n, Y_n; \mathcal{M}, \mathcal{G})$ 
	\item Market divides $r_n$ amongst allocated features using $\psi_n$, where $\psi_n = \mathcal{PD}(Y_n, \wbX_M; \mathcal{M}, \mathcal{G})$
\end{enumerate} 
\centerline{\rule{\linewidth}{.2pt}} 

\begin{figure}[]
	\centering
	\includegraphics[width=0.7\textwidth]{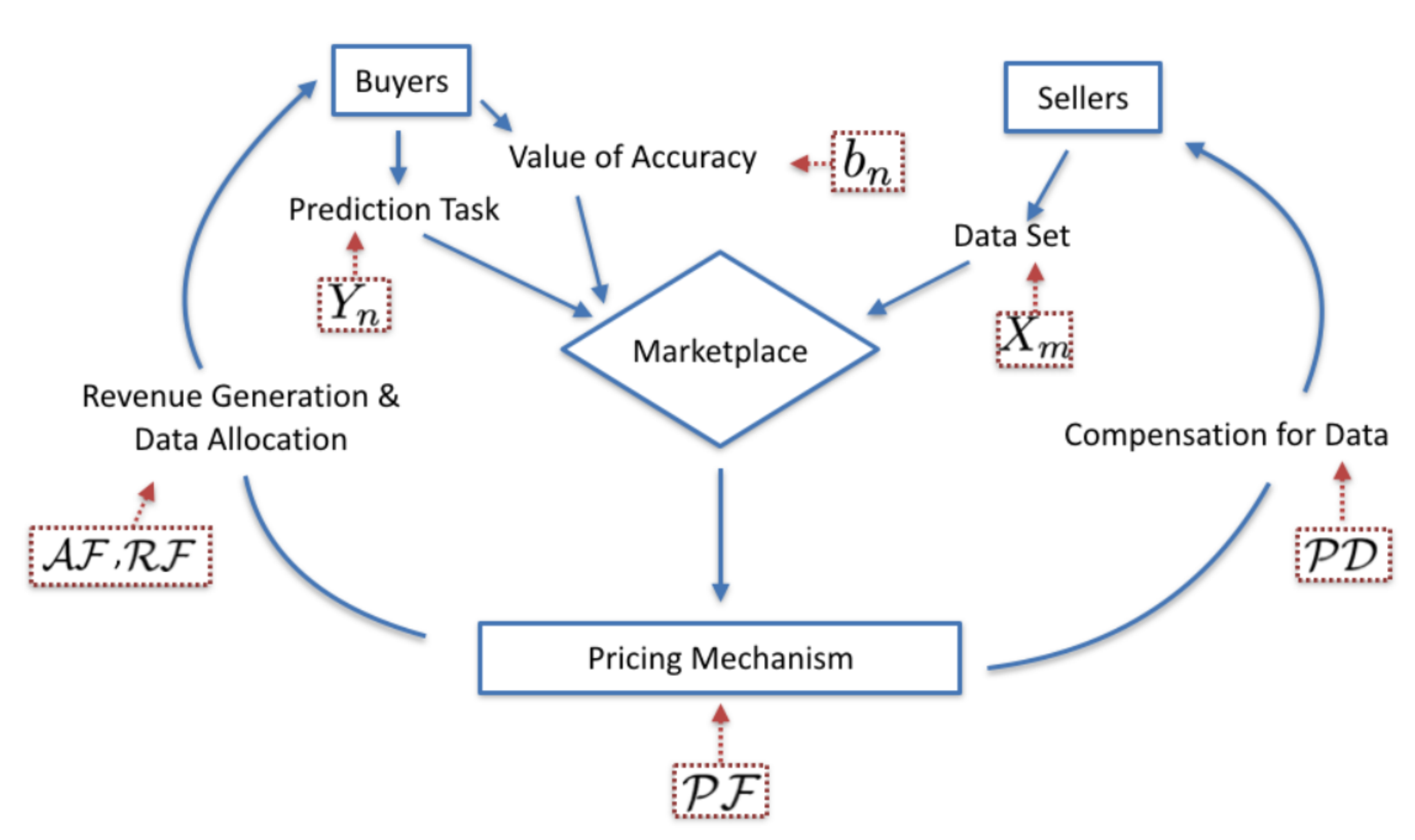}
	\caption{Overview of marketplace dynamics.}
	\label{fig:market_architecture}
\end{figure}

\begin{remark} \label{remark:data_is_reusable}
A particularly important (albeit implicit) benefit of the above proposed architecture is that the buyer's do not ever access the underlying features. Rather they only receive predictions through the ML model trained on the allocated features. This circumvents a known, difficult problem in designing data markets where sellers are reluctant to release potentially valuable data streams as they do not have control over who subsequently accesses it (since data streams are freely replicable).
\end{remark}

\begin{remark}\label{remark:central_price_setting}
In our proposed architecture, the price for each buyer is set centrally by the marketplace rather than by the sellers individually. A sellers simply supplies data streams to the marketplace and is assigned revenue based on the marginal contribution the data stream provides to the prediction task. Thus from the perspective of price setting, our model can equivalently be thought of as a single seller supplying multiple data streams to the market and adjusting $p_n$ to maximize overall revenue.
\end{remark}

\begin{remark} \label{remark:myopic_buyer}
We note from the dynamics laid out above (specifically Step 3), a buyer is ``myopic" over a single-stage i.e., Buyer $n$ comes into the market once and leaves after being provided the estimate $\hat{Y}_n$. Thus Buyer $n$ maximizing utility only over Step $n$. In particular, we do not study the additional complication if the buyer's utility is defined over multiple-stages.
\end{remark}

\begin{remark} \label{remark:externality}
Our proposed architecture does not take into account an important attribute of data; a firm's utility for a particular dataset may be heavily dependent on what other firms get access to it (e.g. a hedge fund might pay a premium to have a particularly predictive dataset only go to it). By modeling buyer's coming to the market one at a time, we do not study the externalities associated with a dataset being replicated multiple times.    
\end{remark}

	% DESIRABLE MARKET PROPERTIES
	\section{Desirable Properties of Marketplace}
	\label{sec:marketplace_properties}
	%%%%%%%%%%%%% MARKETPLACE PROPERTIES %%%%%%%%%%%%%
We define key properties for such a marketplace to robustly function in a large-scale, real-time setting, where buyers are arriving in quick succession and need to be matched with a large number of data sellers within minutes, if not quicker. Intuitively we require the following properties: (i) buyers are truthful in their bids; (ii) overall revenue is maximized; (iii) revenue is fairly divided amongst sellers; (iv) marketplace runs efficiently.  In Sections \ref{sec:property_truthful}-\ref{sec:comp_efficiency}, we formally define these properties. 

% Definition of Truthfulness
\subsection{Truthfulness} \label{sec:property_truthful}
\begin{property} [Truthful] \label{prop:truthful}
A marketplace is ``truthful" if  for all $Y_n$, 
\[
\mu_n = \argmax_{z \in \mathbb{R}_{+}} \mathcal{U}(z, Y_n)
\]
where $\mathcal{U}(z, Y_n)$ is defined as in Definition \ref{def:buyer_utility}.
\end{property}

Property \ref{prop:truthful} requires that the allocation function, $\mathcal{AF}$, and the revenue function, $\mathcal{RF}$, incentivize buyers to bid their true valuation for an increase in prediction accuracy. Note that we assume buyers do not alter their prediction task, $Y_n$.

% Definition of Revenue Maximization
\subsection{Revenue Maximization}\label{sec:rev_maximizaton_definition}
\begin{property} [Revenue Maximizing] \label{prop:rev_maximizing}
Let $\{(\mu_1, b_1, Y_1), (\mu_2, b_2, Y_2), \dots, (\mu_N, b_N, Y_N)\}$ be a sequence of buyers entering the market. A marketplace is ``revenue maximizing" if the price-update function, $\mathcal{PF}(\cdot)$, produces a sequence of prices, $\{p_1, p_2, \dots, p_n \}$, such that the ``worst-case"  average regret, relative to the optimal  price $p^*$ in hindsight, goes to $0$, i.e.,
\[
\lim_{N \to \infty} \dfrac{1}{N} \Big[ \sup_{\{(b_n, Y_n): n\in [N]\}} \Big( \sup_{p^* \in \mathbb{R}_{+}} \sum_{n=1}^{N} \mathcal{RF}(p^*, b_n, Y_n) - \sum_{n=1}^{N} \mathcal{RF}(p_n, b_n, Y_n) \Big) \Big].
\] 
%= \lim_{N \to \infty} \dfrac{1}{N} \mathcal{R}(N, M) = 0.
\end{property}
As is convention, we term the expression with the square bracket as regret and denote it, $\mathcal{R}(N, M)$. Property \ref{prop:rev_maximizing} is the standard worst-case regret guarantee (cf. \cite{zinkevich03}). It necessitates the price-update function, $\mathcal{PF}$, produce a sequence of prices $p_n$ such that the average difference with the unknown optimal price in hindsight, $p^*$ goes to zero as $N$ increases. Note Property \ref{prop:rev_maximizing} must hold over the worst case sequence of buyers i.e, no distributional assumptions on $\mu_n, b_n, Y_n$ are made. 

 % Definition of Fairness in Revenue Division
\subsection{Revenue Division}
In the following section, we abuse notation and let $S \subset [M]$ refer to both the index of the training features on sale and to the actual features, $\bX_S$ themselves.
% Definition of Shapley Fairness
\subsubsection{Shapley Fairness}
\begin{property} [Shapley Fair] \label{prop:fair} 
A marketplace is ``Shapley-fair" if $\ \forall \ n \in [N], \forall \ Y_n$, the following holds on $\mathcal{PD}$ (and its output, $\psi_n$): 

\begin{enumerate}
\item \textbf{Balance}: $\sum_{m=1}^{M} \psi_n(m) = 1$
\item \textbf{Symmetry}: $\forall \ m, m' \in [M], \forall S \subset [M] \setminus \{m, m'\}$,  if $\mathcal{PD}(S \cup m, Y_n) = \mathcal{PD}(S \cup m',  Y_n)$, then $\psi_n(m) = \psi_n(m')$
\item \textbf{Zero Element}: $\forall \ m \in [M], \forall S \subset [M]$, if $\mathcal{PD}(S \cup m,  Y_n) = \mathcal{PD}(S,  Y_n)$, then $\psi_n(m) = 0$
\item \textbf{Additivity}:  Let the output of $\mathcal{PD}([M], Y^{(1)}_n), \mathcal{PD}([M], Y^{(2)}_n)$ be $\psi_n^{(1)}, \psi_n^{(2)}$ respectively. Let $\psi_n'$ be the output of $\mathcal{PD}([M], Y^{(1)}_n + Y^{(2)}_n)$. Then $\psi_n' = \psi_n^{(1)} + \psi_n^{(2)}$. 
\end{enumerate}
\end{property}

The conditions of Property \ref{prop:fair}, first laid out in \cite{shapley_allocation}, are considered the standard axioms of fairness. We choose them as they are the de facto method to assess the marginal value of goods (i.e., features in our setting) in a cooperative game (i.e., prediction task in our setting). 

\begin{remark}
A naive definition of the marginal value of feature $m$ would be a ``leave-one-out" policy, i.e., $\psi_n(m) = \mathcal{G}(Y_n, \mathcal{M}(\wbX_{[M]})) - \mathcal{G}(Y_n, \mathcal{M}(\wbX_{[M] \setminus {m}}))$. As the following toy example shows, the correlation between features would lead to the market ``undervaluing" each feature. Consider the simple case where there are two sellers each selling identical features. It is easy to see the ``leave-one-out" policy above would lead to zero value being allocated to each feature, even though they collectively might have great predictive value. This is clearly undesirable. That is why Property \ref{prop:fair} is a necessary notion of fairness as it takes into account the overlap of information that will invariably occur between the different features, $X_j$.
\end{remark}
We then have the following celebrated theorem from \cite{shapley_allocation},
\begin{thm}[Shapley Allocation]\label{thm:shapley_algorithm}
Let $\psi_{\text{shapley}} \in [0, 1]^{[M]}$ be the output of the following algorithm,
\begin{align}\label{eq:shapley_mechanism}
\psi_{\text{shapley}}(m) = \sum_{T \subset [M] \setminus \{m\}} \frac{|T|!(M - |T| - 1)!}{M!} \bigg( \mathcal{G}\Big(Y_n, \mathcal{M}(\wbX_{T \cup m})\Big) - \mathcal{G}\Big(Y_n, \mathcal{M}(\wbX_{T}) \Big) \bigg)
\end{align}
Then $\psi_{\text{shapley}}$ is the unique allocation that satisfies all conditions of Property \ref{prop:fair}
\end{thm}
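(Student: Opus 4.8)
Identify a ``prediction task + ML algorithm'' with the cooperative game on $[M]$ whose characteristic function is $v(T) := \mathcal{G}\big(Y_n, \mathcal{M}(\wbX_T)\big)$, with the normalization $v(\emptyset) = 0$, and read the hypotheses of the Symmetry and Zero Element conditions of Property \ref{prop:fair} as conditions on this $v$ (e.g. ``$v(S\cup m) = v(S\cup m')$''). The statement is then the classical Shapley uniqueness theorem, and the plan is to follow Shapley's two-step argument: (i) verify that $\psi_{\text{shapley}}$ of \eqref{eq:shapley_mechanism} satisfies all four conditions; (ii) show that any rule satisfying the four conditions must coincide with it.

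For (i): Symmetry and Zero Element are read off the formula. If $v(S\cup m) = v(S\cup m')$ for every $S \subset [M]\setminus\{m,m'\}$, the relabelling $T \mapsto (T\setminus\{m'\})\cup\{m\}$ is a coefficient-preserving bijection carrying the sum defining $\psi_{\text{shapley}}(m')$ onto the one defining $\psi_{\text{shapley}}(m)$; if $v(T\cup m) - v(T) = 0$ for all $T$, every summand of $\psi_{\text{shapley}}(m)$ vanishes. Additivity holds because the right-hand side of \eqref{eq:shapley_mechanism} is a linear functional of the set map $T \mapsto v(T)$, which the axiom posits to add under $Y^{(1)}_n + Y^{(2)}_n$. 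For Balance, note that $\frac{|T|!(M-|T|-1)!}{M!}$ is exactly the probability that, in a uniformly random ordering of $[M]$, the predecessors of $m$ are precisely $T$; hence $\psi_{\text{shapley}}(m) = \mathbb{E}_\pi\big[v(P^\pi_m\cup m) - v(P^\pi_m)\big]$ with $P^\pi_m$ the predecessor set of $m$, and summing over $m$ telescopes each ordering to $v([M]) - v(\emptyset)$, which is $1$ under the stated normalization (equivalently, Balance is to be read with right-hand side $v([M]) - v(\emptyset)$).

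For (ii): use the unanimity-game basis. For each nonempty $R \subseteq [M]$ set $u_R(T) = \mathbbm{1}\{R \subseteq T\}$; the $\{u_R\}$ are linearly independent and span the $(2^M-1)$-dimensional space of set functions vanishing at $\emptyset$, so every $v$ has a unique expansion $v = \sum_{\emptyset \ne R \subseteq [M]} c_R u_R$ with dividends $c_R = \sum_{T\subseteq R}(-1)^{|R|-|T|}v(T)$. Let $\widehat\psi$ satisfy the four conditions. On a scaled unanimity game $c\,u_R$ with $c \ge 0$, every $m \notin R$ is a Zero Element and any $m,m' \in R$ are Symmetric, so Balance forces $\widehat\psi(m) = c/|R|$ for $m\in R$ and $0$ otherwise. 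Extending Additivity to finite sums by induction, and rearranging $v = \sum_R c_R u_R$ into an equality between two games each a \emph{nonnegative} combination of unanimity games, then determines $\widehat\psi(v)$ uniquely from the numbers $\{c_R/|R|\}$. By (i), $\psi_{\text{shapley}}$ is such a rule, so $\widehat\psi = \psi_{\text{shapley}}$; re-substituting the dividends recovers the closed form \eqref{eq:shapley_mechanism}.

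The verifications in (i) are routine bookkeeping; the crux is (ii), and there the delicate point is unanimity games with \emph{negative} dividends: Additivity as stated only combines two games, so I must (a) extend it to finite sums and (b) move the negative-coefficient terms across the equation so that only nonnegative combinations of unanimity games appear, which permits the identity $\widehat\psi(c_R u_R) = c_R/|R|$ — established via Symmetry, Zero Element, and Balance only for $c_R \ge 0$ — to be applied on both sides. A secondary point to settle at the outset is the normalization that makes the Balance constant equal $1$, since the non-homogeneous form ``$\sum_m \psi_n(m) = 1$'' interacts awkwardly with the scaling used in (ii) unless read as $\sum_m \psi_n(m) = v([M]) - v(\emptyset)$.
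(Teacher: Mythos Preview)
The paper does not prove this statement at all; it simply quotes it as ``the following celebrated theorem from \cite{shapley_allocation}'' and uses it as a black box. Your proposal is a correct outline of the classical Shapley uniqueness argument (verification that the formula satisfies the four axioms, followed by the unanimity-game decomposition to force uniqueness), and you have correctly flagged the two genuine subtleties that arise when transplanting that argument into this paper's axiomatization: handling negative Harsanyi dividends under an Additivity axiom stated only for pairs, and the fact that Balance is written here as $\sum_m \psi_n(m) = 1$ rather than $= v([M]) - v(\emptyset)$, which must be read as a normalization convention for the argument to go through. There is thus nothing in the paper to compare your approach against; what you have written is essentially the proof the paper is citing.
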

Intuitively, this algorithm is computing the average marginal value of feature $m$ over all subsets $T \subset [M] \setminus \{m\}$. It is easily seen that the running time of this algorithm is $\Theta(2^M)$, which makes it infeasible at scale if implemented as is. But it still serves as a useful standard to compare against.

% Definition of Robustness to replication in Revenue Division
\subsubsection{Robustness to Replication}\label{sec:replication_robustness}
\begin{property} [Robustness to replication] \label{prop:replication_robustness} 
For all $m \in [M]$, let $m^+_i$ refer to the  $i^{th}$ replicated copy of $m$ i.e., $X^+_{m, i} = X_m$. Let $[M]^+ = \cup_m (m \cup_i m^+_i)$ refer to the set of original and replicated features.  Let $\psi^+_{n} = \mathcal{PD}([M]^+, Y_n)$.
Then a marketplace is $\epsilon$-``robust-to-replication" if $\ \forall \ n \in [N], \forall \ Y_n$, the following holds on $\mathcal{PD}$: 
\[
\psi^+_{n} (m) + \sum_{i}\psi^+_{n} (m^+_i) \le \psi_n(m) + \epsilon.
\]
\end{property}

Property \ref{prop:replication_robustness} is a novel notion of fairness, which can be considered a necessary additional requirement to the Shapley notions of fairness for freely replicable goods. We use Example~\ref{example:shapley_not_robust} below to elucidate how adverse replication of data can lead to grossly undesirable revenue divisions (see Figure \ref{fig:shapley_inadequate} for a graphical illustration). Note that implicit in the definition of Property \ref{prop:replication_robustness} is that the ``strategy-space" of the data sellers is the number of times they replicate their data.

\begin{example}\label{example:shapley_not_robust}
Consider a simple setting where the marketplace consists of only two sellers, A and B, each selling one feature which are both identical. By Property \ref{prop:fair}, the Shapley value of A and B are equal, i.e., $\psi(A) = \frac{1}{2},  \psi(B) = \frac{1}{2}$. However if seller A replicated her feature once and sold it again in the marketplace, it is easy to see that the new Shapley allocation will be $\psi(A) = \frac{2}{3},  \psi(B) = \frac{1}{3}$. Hence it is not robust to replication since the aggregate payment remains the same (no change in accuracy).
\end{example}

\begin{figure}[]
	\centering
	\includegraphics[width=0.45\textwidth]{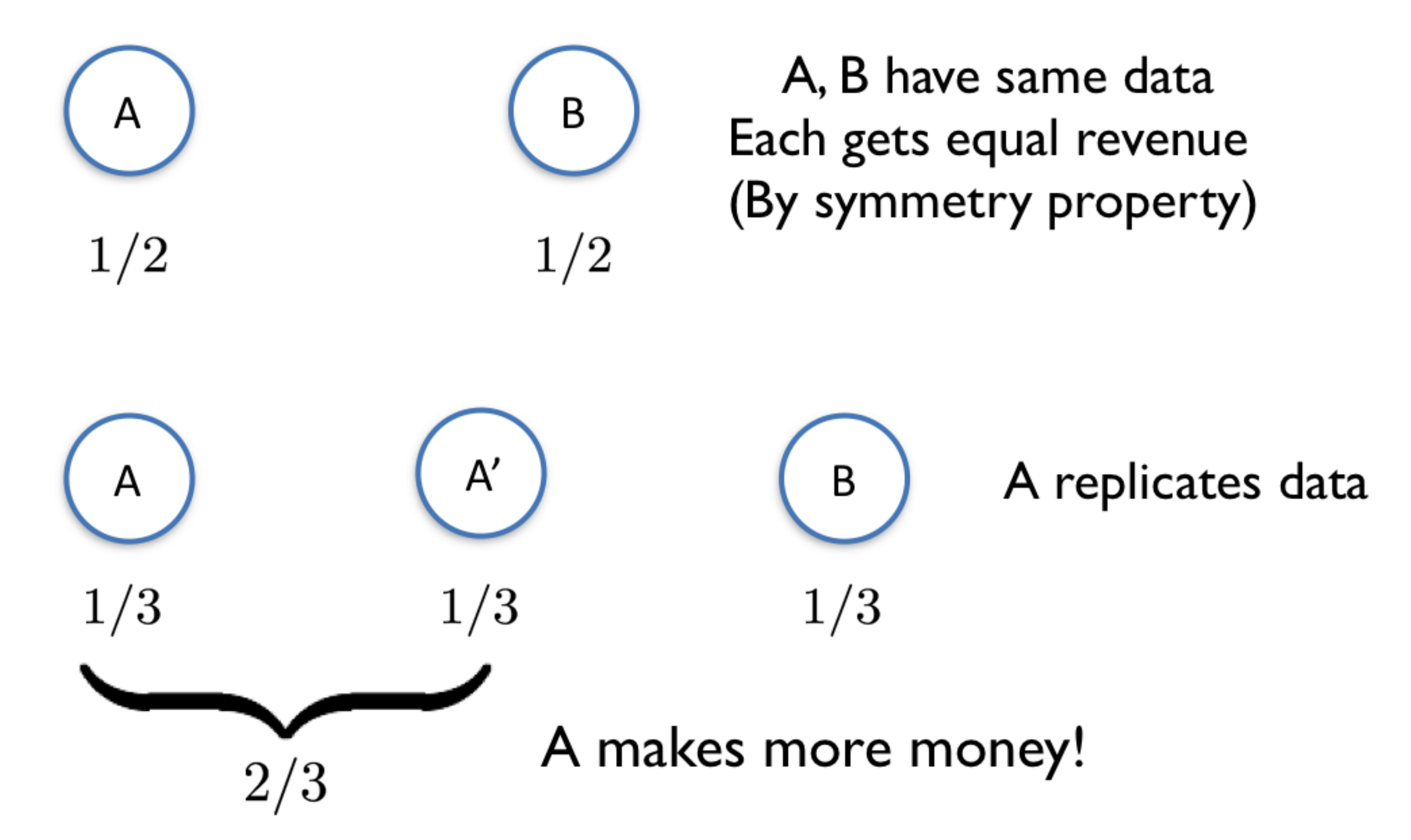}
	\caption{Shapley fairness is inadequate for freely replicable goods.}
	\label{fig:shapley_inadequate}
\end{figure}

Such a notion of fairness in cooperative games is especially important in modern day applications where: (i) digital goods are prevalent and can be produced at close to zero marginal cost; (ii) users get utility from bundles of digital goods with potentially complex combinatorial interactions between them. Two examples of such a setting are battery cost attribution among smartphone applications and reward allocation among ``experts" in a prediction market.

% Definition of Computationally Efficient
\subsection{Computational Efficiency}\label{sec:comp_efficiency}
We assume the Machine Learning algorithm, $\mathcal{M}$, and the Gain function, $\mathcal{G}$, each require computation running time of ${O}(M)$, i.e., computation complexity scales at most linearly with the number of features/sellers, $M$. We define the following computational efficiency requirement of the market,

\begin{property} [Efficient] \label{prop:efficient}
A marketplace is ``efficient" if for each Step $n$, the marketplace as laid out in Section \ref{sec:market_dynamics} runs in polynomial time in $M$, where $M$ is the number of sellers. In addition, the computational complexity for each step of the marketplace cannot grow with $N$.
\end{property}

Such a marketplace is feasible only if it functions in real-time. Thus, it is pertinent that the computational resources required for any Buyer $n$ to interface with the market are low i.e., ideally with run-time close to linear in $M$, the number of sellers, and not growing based on the number of buyers seen thus far. Due to the combinatorial nature of data, this is a non--trivial requirement as such combinatorial interactions normally lead to an exponential dependence in $M$; recall from earlier sections, the Shapley Algorithm in Theorem \ref{thm:shapley_algorithm} runs in $\Theta(2^M)$ and a naive implementation of Multiplicative Weights Algorithm for combinatorial goods runs in $\Theta(\exp(M))$.

	% MARKET CONSTRUCTION
	\section{Marketplace Construction} 
	\label{sec:market_construction}
	%%%%%%%%%%%%% MARKETPLACE CONSTRUCTION %%%%%%%%%%%%%
We now explicitly construct instances of $\mathcal{AF}, \mathcal{RF}, \mathcal{PF}$ and $\mathcal{PD}$ and argue in Section \ref{sec:main_results} that the properties laid out in Section \ref{sec:marketplace_properties} hold for these particular constructions. 

\begin{remark}
In line with Remark \ref{remark:central_price_setting}, we can think of $\mathcal{AF}, \mathcal{RF}$ as instances of how to design a robust bidding, data allocation and revenue generation scheme from the buyer's perspective with the features sold held fixed (see Property \ref{prop:truthful}). Analogously, $\mathcal{PD}$ is a function for fair revenue division from the seller's perspective for a fixed amount of generated revenue (see Properties \ref{prop:fair} and \ref{prop:replication_robustness}). And $\mathcal{PF}$ is a function to centrally adjust the price of the features sold dynamically over time from the marketplace's perspective (see Property \ref{prop:rev_maximizing}).
\end{remark}

% Allocation and Revenue Functions
\subsection{Allocation and Revenue Functions (Buyer's Perspective)}\label{sec:allocation_revenue_func}
\textbf{Allocation Function.} 
Recall the allocation function, $\mathcal{AF}$, takes as input the current price $p_n$ and the bid $b_n$ received, to decide the quality of the features $\bX_M$, used for Buyer $n$'s prediction task.

Recall from Definition \ref{def:value_from_prediction} that a buyer's utility comes solely from the quality of the estimate $\hat{Y}_n$ received, rather than the particular datasets allocated. Thus the key structure we exploit in designing $\mathcal{AF}$ is that from the buyer's perspective, {\em instead of considering each feature $X_j$ as a separate good (which leads to computational intractability), it is greatly simplifying to think of $\bX_M$ as the total amount of ``information" on sale}. $\mathcal{AF}$ can thus be thought of as a function to {\em collectively} adjust the quality of all of $\bX_M$ based on the difference between $p_n$ and $b_n$ 
%\footnote{This is what allows us to maintain a single price, $p_n \in \mathbb{R}_+$ for all of $\bX_M$.}. 

{\em Specifically, we choose $\mathcal{AF}$ to be a function that adds noise to/degrades $\bX_M$ proportional to the difference between $p_n$ and $b_n$}. This degradation can take many forms and  depends on the structure of $X_j$ itself. Below, we provide examples of commonly used allocation functions for some typical $X_j$ encountered in ML.

\begin{example}\label{example:alloc_function_real}
Consider $X_j \in \mathbb{R}^T$ i.e. sequence of real numbers. Then an allocation function (i.e. perturbation function), $\mathcal{AF}_1^*(p_n, b_n; X_j)$, commonly used (cf. \cite{accuracyforsale, sellingprivacyauction}) would be for $t$ in $[T]$,
\[
\tilde{X}_j(t) = X_j(t) + \max(0, p_n - b_n) \cdot \mathcal{N}(0, \sigma^2) 
 \]
where $\mathcal{N}(0, \sigma^2)$ is a univariate Gaussian.
\end{example}

\begin{example}\label{example:alloc_function_bits}
Consider $X_j \in \{0, 1\}^T$ i.e. sequence of bits. Then an allocation function (i.e. masking function), $\mathcal{AF}_2^*(p_n, b_n; X_j)$, commonly used (cf. \cite{adverseriallearningdata}) would be for $t$ in $[T]$,
\[
\tilde{X}_j(t) = B(t; \theta) \cdot X_j(t)   
 \]
 where $B(t; \theta)$ is an independent Bernoulli random variable with parameter $\theta = \min(\frac{b_n}{p_n}, 1)$.
\end{example}

In both examples if $b_n \geq p_n$, then the buyer is given $\bX_M$ as is without degradation. However if $b_n < p_n$, then $X_j$ is degraded in proportion to the difference between $b_n$ and $p_n$.

\begin{remark}\label{remark:future_AF_work}
Through Assumption \ref{assumption:robustness_to_noise} (see Section \ref{sec:key_assumptions}), we formalize a natural and necessary property required of any such allocation function so that Property \ref{prop:truthful} (truthfulness) holds. Specifically, for a fixed price $p_n$, increasing the bid $b_n$ cannot lead to a decrease in prediction quality. The space of possible allocations functions that meet this criteria is clearly quite large. We leave it as future work to study what is the optimal $\mathcal{AF}$ from the space of feasible allocation functions to maximize revenue.
\end{remark}

\begin{remark}
A celebrated result from \cite{myerson1981optimal} is that for single-parameter buyers, a single take-it-or-leave-it price for all data is optimal, i.e. if the bid is above the singe posted price then allocate all the data without any noise and if the bid is less than the price, allocate no data. However, maybe surprisingly, in our setting this result does not apply. This is due to an important subtlety in our formalism - while $\mu_n$ (how much a buyer values a marginal increase in accuracy), is a scalar, a buyer is also parametrized by $Y_n$, the prediction task. 

This leads to the following simple counter-example - imagine buyers are only of two types: (i) Type I with prediction task $Y_1$ and valuation $\mu_1$; (ii) Type II with prediction task $Y_2$ and valuation $\mu_2$. Further, let there be only two types of features on sale, $X_1$ and $X_2$. Assume $X_1$ is perfectly predictive of prediction task $Y_1$ and has zero predictive value for $Y_2$. Analogously, assume $X_2$ is perfectly predictive of $Y_2$ and has no predictive value for $Y_1$. Then it is easy to see that the optimal pricing mechanism is to set the price of $X_1$ to be $\mu_1$ and  $X_2$ to be $\mu_2$. Thus a single posted price is not optimal \footnote{When it comes to truthfulness (see Property \ref{prop:truthful}), the fact that we can parametrize a buyer’s valuation through a scalar, $\mu_n$, does indeed mean Myerson’s payment function (see \eqref{eq:myerson}) is truthful as long as the data allocation function is monotonic.}. 

More generally, if different datasets have varying amounts of predictive power for different buyer types, it is not even clear that a take-it-or-leave-it price per feature sold is optimal. 
\end{remark}

\medskip
\noindent \textbf{Revenue Function.} 
Recall from Definition \ref{def:value_from_prediction}, we parameterize buyer utility through the parameter $\mu_n$, i.e., how much a buyer values a marginal increase in prediction quality. This crucial modeling choice allows us to use Myerson's  payment function rule (cf. \cite{myerson1981optimal}) given below,
\begin{equation}\label{eq:myerson}
\mathcal{RF}^*(p_n, b_n, Y_n)  = b_n \cdot \mathcal{G}\bigg(Y_n, \mathcal{M} \Big(\mathcal{AF}^*(b_n, p_n) \Big) \bigg) - \int_{0}^{b_n} \mathcal{G}\bigg(Y_n, \mathcal{M} \Big(\mathcal{AF}^*(z, p_n) \Big) \bigg) dz.
\end{equation}
In Theorem \ref{thm:truthful}, we show that $\mathcal{RF}^*$ \textit{ensures Buyer $n$ is truthful} (as defined in Property \ref {prop:truthful}). Refer to Figure \ref{fig:allocation_and_myerson} for a graphical view of $\mathcal{AF}^*$ and $\mathcal{RF}^*$.

\begin{figure}[]
	\centering
	\includegraphics[width=0.8\textwidth]{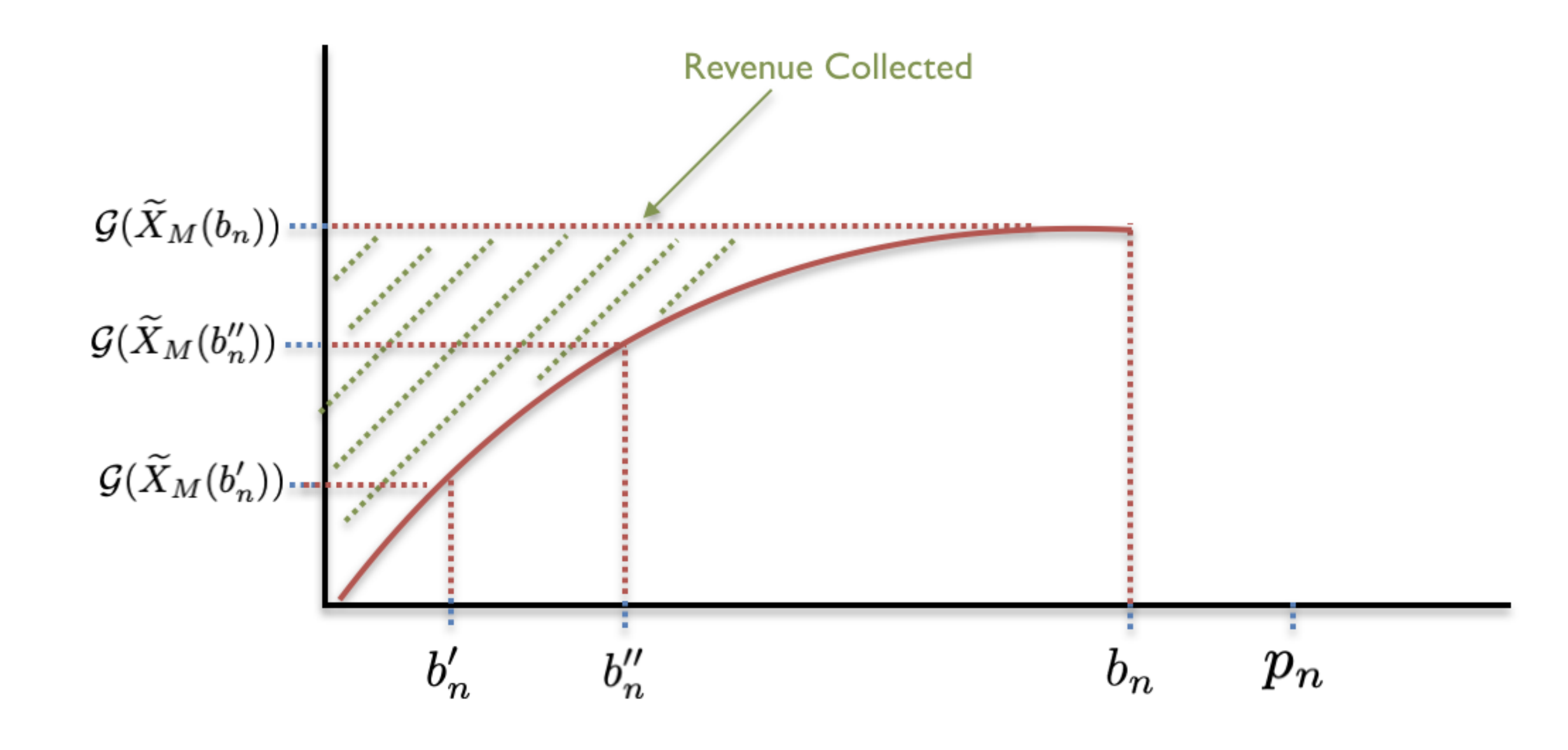}
	\caption{Features allocated ($\mathcal{AF}^*$) and revenue collected ($\mathcal{RF}^*$) for a particular price vector $p_n$ and bid $b_n$.}
	\label{fig:allocation_and_myerson}
\end{figure}

\subsection{Price Update Function (Marketplace Perspective)}\label{sec:price_update_func}
The market is tasked with how to pick $p_n$ for $n \in [N]$. Recall from Section \ref{sec:market_dynamics}, the market must decide $p_n$ before Buyer $n$ arrives (otherwise, it is easily seen that truthfulness cannot be gauranteed).

We now provide some intuition of how increasing/decreasing $p_n$ affects the amount of revenue collected, and the implicit tradeoff that lies therein. Observe from the construction of $\mathcal{RF}^*$ in \eqref{eq:myerson} that for a fixed bid, $b_n$, and prediction task $Y_n$, it is easily seen that if $p_n$ is picked to be too large, then the positive term in $\mathcal{RF}^*$ is small (as the degradation of the signal in $\bX_M$ is very high), leading to lower than optimal revenue generation. Similarly, if $p_n$ is picked to be too small, it is easily seen that the negative term in $\mathcal{RF}^*$ is large, which again leads to an undesired loss in revenue. 

In Algorithm \ref{alg-price-update-function}, we apply the Multiplicative Weights method to pick $p_n$ in an online fashion and to balance the tradeoff described above. To construct Algorithm \ref{alg-price-update-function} more precisely, we need to define some additional quantities. As we make precise in Assumption \ref{assumption:bounded_bids} in Section \ref{sec:main_results}, we assume bids come from some bounded set, $\mathcal{B} \subset \mathbb{R}_+$. We define $\mathcal{B}_{\max} \in \mathbb{R}$ to be the maximum element of $\mathcal{B}$ and $\mathcal{B}_{\text{net}}(\epsilon)$ to be a minimal $\epsilon$-net of $\mathcal{B}$
\footnote{We endow $\mathbb{R}$ with the standard Euclidean metric. An $\epsilon$-net of a set $\mathcal{B}$ is a set $K \subset \mathcal{B}$ such that for every point $x \in \mathcal{B}$, there is a point $x_0 \in K$ such that $|x - x_0| \leq \epsilon$.}. 
Intuitively, the elements of $\mathcal{B}_{\text{net}}(\epsilon)$ serve as our ``experts" (i.e. the different prices we experiment with) in the Multiplicative Weights algorithm. 

In Theorem \ref{thm:avg_regret}, we show that this algorithm does indeed achieve zero-regert with respect to the optimal $p^* \in \mathbb{R}_+$ in hindsight.
% Algorithm
\begin{algorithm}[t]
	\SetAlgoNoLine
	\KwIn{$b_n, Y_n, \mathcal{B}, \epsilon, \delta$}
	\KwOut{$p_n$}
	Let $\mathcal{B}_{\text{net}}(\epsilon)$ be an $\epsilon$-net of $\mathcal{B}$; \\
	\For{$c^i \in \mathcal{B}_{\text{net}}(\epsilon)$ } 
	{
		Set $w^i_1 = 1$ \tcp*{initialize weights of all experts to 1}
	}
	\For{$n=1$ to $N$}
	{
		$W_n = \sum_{i=1}^{|\mathcal{B}_{\text{net}}(\epsilon)|} w^{i}_n$; \\
		Let $p_n = c^i$ with probability $w^i_n / W_n$ \tcp*{note $p_n$ is not a function of $b_n$}
		\For{$c^i \in \mathcal{B}_{\text{net}}(\epsilon)$}
		{
			Let $g^i_n = \mathcal{RF}^*(c^i, b_n, Y_n) / \mathcal{B}_{\max}$ \tcp*{revenue gain if price $c^i$ was used}
			Set $w^i_{n+1} = w^i_n \cdot (1 + \delta g^i_n)$ \tcp*{Multiplicative Weights update step}
		}
	}
	\caption{PRICE-UPDATE: $ \mathcal{PF}^*(b_n, Y_n, \mathcal{B}, \epsilon, \delta)$}
	\label{alg-price-update-function}
\end{algorithm}

%Payment Division Functions 
\subsection{Payment-Division Functions (Seller's Perspective)}\label{sec:payment_division_func}
\subsubsection{Shapley Approximation}\label{sec:shapley_approx}
In our model (see Section \ref{sec:market_dynamics}), a buyer only makes an aggregate payment to the market based on the increase in accuracy experienced (see $\mathcal{RF}^*$ in \eqref{eq:myerson}). It is thus up to the market to design a mechanism to fairly (as defined in Property \ref{prop:fair}) allocate the revenue among the sellers to incentivize their participation. Following the seminal work in~\cite{shapley_allocation}, there have been a substantial number of applications (cf. \cite{powerindices, balkanski2017statistical}) leveraging the ideas in~\cite{shapley_allocation} to fairly allocate cost/reward among strategic entities cooperating towards a common goal. Since the Shapley algorithm stated in \eqref{eq:shapley_mechanism} is the unique method to satisfy Property \ref{prop:fair}, but unfortunately runs in time $\Theta(2^M)$, the best one can do is to approximate  \eqref{eq:shapley_mechanism} as closely as possible. 

In Algorithm \ref{alg-payment-division}, we uniformly sample from the space of permutations over $[M]$ to construct an approximation of the Shapley value in \eqref{eq:shapley_mechanism}. To construct Algorithm \ref{alg-payment-division} more precisely, we need to define some additional quantities. Let $\sigma_{[M]}$ refer to the set of all permutations over $[M]$. For any permutation $\sigma \in \sigma_{[M]}$, let $[\sigma < m]$ refer to the set of features in $[M]$ that came before $m$. 

The key observation in showing that Algorithm \ref{alg-payment-division} is effective, is that instead of enumerating over all permutations in $\sigma_{[M]}$ as in the Shapley allocation, it suffices to sample $\sigma_{k} \in \sigma_{[M]}$ uniformly at random with replacement, $K$ times, where $K$ depends on the $\epsilon$-approximation a practitioner desires. We provide guidance on how to pick $K$ in Section \ref{sec:fairness}. We note some similar sampling based methods, albeit for different applications (cf. \cite{shapley_monte_carlo, ShapleyPolyCastro2009, shapleyApproxSampling2013}). 

In Theorem \ref{thm:shapley_approx}, we show that Algorithm \ref{alg-payment-division} gives an $\epsilon$-approximation for \eqref{eq:shapley_mechanism} with high probability while running in time $O(M^2)$. 
 
% Algorithm
\begin{algorithm}[t]
	\SetAlgoNoLine
	\KwIn{$Y_n, \wbX_M, K$}
	\KwOut{$\hat{\psi}_n= [\hat{\psi}_n(m) : m \in [M]]$}
	Let $\mathcal{B}_{\text{net}}(\epsilon)$ be an $\epsilon$-net of $\mathcal{B}$; \\
	\For{$m \in [M]$ } 
	{
		\For{$k \in [K]$}
		{
			$\sigma_k \sim \text{Unif}(\sigma_{[M]})$; \\
			$G = \mathcal{G}(Y_n, \mathcal{M}(\bX_{[\sigma_k < m]})) $; \\
			$G^+ = \mathcal{G}(Y_n, \mathcal{M}(\bX_{[\sigma_k < m \ \cup \ m]})) $; \\
			$\hat{\psi}^k_n(m) = [G^+ - G]$
		}
		$\hat{\psi}_n(m) = \frac{1}{K} \sum^{K}_{k=1}\hat{\psi}^k_n(m)$
	}
	\caption{SHAPLEY-APPROX: $\mathcal{PD}^*_{A}(Y_n, \wbX_M, K)$}
	\label{alg-payment-division}
\end{algorithm}

\subsubsection{Robustness to Replication}
Recall from Section \ref{sec:replication_robustness} that for freely replicable goods such as data, the standard Shapley notion of fairness does not suffice (see Example \ref{example:shapley_not_robust} for how it can lead to undesirable revenue allocations). Though this issue may seem difficult to overcome in general, we again exploit the particular structure of data as a path forward. Specifically, we note that there are {\em standard methods to define the ``similarity" between two vectors of data}. A complete treatment of similarity measures has been done in~\cite{goshtasby2012similarity}. We provide two examples:
\begin{example}
Cosine similarity, a standard metric used in text mining and information retrieval, is:
\[
\frac{|\langle X_1, X_2 \rangle|}{||X_1||_2 ||X_2||_2}, \ X_1, X_2 \in \mathbb{R}^{T}
\] 
\end{example}

\begin{example}
``Inverse" Hellinger distance, a standard metric to define similarity between underlying data distributions, is: $1 -\frac{1}{2} \sum_{x \in \mathcal{X}}(\sqrt{p_1(x)} - \sqrt{p_2(x)})^2)^{1/2}, \ p_1 \sim X_1, \ p_2 \sim X_2$.
\end{example}

We introduce some natural properties any such similarity metric must satisfy for our purposes,
\begin{definition}[\textbf{Adapted from \cite{goshtasby2012similarity}}]\label{def:similarity_metric}
A similarity metric is a function, $\mathcal{SM}: \mathbb{R}^{T} \times \mathbb{R}^{T} \rightarrow [0, 1]$, that satisfies: (i) Limited Range: $0 \le \mathcal{SM}(\cdot, \cdot) \le 1$; (ii) Reflexivity: $\mathcal{SM}(X, Y) = 1$ if and only if $X = Y$; (iii) Symmetry:  $\mathcal{SM}(X, Y) = \mathcal{SM}(Y, X)$; (iv) Define $d\mathcal{SM}(X, Y) = 1 - \mathcal{SM}(X, Y)$, then Triangle Inequality: $d\mathcal{SM}(X, Y) + d\mathcal{SM}(Y, Z) \geq d\mathcal{SM}(X, Z)$
\end{definition}

In Algorithm \ref{alg-payment-division-robust}, we construct a ``robust-to-replication" version of the randomized Shapley approximation algorithm by utilizing Definition \ref{def:similarity_metric} above. 

Intuitively, the algorithm penalizes similar features (relative to the similarity metric, $\mathcal{SM}$) to disincentivize replication. We provide guidance on how to pick the hyper-parameter $\lambda$ in Section \ref{sec:main_results}.  

In Theorem \ref{thm:shapley_robust}, we show Algorithm \ref{alg-payment-division-robust} is $\epsilon$-``Robust to Replication" i.e. Property \ref{prop:replication_robustness} (Robustness-to-Replication) holds. See the example below for an illustration of the effect of Algorithm \ref{alg-payment-division-robust} on undesired replication.

\begin{example}\label{example:shapley_robust}
Recall Example \ref{example:shapley_not_robust} where there are two sellers, A and B, each selling an identical feature. In that example, if Seller A replicated her feature, her Shapley allocation increased from $\frac{1}{2}$ to $\frac{2}{3}$. If we instead apply Algorithm \ref{alg-payment-division-robust} (with $\lambda = 1$), then it is easy to see that her Shapley allocation decreases from  $\frac{1}{2e}$ to $\frac{2}{3e^2}$, ensuring Property \ref{prop:replication_robustness}  holds. See Figure \ref{fig:shapley_robust} for an illustration.
\end{example}

% Algorithm
\begin{algorithm}[t]
	\SetAlgoNoLine
	\KwIn{$Y_n, \wbX_M, K, \mathcal{SM}, \lambda$}
	\KwOut{$\psi_n = [\psi_n(m) : m \in [M]]$}
	$\hat{\psi}_n(m)$ = SHAPLEY-APPROX($Y_n, \mathcal{M}, \mathcal{G}, K$); \\
	$\psi_n(m) = \hat{\psi}_n(m)\exp(- \lambda \sum_{j \in [M] \setminus \{m\}} \mathcal{SM}(X_m, X_j))$; \\	
	\caption{SHAPLEY-ROBUST: $\mathcal{PD}^*_{B}(Y_n, \wbX_M, K, \mathcal{SM}, \lambda)$}
	\label{alg-payment-division-robust}
\end{algorithm}

\begin{figure}[]
	\centering
	\includegraphics[width=0.45\textwidth]{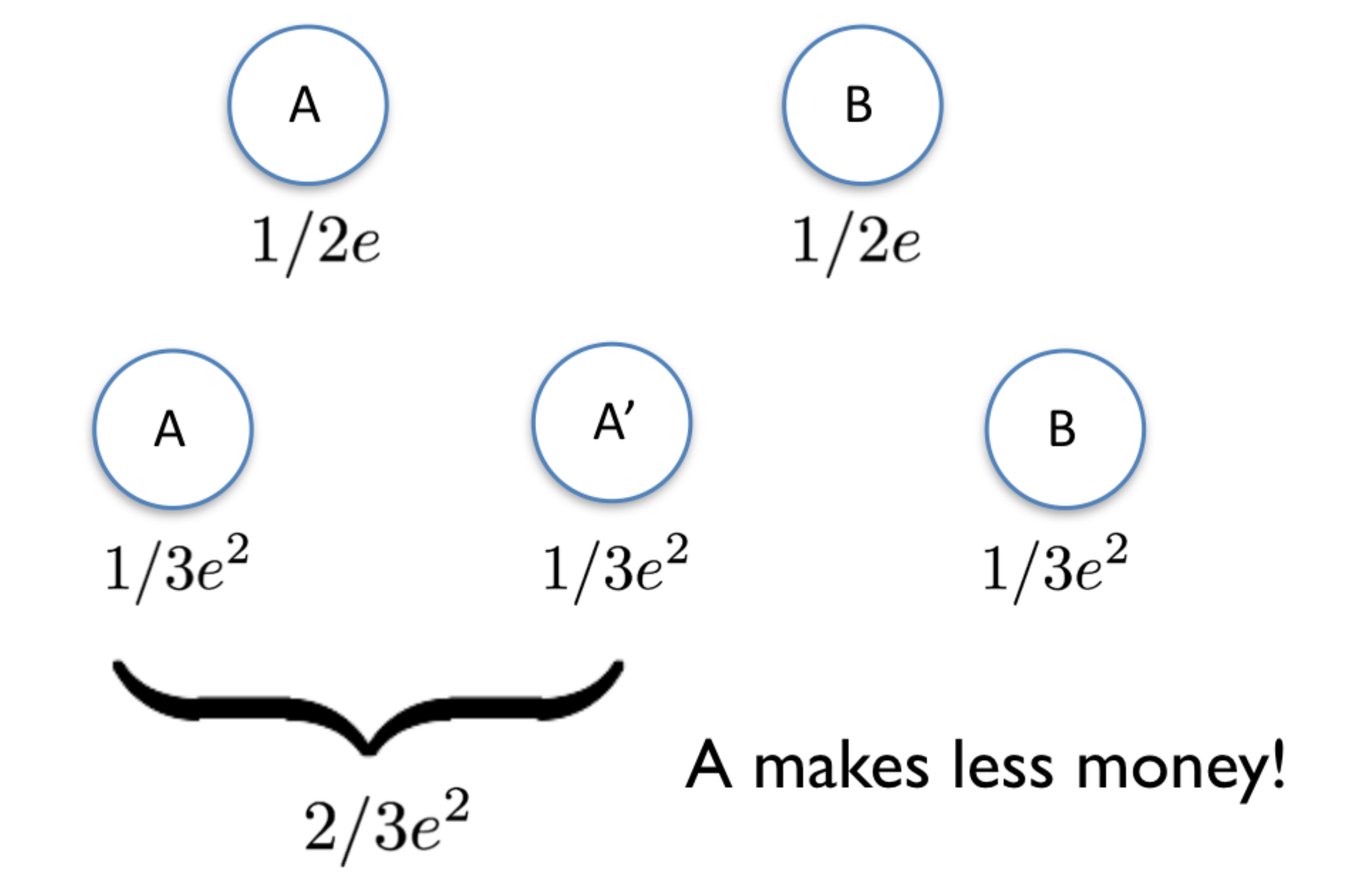}
	\caption{A simple example illustrating how SHAPLEY-ROBUST down weights similar data to ensure robustness to replication.}
	\label{fig:shapley_robust}
\end{figure}

	% RESULTS
	\section{Main Results} \label{sec:main_results}
	% ASSUMPTIONS
\subsection{Assumptions.} \label{sec:key_assumptions}
To give performance guarantees, we state four mild and natural assumptions we need on: (i) $\mathcal{AF}^*$ (allocation function); (ii) $\mathcal{M}$ (ML algorithm); (iii) $\mathcal{RF}^*$ (revenue function); (iv) $b_n$ (bids made).

% Assumption 1 - Increased allocation of features cannot decrease prediction accuracy
\begin{assumption}[$\mathcal{AF}^*$ is Monotonic]\label{assumption:robustness_to_noise}
$\mathcal{M}, \mathcal{AF}^*$ are such that an increase in the difference between $p_n$ and $b_n$ leads to a decrease in $\mathcal{G}$ i.e. an increase in ``noise" cannot lead to an increase in prediction accuracy. Specifically, for any $Y_n, p_n$, let $\wbX^{(1)}_M, \wbX^{(2)}_M$ be the outputs of $\mathcal{AF}(p_n, b^{(1)}; \bX_M), \mathcal{AF}(p_n, b^{(2)}; \bX_M)$ respctively. Then if $b^{(1)} \leq b^{(2)}$, we have $\mathcal{G}\Big(Y_n, \mathcal{M}(\wbX^{(1)}_M)\Big) \leq \mathcal{G}\Big(Y_n, \mathcal{M}(\wbX^{(2)}_M)\Big) $.
\end{assumption}
% Assumption 2 - Same revenue generated for appending feature set with identical features
\begin{assumption}[$\mathcal{M}$ is Invariant to Replicated Data]\label{assumption:robustness_to_replication}
$\mathcal{M}$ is such that replicated features do not cause a change in prediction accuracy. Specifically, $\forall \ S \subset [M], \ \forall \ Y_n, \ \forall \ m \in S$, let $m^+_i$ refer to the  $i^{th}$ replicated copy of $m$ (i.e. $X^+_{m, i} = X_m$). Let $S^+ = \cup_m (m \cup_i m^+_i)$ refer to the set of original and replicated features. Then $\mathcal{G}(Y_n, \mathcal{M}(\bX_{S})) = \mathcal{G}(Y_n, \mathcal{M}(\bX_{S^+}))$
\end{assumption}
%Assumption 3 - Lipschitz Revenue Function
\begin{assumption}[$\mathcal{RF}^*$ is Lipschitz]\label{assumption:lipschitz_revenue}
The revenue function $\mathcal{RF}^*$ is $\mathcal{L}$-Lipschitz with respect to price. Specifically, for any $Y_n, b_n, p^{(1)}, p^{(2)}$, we have $|\mathcal{RF}^*(p^{(1)}, b_n, Y_n) - \mathcal{RF}^*(p^{(2)}, b_n, Y_n)| \le \mathcal{L} | p^{(1)} - p^{(2)}| $.
\end{assumption}
%Assumption 4 - Bounded Price Vector 
\begin{assumption}[Bounded Bids]\label{assumption:bounded_bids}
The set of possible bids $b_n$ for $n \in [N]$ come from a closed, bounded set $\mathcal{B}$. Specifically, $b_n \in  \mathcal{B}$, where $\text{diameter}(\mathcal{B}) = D$, where $D < \infty$.
\end{assumption}

\begin{remark}
We provide some justification for Assumptions \ref{assumption:robustness_to_noise} and \ref{assumption:robustness_to_replication} above, which impose requirements of the ML algorithm and the accuracy metric  (i.e. $\mathcal{M}$ and $\mathcal{G}$). These assumptions require that: (i) as more noise is added to data, the less gain in prediction accuracy; (ii) replicated features do not have an effect on the accuracy. Essentially all ML algorithms and the accuracy metrics function like this. Thus these assumptions reflects standard, weak statistical assumptions. Assumptions \ref{assumption:lipschitz_revenue} and \ref{assumption:bounded_bids} are self-explanatory.
\end{remark}

% TRUTHFULNESS - Construction of AF, PF (Allocation & Payment Function)
\subsection{Truthfulness.} \label{sec:truthful}
\begin{thm}\label{thm:truthful}
For $\mathcal{AF}^*$, Property \ref{prop:truthful} (Truthfulness) can be achieved if and only if Assumption \ref{assumption:robustness_to_noise} holds. In which case, $\mathcal{RF}^*$ guarantees truthfulness.
\end{thm}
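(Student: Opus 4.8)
The plan is to recognize this as Myerson's characterization of truthful single-parameter mechanisms (cf. \cite{myerson1981optimal}) applied to the right ``allocation curve''. Fix a buyer's task $Y_n$ and the posted price $p_n$, and set $g(z) := \mathcal{G}\big(Y_n, \mathcal{M}(\mathcal{AF}^*(z, p_n))\big) \in [0,1]$, the accuracy obtained by reporting bid $z$. By Definition \ref{def:buyer_utility}, $\mathcal{U}(z, Y_n) = \mu_n\, g(z) - \mathcal{RF}(p_n, z, Y_n)$, so with payment $r(z) := \mathcal{RF}(p_n, z, Y_n)$ this is exactly a single-parameter agent with per-unit value $\mu_n$ and ``allocation'' $g(z)$; Property \ref{prop:truthful} says $\mu_n \in \argmax_z\, [\mu_n g(z) - r(z)]$ for every value $\mu_n$. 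The point worth stressing is that the relevant allocation is the \emph{scalar} $g(z)$, not the $M$-dimensional allocated feature matrix --- which is precisely what makes the one-dimensional theory applicable.

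For the ``if'' direction (which also gives the final sentence), assume Assumption \ref{assumption:robustness_to_noise}, so $g$ is non-decreasing. Substituting $r = \mathcal{RF}^*$ from \eqref{eq:myerson} gives $\mathcal{U}(z, Y_n) = \mu_n g(z) - z g(z) + \int_0^z g(w)\,dw$; evaluating at the truthful report $\mu_n$ (where $\mu_n g(\mu_n) - \mu_n g(\mu_n) = 0$) and subtracting,
\[
\mathcal{U}(\mu_n, Y_n) - \mathcal{U}(z, Y_n) \;=\; \int_z^{\mu_n} g(w)\,dw \;-\; (\mu_n - z)\, g(z).
\]
A two-case area comparison closes it: if $z < \mu_n$ then $g(w) \ge g(z)$ on $[z,\mu_n]$ by monotonicity, so the integral is at least $(\mu_n-z)g(z)$; if $z > \mu_n$ then $g(w) \le g(z)$ on $[\mu_n, z]$, and rearranging the same bound gives the identical sign. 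In either case the difference is $\ge 0$, so $\mu_n \in \argmax_z \mathcal{U}(z,Y_n)$. Since $g$ is bounded and monotone it is Riemann integrable, so this uses no differentiability of $g$; finiteness of the Myerson integral is immediate from $g\in[0,1]$ and boundedness of $\mathcal{B}$ (Assumption \ref{assumption:bounded_bids}).

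For the ``only if'' direction, suppose \emph{some} revenue function $r(\cdot)$ makes the market truthful but, for contradiction, $g$ fails to be monotone: there are $b^{(1)} < b^{(2)}$ with $g(b^{(1)}) > g(b^{(2)})$. Writing the incentive constraint once for a buyer with true value $\mu = b^{(1)}$ (reporting $b^{(1)}$ weakly beats reporting $b^{(2)}$) and once for a buyer with true value $\mu = b^{(2)}$ (reporting $b^{(2)}$ weakly beats reporting $b^{(1)}$) and adding the two inequalities, the payment terms cancel and one is left with $(b^{(2)} - b^{(1)})\big(g(b^{(2)}) - g(b^{(1)})\big) \ge 0$, forcing $g(b^{(2)}) \ge g(b^{(1)})$ --- a contradiction. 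Hence monotonicity of $g$, which is precisely Assumption \ref{assumption:robustness_to_noise} (for the fixed $p_n, Y_n$, and then for all of them), is necessary.

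I do not anticipate a genuine obstacle; the only care needed is (i) phrasing the sufficiency argument via the integral/area comparison rather than a first-order condition, so that it does not covertly assume $g$ differentiable or even continuous, and (ii) being explicit in the converse that ``truthfulness can be achieved'' quantifies over \emph{all} buyer values $\mu$, which is what licenses using the two incentive constraints at $\mu = b^{(1)}$ and $\mu = b^{(2)}$. Both are routine, so the proof is essentially a clean specialization of Myerson's lemma to this market.
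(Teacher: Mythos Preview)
Your proposal is correct and essentially identical to the paper's own proof: both define the scalar accuracy curve $g(z)=h(z)$, prove sufficiency by the two-case area/integral comparison exploiting monotonicity, and prove necessity by adding the two incentive constraints at types $a=b^{(1)}$ and $b=b^{(2)}$ to cancel payments and obtain $(b^{(2)}-b^{(1)})(g(b^{(2)})-g(b^{(1)}))\ge 0$. Your extra remarks about integrability and the quantification over all $\mu$ are sound refinements but do not alter the argument.
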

Theorem $\ref{thm:truthful}$ is an application of Myerson's payment function (cf. \cite{myerson1981optimal}) which ensures $b_n = \mu_n$. See Appendix \ref{sec:appendix_truthfulness} for the proof. 

Again, the key is the modeling choice made to define buyer utility as in Definition \ref{def:value_from_prediction}. It lets us parameterize a buyers value for increased accuracy by a scalar, $\mu_n$, which allows us to exploit Myerson's payment function (unfortunately generalization of Myerson's payment function to the setting where $\mu_n$ is a vector are severely limited cf. \cite{Daskalakis:myerson-multi-item}). 

% REVENUE MAXIMIZATION - Construction of CF (Cost Update Function)
\subsection{Revenue Maximization.} \label{sec:rev_maximizing}
\begin{thm} \label{thm:avg_regret}
Let Assumptions \ref{assumption:robustness_to_noise}, \ref{assumption:lipschitz_revenue} and  \ref{assumption:bounded_bids} hold. Let $p_{n: n \in [N]}$ be the output of Algorithm \ref{alg-price-update-function}. Let $\mathcal{L}$ be the Lipschitz constant of $\mathcal{RF}^*$ (defined as in Assumption \ref{assumption:lipschitz_revenue}). Let $\mathcal{B}_{\max} \in \mathbb{R}$ be the maximum element of $\mathcal{B}$ (where  $\mathcal{B}$ is defined as in Assumption \ref{assumption:bounded_bids}). Then by choosing the algorithm hyper-parameters $\epsilon = (\mathcal{L} \sqrt{N})^{-1}, \ \delta =  \sqrt{\log(|\mathcal{B}_{\text{net}}(\epsilon)|)/N}$, the total average regret is bounded by, 
\[
\frac{1}{N}\Ex[\mathcal{R}(N)] \leq C \mathcal{B}_{\max}
\sqrt{\frac{\log(\mathcal{B}_{\max} \mathcal{L} \sqrt{N} )}{N}} = O(\sqrt{\frac{\log(N)}{N}}),
\]
for some positive constant $C > 0$. Here, the expectation is taken over the randomness in Algorithm \ref{alg-price-update-function}.
Hence, Property \ref{prop:rev_maximizing} (Revenue Maxmization) holds.
\end{thm}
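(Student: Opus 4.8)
The plan is to reduce the theorem to the classical regret guarantee of the Multiplicative Weights (MW) algorithm run over the finite ``expert'' set $\mathcal{B}_{\text{net}}(\epsilon)$, after three preliminary reductions: normalizing the per-step rewards into $[0,1]$, discretizing the continuum of candidate prices, and optimizing the hyper-parameters $\epsilon,\delta$. First I would verify that the quantities $g^i_n = \mathcal{RF}^*(c^i, b_n, Y_n)/\mathcal{B}_{\max}$ fed to MW lie in $[0,1]$. Using the explicit form of Myerson's payment in \eqref{eq:myerson} together with Assumption \ref{assumption:robustness_to_noise} (monotonicity of $\mathcal{G}$ in the bid), the integrand $\mathcal{G}(Y_n,\mathcal{M}(\mathcal{AF}^*(z,p_n)))$ is non-decreasing in $z$, so $\int_0^{b_n}(\cdot)\,dz \le b_n\,\mathcal{G}(Y_n,\mathcal{M}(\mathcal{AF}^*(b_n,p_n)))$; hence $0 \le \mathcal{RF}^* \le b_n \le \mathcal{B}_{\max}$ (individual rationality plus the trivial bound $\mathcal{G}\le 1$). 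This boundedness is exactly what MW needs, and it is the point at which Assumption \ref{assumption:robustness_to_noise} feeds into the revenue-maximization argument.

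Next comes the discretization step. The benchmark in Property \ref{prop:rev_maximizing} ranges over all $p^*\in\mathbb{R}_+$, whereas the algorithm only ever plays prices in $\mathcal{B}$; I would first argue it suffices to compete against $p^*\in\mathcal{B}$ — prices exceeding $\mathcal{B}_{\max}$ over-degrade every buyer's allocation and cannot beat $p^*=\mathcal{B}_{\max}$, which again uses Assumption \ref{assumption:robustness_to_noise} — and then, for the best such $p^*$, pick a net point $c^\star\in\mathcal{B}_{\text{net}}(\epsilon)$ with $|c^\star-p^*|\le\epsilon$. By Assumption \ref{assumption:lipschitz_revenue}, $\sum_n \mathcal{RF}^*(p^*,b_n,Y_n) - \sum_n \mathcal{RF}^*(c^\star,b_n,Y_n) \le \mathcal{L}\epsilon N$, so it remains only to bound the regret against the single best expert $c^\star$.

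The core of the argument is the standard MW potential-function analysis applied to the weights $w^i_n$: writing $K=|\mathcal{B}_{\text{net}}(\epsilon)|$ and taking expectations over the algorithm's internal randomness, the expected total revenue satisfies $\mathbb{E}\big[\sum_n \mathcal{RF}^*(p_n,b_n,Y_n)\big] \ge (1-\delta)\sum_n \mathcal{RF}^*(c^\star,b_n,Y_n) - \mathcal{B}_{\max}\log K/\delta$, so the regret against $c^\star$ is at most $\delta \mathcal{B}_{\max} N + \mathcal{B}_{\max}\log K/\delta$. Adding the discretization loss $\mathcal{L}\epsilon N$, dividing by $N$, and using that a minimal $\epsilon$-net of $\mathcal{B}$ has $K = O(\mathrm{diam}(\mathcal{B})/\epsilon)$ points so that $\log K = O(\log(\mathcal{B}_{\max}\mathcal{L}\sqrt{N}))$ for the stated $\epsilon=(\mathcal{L}\sqrt N)^{-1}$, the choice $\delta=\sqrt{\log K/N}$ balances the two MW terms and gives $\frac{1}{N}\mathbb{E}[\mathcal{R}(N)] = O\big(\mathcal{B}_{\max}\sqrt{\log(\mathcal{B}_{\max}\mathcal{L}\sqrt N)/N}\big) = O(\sqrt{\log N/N})\to 0$, which is Property \ref{prop:rev_maximizing}.

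I expect the main obstacle to be the two steps above rather than the MW black box: (i) certifying that the normalized rewards land in $[0,1]$ in the worst case — equivalently, that Myerson's payment is always non-negative in this setting — which hinges delicately on the monotonicity Assumption \ref{assumption:robustness_to_noise}; and (ii) the reduction from the unbounded price benchmark $p^*\in\mathbb{R}_+$ to one over the bounded set covered by the net, which requires a short dominance/monotonicity argument before Assumption \ref{assumption:lipschitz_revenue} can be invoked. Everything afterward is a routine instantiation of the textbook MW regret bound and an optimization of $\epsilon$ and $\delta$.
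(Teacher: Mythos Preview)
Your proposal is correct and follows essentially the same approach as the paper: normalize Myerson's payment into $[0,1]$ via Assumption~\ref{assumption:robustness_to_noise}, run the standard Multiplicative Weights potential-function analysis over the finite expert set $\mathcal{B}_{\text{net}}(\epsilon)$, extend to the continuum via the Lipschitz Assumption~\ref{assumption:lipschitz_revenue}, and then optimize $\epsilon,\delta$. If anything, you are more careful than the paper on two points: you spell out why $\mathcal{RF}^*\ge 0$ (the paper simply asserts it), and you flag the reduction from $p^*\in\mathbb{R}_+$ to $p^*\in\mathcal{B}$ as needing a monotonicity argument, whereas the paper just invokes Assumption~\ref{assumption:bounded_bids} without elaboration.
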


Theorem \ref{thm:avg_regret} proves Algorithm \ref{alg-price-update-function} is a zero regret algorithm. We note the bound is independent of $M$, the number of features sold. See Appendix \ref{sec:appendix_price_update} for the proof. 
%since the maximum prediction gain is bounded above i.e. $\mathcal{G}(\cdot) \le 1$

As we note in Remark \ref{remark:future_AF_work}, a limitation of the $\mathcal{AF}^*$ we design is that it is fixed, i.e., we degrade each feature by the same scaling. We leave it as future work to design an adaptive $\mathcal{AF}$; instead of fixing $\mathcal{AF}^*$ a priori (as we currently do using standard noising procedures), can we make the noising procedure adaptive to the prediction tasks to further increase the revenue generated (potentially by adding distributional assumptions to the prediction tasks)?

%for different types of data features on sale (e.g. bits, non-negative integers, harmonics), can we precisely define which allocation function $\mathcal{AF}$ is the most revenue maximizing and still satisfies Assumption \ref{assumption:robustness_to_noise} - 

% FAIRNESS - Construction of PD (Payment-Division Functions)
\subsection{Fairness in Revenue Division.} \label{sec:fairness}
% Shapley Approximation - Theoretical Result
\begin{thm}\label{thm:shapley_approx}
Let $\psi_{n, \text{shapley}}$ be the unique vector satisfying Property \ref{prop:fair} (Shapley Fairness) as given in \eqref{eq:shapley_mechanism}.  For Algorithm \ref{alg-payment-division}, pick the following hyperparameter: $K > (M\log(2 / \delta)) / (2 \epsilon^2)$, where $\delta, \epsilon > 0$. Then with probability $1 - \delta$, the output $\hat{\psi_n}$ of Algorithm \ref{alg-payment-division}, achieves the following, 
\[
||\psi_{n, \text{shapley}} - \hat{\psi_n}||_{\infty} < \epsilon.
\]
\end{thm}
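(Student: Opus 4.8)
The plan is to recognize that Algorithm \ref{alg-payment-division} computes, for each feature $m$, an empirical average of $K$ i.i.d.\ bounded random variables whose common mean is exactly the Shapley value $\psi_{n,\text{shapley}}(m)$, and then apply a concentration inequality coordinate-wise together with a union bound over the $M$ features. First I would establish the unbiasedness: fixing $m$, draw $\sigma \sim \text{Unif}(\sigma_{[M]})$ and set $\hat{\psi}^\sigma_n(m) = \mathcal{G}(Y_n, \mathcal{M}(\bX_{[\sigma < m]\cup m})) - \mathcal{G}(Y_n, \mathcal{M}(\bX_{[\sigma < m]}))$. Conditioning on the set $T = [\sigma < m]$, the probability that a uniformly random permutation places exactly the set $T$ before $m$ is $\frac{|T|!\,(M - |T| - 1)!}{M!}$ — there are $|T|!$ orderings of the elements of $T$, $(M-|T|-1)!$ orderings of the elements after $m$, and $M!$ permutations total. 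Hence
\[
\Ex_\sigma\big[\hat{\psi}^\sigma_n(m)\big] = \sum_{T \subset [M]\setminus\{m\}} \frac{|T|!\,(M - |T| - 1)!}{M!}\Big(\mathcal{G}(Y_n, \mathcal{M}(\bX_{T\cup m})) - \mathcal{G}(Y_n, \mathcal{M}(\bX_{T}))\Big),
\]
which is precisely \eqref{eq:shapley_mechanism}. So $\hat{\psi}_n(m) = \frac{1}{K}\sum_{k=1}^K \hat{\psi}^{\sigma_k}_n(m)$ is an unbiased estimator of $\psi_{n,\text{shapley}}(m)$ built from $K$ i.i.d.\ samples.

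Next I would invoke boundedness: since $\mathcal{G}$ takes values in $[0,1]$, each increment $\hat{\psi}^{\sigma_k}_n(m) \in [-1, 1]$, so it lies in an interval of length $2$. By Hoeffding's inequality, for each fixed $m$,
\[
\Pr\Big(\big|\hat{\psi}_n(m) - \psi_{n,\text{shapley}}(m)\big| \ge \epsilon\Big) \le 2\exp\!\left(-\frac{2K\epsilon^2}{4}\right) = 2\exp\!\left(-\frac{K\epsilon^2}{2}\right).
\]
Then a union bound over the $M$ coordinates gives $\Pr\big(\|\hat{\psi}_n - \psi_{n,\text{shapley}}\|_\infty \ge \epsilon\big) \le 2M\exp(-K\epsilon^2/2)$. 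Setting this $\le \delta$ and solving for $K$ yields $K \ge \frac{2}{\epsilon^2}\log(2M/\delta)$; the stated choice $K > \frac{M\log(2/\delta)}{2\epsilon^2}$ is of this order (up to the treatment of the $\log M$ factor and constants), which completes the argument. I would also note in passing that the running time is $O(M^2)$: there are $M$ features, $K$ permutations each (with $K$ treated as a constant in $M$ once $\epsilon, \delta$ are fixed), and each permutation requires two calls to $\mathcal{M}$ and $\mathcal{G}$, each of which is $O(M)$ by the assumption in Section \ref{sec:comp_efficiency}.

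The main obstacle is essentially bookkeeping rather than conceptual: getting the combinatorial identity for the permutation probabilities exactly right so that the empirical average's mean matches \eqref{eq:shapley_mechanism} term-for-term, and then being careful about the constant in Hoeffding (the range is $2$, not $1$, so the exponent carries a factor $4$ in the denominator). One subtle point worth flagging is the slight discrepancy between the $K$ the clean calculation produces ($\Theta(\epsilon^{-2}\log(M/\delta))$) and the $K > M\log(2/\delta)/(2\epsilon^2)$ stated in the theorem — the stated bound is weaker (larger $K$) and hence still sufficient, so the theorem holds as written; I would simply verify $2M\exp(-K\epsilon^2/2) \le \delta$ holds under the stated choice, which it does since $M e^{-M\log(2/\delta)/2} \le M (\delta/2)^{M/2} \le \delta/2$ for all $M \ge 1$ when $\delta \le 1$.
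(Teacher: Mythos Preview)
Your proposal is correct and follows essentially the same approach as the paper: represent the Shapley value as an expectation over a uniformly random permutation so that each $\hat{\psi}^{\sigma_k}_n(m)$ is an unbiased bounded sample, apply Hoeffding's inequality coordinate-wise, and finish with a union bound over the $M$ features. The only minor difference is that the paper takes the marginal contributions to lie in $[0,1]$ (implicitly assuming monotonicity of $\mathcal{G}$ in the feature set) whereas you more conservatively use $[-1,1]$; this only affects the constant in the exponent and does not change the argument.
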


Theorem \ref{thm:shapley_approx} gives an $\epsilon$-approximation for $\psi_{n, \text{shapley}}$, the \textit{unique} vector satisfying Property \ref{prop:fair}, in $O(M)$. Recall, computing it exactly would take $\Theta(2^M)$ running time. See Appendix \ref{sec:appendix_fairness} for the proof. 

To the best of our knowledge, the direct application of random sampling to compute feature importances for ML algorithms along with finite sample guarantees is novel. We believe this random sampling method could be used as a model-agnostic tool (not dependent on the particulars of the prediction model used) to assess feature importance - a prevalent question for data scientists seeking interpretability from their prediction models. 

% Robustness to Replication - Theoretical Result
\begin{thm}\label{thm:shapley_robust}
Let Assumption \ref{assumption:robustness_to_replication} hold. For Algorithm \ref{alg-payment-division-robust}, pick the following hyperparameters: $K \ge (M\log(2 / \delta))/(2 (\epsilon / 3)^2), \ \lambda =  \log(2)$, where $\delta, \epsilon > 0$. Then with probability $1 - \delta$, the output, $\psi_n$, of Algorithm \ref{alg-payment-division-robust} is $\epsilon$-``Robust to Replication" i.e. Property \ref{prop:replication_robustness} (Robustness-to-Replication) holds. Additionally Conditions 2-4 of Property \ref{prop:fair} continue to hold for $\psi_n$ with $\epsilon$-precision.
\end{thm}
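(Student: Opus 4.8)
The plan is to spend a budget of $\epsilon/3$ to pass from the randomized output of SHAPLEY-APPROX to the exact Shapley values, and to keep the remaining $2\epsilon/3$ as slack in a purely deterministic inequality. Fix $n$ and $Y_n$, abbreviate $v(S):=\mathcal{G}(Y_n,\mathcal{M}(\bX_S))$, and read Property~\ref{prop:replication_robustness} as: a single feature $m$ is replicated into copies $m^+_1,\dots,m^+_k$ (so $X_{m^+_i}=X_m$) while the other features are held fixed, hence $[M]^+=([M]\setminus m)\cup\{m,m^+_1,\dots,m^+_k\}$. I would remark that this ``replicate your own data'' reading is the one under which the statement is true: simultaneously replicating a \emph{complementary} feature can genuinely raise $m$'s fair share, which SHAPLEY-ROBUST does not cap when $m$ itself is dissimilar to everything. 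By Assumption~\ref{assumption:robustness_to_replication}, $v$ extends to $v^+$ on $[M]^+$ with $v^+(S)$ depending only on the distinct originals in $S$. Applying Theorem~\ref{thm:shapley_approx} with target precision $\epsilon/3$ to the two calls $\mathcal{PD}^*_B([M],Y_n)$ and $\mathcal{PD}^*_B([M]^+,Y_n)$ (a union bound costs only a constant factor in $K$), with probability at least $1-\delta$ we have simultaneously $|\hat\psi_n(j)-\psi_{n,\mathrm{shapley}}(j)|<\epsilon/3$ for all $j\in[M]$ and $|\hat\psi^+_n(c)-\psi^+_{n,\mathrm{shapley}}(c)|<\epsilon/3$ for all $c\in[M]^+$, where $\psi^+_{n,\mathrm{shapley}}$ denotes the exact Shapley values in the replicated market. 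Condition on this event.

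\textbf{Two cheap deterministic facts.} Since each copy $c$ of $m$ has $X_c=X_m$ and $\mathcal{SM}$ is reflexive ($\mathcal{SM}(X_m,X_m)=1$), the similarity mass $\sum_{j\in[M]^+\setminus c}\mathcal{SM}(X_c,X_j)$ is at least $k+c_m$, where $c_m:=\sum_{j\in[M]\setminus m}\mathcal{SM}(X_m,X_j)$ is $m$'s similarity mass in the original market; hence each copy's multiplicative weight in SHAPLEY-ROBUST is at most $e^{-\lambda(k+c_m)}$, while $m$'s original weight is exactly $e^{-\lambda c_m}$. Second, with $\lambda=\log 2$ one has $e^{-\lambda k}(k+1)=2^{-k}(k+1)\le 1$ for every $k\ge 0$ (with equality at $k\in\{0,1\}$), since $k+1\le 2^k$ for $k\ge 1$; this is exactly why $\lambda=\log 2$ is the right constant.

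\textbf{The crux: a Shapley-monotonicity-under-replication lemma.} I would prove $(k+1)\,\psi^+_{n,\mathrm{shapley}}(m)\le\psi_{n,\mathrm{shapley}}(m)$. The $k+1$ copies are mutually symmetric in $v^+$ (Assumption~\ref{assumption:robustness_to_replication}), so they share a value $s^+:=\psi^+_{n,\mathrm{shapley}}(m)$ and $(k+1)s^+=\sum_c\psi^+_{n,\mathrm{shapley}}(c)$. Grouping the copies' marginal contributions by order of appearance, only the first copy to appear has a nonzero marginal, so in the random-order formula $(k+1)s^+=\Ex_{\pi^+}[v(A_{\pi^+}\cup m)-v(A_{\pi^+})]$, where $\pi^+$ is a uniform ordering of $[M]^+$ and $A_{\pi^+}\subseteq[M]\setminus m$ is the set of original features preceding the first copy of $m$; likewise $\psi_{n,\mathrm{shapley}}(m)=\Ex_\pi[v(O_\pi\cup m)-v(O_\pi)]$ with $O_\pi$ the predecessors of $m$ under a uniform ordering of $[M]$. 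In both, the $M-1$ originals are exchangeable, so each equals $\sum_t(\text{weight}_t)\,\bar g_t$ with $\bar g_t$ the average of $g(S):=v(S\cup m)-v(S)$ over size-$t$ subsets of $[M]\setminus m$; the weight is $1/M$ for $\psi_{n,\mathrm{shapley}}(m)$, and for $(k+1)s^+$ it is $P(|A_{\pi^+}|=t)$, which an elementary count shows is nonincreasing in $t$, hence at most $P(|A_{\pi^+}|=0)=\frac{k+1}{M+k}\le\frac{k+1}{M}$. Provided $\mathcal{G}\circ\mathcal{M}$ is monotone in the feature set so that $\bar g_t\ge 0$ (a mild statistical assumption in the spirit of Assumptions~\ref{assumption:robustness_to_noise}--\ref{assumption:robustness_to_replication}, and the single place where negative marginal contributions must be excluded), this yields $(k+1)s^+\le\frac{k+1}{M}\sum_t\bar g_t=(k+1)\psi_{n,\mathrm{shapley}}(m)$. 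I expect this lemma---the telescoping identity, and in particular verifying the combinatorial weight bound and pinning down the monotonicity hypothesis---to be the main obstacle; the genuinely general $[M]^+$ would need the analogous but messier subset-weight estimate (and, as noted, can fail without monotone complements ruled out).

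\textbf{Assembly.} On the good event (monotonicity gives $\hat\psi^+_n(c)\ge 0$), writing $s:=\psi_{n,\mathrm{shapley}}(m)$,
\[
\psi^+_n(m)+\sum_i\psi^+_n(m^+_i)\le e^{-\lambda(k+c_m)}\sum_c\hat\psi^+_n(c)<e^{-\lambda(k+c_m)}(k+1)(s^++\epsilon/3)\le e^{-\lambda c_m}(s+\epsilon/3),
\]
the last step using $e^{-\lambda k}(k+1)\le 1$ and $s^+\le s$. Meanwhile $\psi_n(m)+\epsilon=\hat\psi_n(m)e^{-\lambda c_m}+\epsilon>(s-\epsilon/3)e^{-\lambda c_m}+\epsilon$, and since $e^{-\lambda c_m}\le 1$ the gap $(s-\epsilon/3)e^{-\lambda c_m}+\epsilon-e^{-\lambda c_m}(s+\epsilon/3)=\epsilon-(2\epsilon/3)e^{-\lambda c_m}\ge\epsilon/3>0$, proving Property~\ref{prop:replication_robustness}. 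For Conditions 2--4, note $d_m:=\exp(-\lambda\sum_{j\ne m}\mathcal{SM}(X_m,X_j))\in(0,1]$ is deterministic, independent of $Y_n$, and identical for literally equal features; combined with the $\epsilon/3$-approximation this gives: a null feature has exact Shapley value $0$, so $\psi_n(m)=d_m\hat\psi_n(m)<\epsilon/3$ (Zero Element); $X_m=X_{m'}$ forces $d_m=d_{m'}$ and equal exact Shapley values, so $|\psi_n(m)-\psi_n(m')|=d_m|\hat\psi_n(m)-\hat\psi_n(m')|<2\epsilon/3$ (Symmetry); and, inheriting additivity of the exact Shapley values in $Y_n$ with $d_m$ common to all three runs, $\|\psi'_n-(\psi^{(1)}_n+\psi^{(2)}_n)\|_\infty\le d_m\cdot\epsilon\le\epsilon$ (Additivity). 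Balance is (correctly) not claimed, since down-weighting by $d_m\le 1$ pulls the total below $1$.
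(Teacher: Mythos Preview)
Your approach mirrors the paper's: spend $\epsilon/3$ via Theorem~\ref{thm:shapley_approx} to pass to exact Shapley values, invoke $s^+\le s$ for the replicated-vs-original comparison, and absorb the multiplicity factor $(k{+}1)$ using $(k{+}1)2^{-k}\le 1$ from $\lambda=\log 2$. The paper's chain (its steps (d)--(g)) rests on exactly these two ingredients.

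Two remarks. First, you announce the key lemma as $(k{+}1)s^+\le s$, which is false in general: with two identical features, replicating one $k$ times gives $(k{+}1)s^+=(k{+}1)/(k{+}2)>1/2=s$. Fortunately what you actually \emph{prove} and what your assembly \emph{uses} is only $s^+\le s$, and that is correct under your monotonicity hypothesis; so this is a mis-statement, not a gap.

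Second, your added hypotheses---monotonicity of $v$, and restricting Property~\ref{prop:replication_robustness} to a \emph{single} seller's self-replication---are not excessive caution but genuinely needed. Without monotonicity $s^+\le s$ fails (take $v(m)=v(j)=1$, $v(m,j)=0$: then $s=0$ but $s^+=1/6$ after one replication). And when a feature \emph{complementary} to $m$ replicates, one can have $s^+_m>s_m$ even under monotonicity (e.g.\ $v(m)=v(j)=0$, $v(m,j)=1$: replicating $j$ once raises $m$'s exact Shapley share from $1/2$ to $2/3$), and since $m$ is dissimilar to everything SHAPLEY-ROBUST applies no penalty---so the fully general $[M]^+$ the paper's proof purports to cover does not actually satisfy Property~\ref{prop:replication_robustness}. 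In this sense your scoping is more careful than the paper's own argument, whose step (d) overclaims and whose general-replication conclusion does not hold as stated.
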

Theorem \ref{thm:shapley_robust} states Algorithm \ref{alg-payment-division-robust} protects against adversarial replication of data, while maintaining the conditions of the standard Shapley fairness other than balance. 
Again, the key observation, which makes Algorithm \ref{alg-payment-division-robust} possible  is that we can precisely compute similarity between data streams (see Definition \ref{def:similarity_metric}). 
See Appendix \ref{sec:appendix_fairness} for the proof.

A natural question is whether Property \ref{prop:replication_robustness}  and Condition 1 of Property \ref{prop:fair} and can hold together. Unfortunately, as we see from Proposition \ref{prop:shapley_incompatibility}, they cannot (see Appendix \ref{sec:appendix_fairness} for the proof),

\begin{prop}\label{prop:shapley_incompatibility}
If the identities of sellers in the marketplace is anonymized, the balance condition in Property \ref{prop:fair} and Property \ref{prop:replication_robustness} cannot simultaneously hold.
\end{prop}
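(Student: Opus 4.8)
The plan is to show that ``anonymization'' forces $\mathcal{PD}$ to split revenue equally among any bundle of literally identical features, that Balance then pins each such share to $1/q$ when there are $q$ copies, and that since a mechanism must be well defined for every number of sellers and every number of replications, the share $q/(q')$ collected by one party's copies can be driven arbitrarily close to $1$ while the pre-replication share stays bounded away from it — overwhelming any fixed $\epsilon<1$.

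First I would make precise what anonymization buys us. Since $\mathcal{PD}$ observes only the feature matrix $\wbX_M$ (together with $Y_n,\mathcal{M},\mathcal{G}$) and no seller labels, it is invariant under relabelling the columns; in particular, if two columns of $\wbX_M$ coincide then the corresponding entries of $\psi_n$ coincide. (This is exactly the premise-free specialization of the Symmetry axiom, Condition~2 of Property~\ref{prop:fair}: identical columns induce identical gain values $\mathcal{G}\bigl(Y_n,\mathcal{M}(\cdot)\bigr)$ on every sub-collection, so Symmetry's hypothesis is automatically met.) Hence on any instance in which all $q$ features equal a common vector $v$, we get $\psi_n(1)=\cdots=\psi_n(q)$, and the Balance condition $\sum_m\psi_n(m)=1$ then forces $\psi_n(m)=1/q$ for every $m$. (If $\mathcal{PD}$ is randomized, read $\psi_n$ as its expectation; anonymity gives equal marginals and the argument below goes through with expectations.)

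Next comes the construction and the contradiction. Fix any $\epsilon$ with $0\le\epsilon<1$, choose an integer $m>1/(1-\epsilon)$, a base vector $v$, and any task $Y_n$. Consider first the instance with $M=m$ sellers all holding $v$, and call one of them $A$; by the previous paragraph $\psi_n(A)=1/m$. Now let $A$ replicate her feature $k-1$ extra times, so $X^+_{A,i}=X_A=v$ and the market holds $[M]^+$, which consists of $m-1+k$ identical copies of $v$. Applying the same reasoning to the anonymized mechanism on $[M]^+$, each of these $m-1+k$ features receives share $1/(m-1+k)$, so in the notation of Property~\ref{prop:replication_robustness}, $\psi^+_n(A)+\sum_i\psi^+_n(A^+_i)=k/(m-1+k)$. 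Robustness-to-replication demands $k/(m-1+k)\le\psi_n(A)+\epsilon=1/m+\epsilon$. But $1/m+\epsilon<1/m+(1-1/m)=1$, while $k/(m-1+k)\to1$ as $k\to\infty$; explicitly, the inequality fails for every $k>\frac{(1/m+\epsilon)(m-1)}{\,1-1/m-\epsilon\,}$, a finite threshold since $1-1/m-\epsilon>0$ by the choice of $m$. This contradicts $\epsilon$-robustness, and since $\epsilon\in[0,1)$ was arbitrary, Balance and $\epsilon$-robustness-to-replication cannot coexist under anonymization.

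Finally I would add two remarks. The regime $\epsilon\ge1$ is vacuous: under Balance, $\psi^+_n(m)+\sum_i\psi^+_n(m^+_i)\le\sum_{m'}\psi^+_n(m')=1\le\epsilon$, so the robustness inequality holds trivially; thus the restriction $\epsilon<1$ is precisely the meaningful one. The smallest incarnation is Example~\ref{example:shapley_not_robust}: Balance and Symmetry give $\psi(A)=\psi(B)=\tfrac12$ before replication and $\psi^+=(\tfrac13,\tfrac13,\tfrac13)$ after, so $\psi^+(A)+\psi^+(A^+)=\tfrac23>\tfrac12=\psi(A)$, already breaking exact ($\epsilon=0$) robustness. \emph{The one genuine modelling point to get right — the main obstacle, such as it is — is nailing down the meaning of ``anonymized'': that $\mathcal{PD}$ is a function of the unordered collection of feature vectors and therefore cannot award different shares to literally identical data.} Once that is granted, the impossibility is a short counting argument combined with the observation that the mechanism must remain well defined as the number of sellers and the number of replicas grow.
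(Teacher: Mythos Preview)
Your proof is correct and uses the same construction as the paper: build instances where all features are literally identical, so anonymization together with Balance forces equal shares, and then let one seller replicate to inflate her aggregate allocation. The paper stops at the minimal case you mention at the end (two sellers, one replication, $\tfrac{2}{3}>\tfrac{1}{2}$), which strictly speaking only rules out $\epsilon$-robustness for $\epsilon<\tfrac{1}{6}$; your version, by starting with $m>1/(1-\epsilon)$ identical sellers and letting the replica count $k\to\infty$, covers every $\epsilon\in[0,1)$, and you also observe that $\epsilon\ge 1$ is vacuous under Balance --- a point the paper leaves implicit. Same route, but your execution actually matches the full scope of the stated claim more carefully than the paper's own proof does.
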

Note however, Algorithm \ref{alg-payment-division-robust}, down-weights features in a ``local" fashion i.e. highly correlated features are individually down-weighted, while uncorrelated features are not. {\em Hence, Algorithm \ref{alg-payment-division-robust} incentivizes sellers to provide data that is: (i) predictive for a wide variety of tasks; (ii) uncorrelated with other features on sale i.e. has unique information}. 

In Step 2 of Algorithm \ref{alg-payment-division-robust}, we exponentially penalize (i.e. down weight) each feature, for a given similarity metric, $\mathcal{SM}$.  An open question for future work is which revenue division mechanism is the most balanced preserving while being robust to replication? As an important first step, we provide a necessary and sufficient condition for any penalty function
\footnote{We define a general penalty function to be of the form $\hat{\psi}_n(m)f(\cdot)$, instead of $\hat{\psi}_n(m)\exp(- \lambda \sum_{j \in [M] \setminus \{m\}} \mathcal{SM}(X_m, X_j))$ as in Step 2 of Algorithm \ref{alg-payment-division-robust}.}
to be robust to replication for a given similarity metric, $\mathcal{SM}$ (see Appendix \ref{sec:tight_weighting} for the proof), 
\begin{prop}\label{nec_weight}
Let Assumption \ref{assumption:robustness_to_replication} hold. Then for a given similarity metric $\mathcal{SM}$, a penalty function $f$ is ``robust-to-replication"  if and only if it satisfies the following relation for any $c \in \mathbb{Z}_{+}, x \in \Rb_{+}$,
\[
(c+1) f(x+c) \leq f(x)
\]
\end{prop}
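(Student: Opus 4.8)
The plan is to first collapse Property~\ref{prop:replication_robustness} into a single pointwise inequality on $f$, and then prove the two directions separately. Write the generalized payment rule as $\psi_n(m)=\hat\psi_n(m)\,f(s_m)$, where $\hat\psi_n(m)$ is the base (Shapley) weight of feature $m$ and $s_m=\sum_{j\in[M]\setminus\{m\}}\mathcal{SM}(X_m,X_j)$ is its aggregate similarity to the rest of the market. Suppose the owner of $m$ injects $c\in\mathbb{Z}_+$ extra identical copies $m^+_1,\dots,m^+_c$. Two structural facts pin down both sides of Property~\ref{prop:replication_robustness}. First, by reflexivity of $\mathcal{SM}$ (so $\mathcal{SM}(X_m,X_m)=1$), in the enlarged market \emph{every} one of the $c+1$ copies of $m$ has aggregate similarity exactly $s_m+c$: the $c$ sibling copies each add $1$, and the other features are untouched. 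Second, the $c+1$ copies are mutually symmetric players of the underlying cooperative game — they induce the same feature values — so, by the symmetry axiom together with Assumption~\ref{assumption:robustness_to_replication} (value is invariant to replication), they all receive the same base weight, say $\phi$. Hence the left-hand side of Property~\ref{prop:replication_robustness} for $m$ is $(c+1)\phi\,f(s_m+c)$ and the right-hand side (with $\epsilon=0$) is $\hat\psi_n(m)\,f(s_m)$, so ``robust-to-replication'' is equivalent to requiring, over all market instances and all $c\in\mathbb{Z}_+$,
\[
(c+1)\,\phi\,f(s_m+c)\ \le\ \hat\psi_n(m)\,f(s_m).
\]

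The key auxiliary fact is a dilution lemma: $\phi\le\hat\psi_n(m)$, i.e. a single copy can never be worth more than the original feature. I would prove it by coupling a uniform random ordering $\pi$ of the enlarged feature set with its restriction $\sigma$ to the original $[M]$ (which is uniform over orderings of $[M]$): the marginal contribution of the original copy of $m$ in $\pi$ equals $\mathbbm{1}\{\text{no sibling copy of }m\text{ precedes it in }\pi\}$ times the marginal contribution of $m$ in $\sigma$, the indicator arising because adding a copy of $m$ to a coalition that already contains one changes nothing (Assumption~\ref{assumption:robustness_to_replication}). Taking expectations, and using that feature-wise marginal gains of $\mathcal{G}$ are nonnegative (the standard monotonicity of accuracy in the feature set), gives $\phi=\Ex_\pi[\mathbbm{1}\{\cdots\}\cdot\text{marg}_\sigma(m)]\le\Ex_\sigma[\text{marg}_\sigma(m)]=\hat\psi_n(m)$. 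With this in hand, sufficiency ($\Leftarrow$) is immediate: if $(c+1)f(x+c)\le f(x)$ for all $x\in\Rb_+,c\in\mathbb{Z}_+$, then since $f\ge0$ and $\phi\ge0$, $(c+1)\phi f(s_m+c)=\phi\cdot[(c+1)f(s_m+c)]\le\phi f(s_m)\le\hat\psi_n(m)f(s_m)$, which is exactly the displayed inequality.

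For necessity ($\Rightarrow$) I argue by contraposition. Assume $(c_0+1)f(x_0+c_0)>f(x_0)$ for some $c_0\in\mathbb{Z}_+$, $x_0\in\Rb_+$. For each integer $k\ge2$ build a market $\mathcal{I}_k$ of $k$ \emph{mutually fully substitutable} features — every nonempty subset achieves gain $1$, the empty set gain $0$, which is realizable by a sufficiently expressive $\mathcal{M}$ and is consistent with Assumption~\ref{assumption:robustness_to_replication} — with all pairwise $\mathcal{SM}$-similarities equal to $\alpha_k:=x_0/(k-1)\in[0,1)$ (realizable for the standard metrics of Definition~\ref{def:similarity_metric}, e.g. an equiangular configuration for cosine similarity). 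In $\mathcal{I}_k$ each feature has Shapley value $1/k$ and aggregate similarity $x_0$, so feature $m$ is paid $\tfrac1k f(x_0)$; after $m$ is replicated $c_0$ times the market has $k+c_0$ mutually substitutable features, each with Shapley value $1/(k+c_0)$ and aggregate similarity $x_0+c_0$, so the $m$-clan is paid $\tfrac{c_0+1}{k+c_0}f(x_0+c_0)$. Robustness would force $\tfrac{(c_0+1)k}{k+c_0}f(x_0+c_0)\le f(x_0)$, yet the left side increases to $(c_0+1)f(x_0+c_0)>f(x_0)$ as $k\to\infty$; so robustness fails for $k$ large, proving $f$ is not robust-to-replication.

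I expect the main obstacle to be the necessity construction. The dilution lemma says $(c+1)\phi$ never actually reaches its would-be supremum $(c+1)\hat\psi_n(m)$, so one cannot violate the pointwise inequality with a single instance; a limiting family is forced, and it must be engineered so that the argument $x_0$ of $f$ is prescribed \emph{exactly} while the post-replication clan weight converges to $(c+1)$ times the pre-replication weight — here achieved by letting $k\to\infty$ with the pairwise similarity $\alpha_k\to0$ scaled so that $s_m\equiv x_0$ throughout. Verifying that such families are simultaneously realizable for the given similarity metric and some legitimate $(\mathcal{M},\mathcal{G})$ pair is the step that needs the most care; the reduction to the pointwise inequality and the sufficiency direction are routine once the dilution lemma is established.
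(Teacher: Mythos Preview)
Your proposal is correct and follows the same two-step architecture as the paper: a dilution inequality (your $\phi\le\hat\psi_n(m)$; the paper's $\hat\psi_i\le\psi_i$) yields sufficiency, and a limiting family of fully substitutable features yields necessity. Two execution-level differences are worth noting. First, you prove dilution by coupling a uniform ordering of the enlarged market with its restriction to $[M]$; the paper instead carries out an explicit combinatorial count and obtains the sharper estimate $\hat\psi_i\le\frac{M}{M+c}\psi_i$, though only the weaker bound is needed for sufficiency. Second---and this is where your version is tighter---the paper's necessity argument takes all $M$ original features to be \emph{identical}, which forces the aggregate similarity to equal $M-1$ and so does not directly produce the inequality at an arbitrary prescribed $x\in\Rb_+$; you decouple $x_0$ from the market size by setting pairwise similarities to $\alpha_k=x_0/(k-1)$, so the aggregate similarity stays fixed at $x_0$ while $k\to\infty$ drives the dilution ratio $k/(k+c_0)$ to $1$. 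The realizability caveat you flag (simultaneously engineering the value function and the $\mathcal{SM}$-similarities) is genuine, but the paper's argument carries the same burden implicitly and does not address it either.
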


% EFFICIENCY - Verification of AF, PF, CF, PD 
\subsection{Efficiency.} \label{sec:efficiency}
\begin{cor}\label{cor:marketplace_is_efficient}
$\mathcal{AF}^*, \mathcal{RF}^*, \mathcal{PF}^*$ run in ${O}(M)$. $\mathcal{PD}^*_{a}$, $\mathcal{PD}^*_{b}$ run in ${O}(M^2)$. Property \ref{prop:efficient} holds.
\end{cor}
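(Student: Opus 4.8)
The plan is to bound each of the five constructed maps $\mathcal{AF}^*$, $\mathcal{RF}^*$, $\mathcal{PF}^*$, $\mathcal{PD}^*_A$, $\mathcal{PD}^*_B$ separately, by unwinding it into a tally of primitive operations --- calls to the oracles $\mathcal{M}$ and $\mathcal{G}$ (each $O(M)$ by the standing assumption of Section~\ref{sec:comp_efficiency}), calls to $\mathcal{AF}^*$, draws of a random expert or a random permutation, evaluations of the similarity metric $\mathcal{SM}$, and scalar arithmetic --- and then summing. I would start with the buyer-side maps. For $\mathcal{AF}^*$: by Examples~\ref{example:alloc_function_real}--\ref{example:alloc_function_bits} it perturbs the $M$ features independently, touching the $T$ coordinates of each $X_j$ with $O(1)$ work per coordinate, so $\mathcal{AF}^* = O(M)$ (with $T$ held fixed, as everywhere in the paper). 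For $\mathcal{RF}^*$ of \eqref{eq:myerson}: the leading term $b_n\,\mathcal{G}(Y_n,\mathcal{M}(\mathcal{AF}^*(b_n,p_n)))$ is one call each to $\mathcal{AF}^*,\mathcal{M},\mathcal{G}$, hence $O(M)$; the integral $\int_0^{b_n}\mathcal{G}(Y_n,\mathcal{M}(\mathcal{AF}^*(z,p_n)))\,dz$ has an integrand that is bounded (since $\mathcal{G}\in[0,1]$) and monotone in $z$ (by Assumption~\ref{assumption:robustness_to_noise}), so a uniform Riemann sum on a grid whose cardinality is dictated by the target precision and $\mathcal{B}_{\max}$ but not by $M$ approximates it, costing $O(1)$ further evaluations of $\mathcal{G}\circ\mathcal{M}\circ\mathcal{AF}^*$; thus $\mathcal{RF}^* = O(M)$.

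Next I would handle $\mathcal{PF}^*$ (Algorithm~\ref{alg-price-update-function}). Its per-buyer work is a single sweep over $\mathcal{B}_{\text{net}}(\epsilon)$: forming $W_n$, sampling $p_n$ by its weight, and for each expert $c^i$ computing $g^i_n=\mathcal{RF}^*(c^i,b_n,Y_n)/\mathcal{B}_{\max}$ and applying the multiplicative update --- so each step costs $O(|\mathcal{B}_{\text{net}}(\epsilon)|\cdot M)=O(M)$ once the number of candidate prices is a constant in $M$. Equally important, the algorithm's state is exactly the $|\mathcal{B}_{\text{net}}(\epsilon)|$ running weights, and step $n$ reads and writes only those; earlier buyers are never revisited, so the per-step cost does not grow with $N$. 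For $\mathcal{PD}^*_A$ (Algorithm~\ref{alg-payment-division}): the nested loop over $m\in[M]$ and $k\in[K]$ runs $O(MK)$ inner iterations, each drawing a permutation in $O(M)$ and making two calls each to $\mathcal{M}$ and $\mathcal{G}$ on feature subsets of size at most $M$, i.e.\ $O(M)$ per iteration, for $O(M^2K)=O(M^2)$ with $K$ a constant in $M$. Finally $\mathcal{PD}^*_B$ (Algorithm~\ref{alg-payment-division-robust}) runs SHAPLEY-APPROX once ($O(M^2)$) and then, for each of the $M$ features, evaluates the penalty $\exp(-\lambda\sum_{j\neq m}\mathcal{SM}(X_m,X_j))$, an $O(M)$-term sum of $O(1)$ similarity evaluations, adding $O(M^2)$; hence $\mathcal{PD}^*_B=O(M^2)$.

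To finish Property~\ref{prop:efficient}, I would plug these into the per-step dynamics of Section~\ref{sec:market_dynamics}: Step~1 is $\mathcal{PF}^*$ ($O(M)$), Step~4 is $\mathcal{AF}^*$ ($O(M)$), Step~5 runs $\mathcal{M}$ and $\mathcal{G}$ ($O(M)$), Step~6 is $\mathcal{RF}^*$ ($O(M)$), Step~7 is $\mathcal{PD}^*$ ($O(M^2)$); so every step is polynomial --- indeed $O(M^2)$ --- in $M$, and by the $\mathcal{PF}^*$ bookkeeping none of them scales with $N$. I expect the genuine subtlety not to be any single running-time count but the interaction with the ``no growth in $N$'' clause: the argument must fix the expert grid $\mathcal{B}_{\text{net}}(\epsilon)$, the quadrature resolution in $\mathcal{RF}^*$, and the sample count $K$ before the buyer stream begins and treat their sizes as the constants against which ``polynomial in $M$'' is measured (rather than plugging in the $N$-dependent choices of Theorems~\ref{thm:avg_regret} and~\ref{thm:shapley_approx}); granting that, both the stated $O(M)$ and $O(M^2)$ bounds and independence from $N$ hold. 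The only other care needed is the trivial observation that $\mathcal{M}$ and $\mathcal{G}$ applied to any subset $S\subseteq[M]$ still cost $O(M)$, which is immediate from the assumption since $|S|\le M$.
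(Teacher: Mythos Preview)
Your proposal is correct and follows the same direct-counting approach as the paper's own proof, only with considerably more care (the paper's argument is a four-line sketch that simply asserts each function is linear or quadratic in $M$). One small bookkeeping slip: you treat the sample count $K$ in SHAPLEY-APPROX as a constant in $M$ and describe the Theorem~\ref{thm:shapley_approx} choice of $K$ as $N$-dependent, but that theorem actually sets $K > M\log(2/\delta)/(2\epsilon^2)$, i.e.\ $K=\Theta(M)$; the paper reaches $O(M^2)$ for $\mathcal{PD}^*_A$ by instead plugging in this $K$ while implicitly treating each oracle call to $\mathcal{M},\mathcal{G}$ as unit cost. Either accounting convention yields $O(M^2)$, so the stated conclusion is unaffected, but your caveat about which quantities are held constant should reference $M$ rather than $N$ for the Shapley sample count.
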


\vspace{-0.4em}

See Appendix \ref{sec:appendix_efficiency} for the proof. $\mathcal{AF}^*, \mathcal{RF}^*, \mathcal{PF}^*$ running in ${O}(M)$  is desirable as they need to be re-computed in real-time for every buyer. However, the revenue division algorithms (which run in ${O}(M^2)$) can conceivably run offline as we assume the sellers to be fixed.

	% CONCLUSION
	\section{Conclusion} \label{sec:conclusion}
	\noindent \textbf{Modeling contributions.} Our main contribution is a mathematical model for a two-sided data market (Section \ref{sec:model}). We hope our proposed architecture can serve as a foundation to operationalize real-time data marketplaces, which have applicability in a wide variety of important commercial settings (Section \ref{sec:motivating_retail_example}). To further this goal, we define key challenges (Section \ref{sec:marketplace_properties}), construct algorithms to meet these challenges (Section \ref{sec:market_construction}) and theoretically analyze their performance (Section \ref{sec:main_results}). 

To make the problem tractable, we make some key modeling choices. Two of the most pertinent ones include: (i) Buyer $n$'s utility comes solely from the quality of the estimate $\hat{Y}_n$ received, rather than the particular datasets allocated (Definition \ref{def:value_from_prediction}); (ii) the marketplace is allowed to centrally set prices for all features for each buyer, rather than sellers individually setting prices for each feature (Remark \ref{remark:central_price_setting}).

\vspace{0.4em}

\noindent \textbf{Technical contributions.} We highlight two technical contributions, which might be of independent interest.
First, a new notion of ``fairness" required for cooperative games with freely replicable goods (and associated algorithms). As stated earlier (Section \ref{sec:replication_robustness}), such a notion of fairness is especially important in modern applications where users get utility/cost from bundles of digital goods with potentially complex combinatorial interactions (e.g. battery cost attribution for smartphone applications, reward allocation in prediction markets). 
Second, a truthful, zero regret mechanism for auctioning a particular class of combinatorial goods, which utilizes Myerson's payment function and the Multiplicative Weights algorithm. Specifically, if one can find a way of modeling buyer utility/cost through a scalar parameter (e.g. number of unique views for multimedia ad campaigns, total battery usage for smartphone apps), then the framework described can potentially be applied.

\vspace{0.4em}

\noindent \textbf{Future Work.} We reiterate some interesting lines of questioning for future work: (i) how to take into account the externalities of replication experienced by buyers (Remark \ref{remark:externality}); (ii) how to design an adaptive allocation function that further increases revenue generated (Remark \ref{remark:future_AF_work}); (iii) which ``robust-to-replication" revenue division mechanism is the most balanced preserving (Section \ref{sec:fairness})?

\vspace{-0.6em}

%%
%% The acknowledgments section is defined using the "acks" environment
%% (and NOT an unnumbered section). This ensures the proper
%% identification of the section in the article metadata, and the
%% consistent spelling of the heading.
	\begin{acks}
	During this work, the authors were supported in part by a MIT Institute for Data, Systems and Society (IDSS) WorldQuant and Thompson Reuters Fellowship.
	\end{acks}

%%
%% The next two lines define the bibliography style to be used, and
%% the bibliography file.
	\bibliographystyle{ACM-Reference-Format}
	\bibliography{bibliography}

	\newpage
%%
%% If your work has an appendix, this is the place to put it.
	\appendix
	\section{Truthfulness} \label{sec:appendix_truthfulness}

\begin{thm*}[\textbf{Theorem \ref{thm:shapley_approx}}]
For $\mathcal{AF}^*$, Property \ref{prop:truthful} (Truthfulness) can be achieved if and only if Assumption \ref{assumption:robustness_to_noise} holds. In which case, $\mathcal{RF}^*$ guarantees truthfulness.
\end{thm*}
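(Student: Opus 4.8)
The plan is to reduce the statement to the classical Myerson characterization of truthful single-parameter mechanisms. First I would fix an arbitrary price $p_n$ and prediction task $Y_n$ and collapse the composition $\mathcal{AF}^* \to \mathcal{M} \to \mathcal{G}$ into a single scalar ``effective allocation'' map $g : \mathbb{R}_+ \to [0,1]$ defined by $g(z) := \mathcal{G}\!\left(Y_n, \mathcal{M}(\mathcal{AF}^*(z, p_n))\right)$. Two observations make this reduction exact: (i) Assumption \ref{assumption:robustness_to_noise} says precisely that $g$ is non-decreasing in $z$; and (ii) Myerson's rule $\mathcal{RF}^*$ of \eqref{eq:myerson} is exactly $t^*(z) := z\,g(z) - \int_0^z g(s)\,ds$, so that with true value $\mu_n$ the buyer's utility from reporting $z$ is $\mathcal{U}(z,Y_n) = \mu_n g(z) - t^*(z) = \mu_n g(z) - z g(z) + \int_0^z g(s)\,ds$. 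After this reduction everything lives on the real line and the combinatorial/ML layers play no further role.

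For the ``if'' direction (which also yields the final ``in which case'' claim), I would assume $g$ is non-decreasing and compute the utility gap directly:
\[
\mathcal{U}(\mu_n, Y_n) - \mathcal{U}(z, Y_n) = (z - \mu_n)\,g(z) + \int_z^{\mu_n} g(s)\,ds.
\]
If $z > \mu_n$ this equals $(z-\mu_n) g(z) - \int_{\mu_n}^{z} g(s)\,ds \ge 0$ since $g(s) \le g(z)$ on $[\mu_n, z]$; if $z < \mu_n$ it equals $\int_z^{\mu_n}\big(g(s) - g(z)\big)\,ds \ge 0$ since $g(s) \ge g(z)$ for $s \ge z$. Hence $\mu_n \in \argmax_z \mathcal{U}(z,Y_n)$ for every $\mu_n$, i.e.\ Property \ref{prop:truthful} holds; I would also note in passing that $t^*(z) = z g(z) - \int_0^z g \ge 0$ because $g$ is non-negative and non-decreasing, so $\mathcal{RF}^*$ is a legitimate (non-negative) revenue function.

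For the ``only if'' direction, suppose some revenue function $\mathcal{RF}$ (write $t(z) := \mathcal{RF}(p_n, z, Y_n)$) makes the mechanism truthful. Then for any $v_1 < v_2$ the two incentive constraints ``type $v_1$ does not gain by reporting $v_2$'' and ``type $v_2$ does not gain by reporting $v_1$'' read $v_1 g(v_1) - t(v_1) \ge v_1 g(v_2) - t(v_2)$ and $v_2 g(v_2) - t(v_2) \ge v_2 g(v_1) - t(v_1)$; adding them and cancelling the payments gives $(v_2 - v_1)\big(g(v_2) - g(v_1)\big) \ge 0$, so $g$ is non-decreasing, which is exactly Assumption \ref{assumption:robustness_to_noise}. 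Combining the two directions with the $\mathcal{RF}^*$ computation proves the theorem.

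I do not expect a genuine obstacle here, since this is the standard single-parameter Myerson argument; the only care needed is (a) making the reduction to $g(z)$ airtight given the layered definitions — in particular observing that $g$ is constant for $z \ge p_n$ (the full data is released) and that monotonicity is invoked on that flat region as well — and (b) getting the signs right in the integral case-split above. One minor simplification to flag is that Property \ref{prop:truthful} only requires $\mu_n$ to be \emph{a} maximizer, not the unique one, so strict monotonicity of $g$ is never needed.
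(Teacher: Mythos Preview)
Your proposal is correct and follows essentially the same route as the paper: both collapse the pipeline into a scalar allocation map (the paper calls it $h(z)$, you call it $g(z)$), establish the ``if'' direction by the same integral case-split on $z \lessgtr \mu_n$ using monotonicity, and prove the ``only if'' direction by adding the two incentive constraints for a generic truthful payment rule to force $(v_2 - v_1)(g(v_2) - g(v_1)) \ge 0$. The only differences are cosmetic (you organize the utility-gap algebra slightly more compactly and add the side remark that $\mathcal{RF}^* \ge 0$).
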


\begin{proof}
This is a classic result from \cite{myerson1981optimal}. We provide the arguments here for completeness and for consistency with the properties and notation we introduce in our work. We begin with the backward direction. By Assumption \ref{assumption:robustness_to_noise} the following then holds $\forall \ b'_n \ge b_n$
\begin{equation}\label{eq:montone_allocation}
\mathcal{G}(Y_n,  \mathcal{M}(\mathcal{AF}^*(b'_n, p_n))) \ge \mathcal{G}(Y_n,  \mathcal{M}(\mathcal{AF}^*(b_n, p_n)))
\end{equation}

To simplify notation, let $h(z; \ \mathcal{G}, p_n, Y_n, \mathcal{M}) = \mathcal{G}(Y_n, \mathcal{M}(\mathcal{AF}^*(z, p_n)))$. In words, $h(z)$ is the gain in prediction accuracy as a function of the bid, $z$, for a fixed $\mathcal{G}, Y_n, \mathcal{M}, p_n$. 

By definition of \eqref{eq:buyer_utility}, it suffices to show that if $b_n \neq \mu_n$, the following holds
\begin{align}
\mu_n \cdot h(\mu_n) - \mu_n \cdot h(\mu_n) + \int_{0}^{\mu_n} h(z) dz \ge  \mu_n \cdot h(b_n) - b_n \cdot h(b_n) + \int_{0}^{b_n} h(z) dz
\end{align}
This is equivalent to showing that
\begin{equation}\label{eq:fairness_proof_key_inequality}
\int_{0}^{\mu_n} h(z) dz \ge  \int_{0}^{b_n} h(z) dz - (b_n - \mu_n) \cdot h(b_n)
\end{equation}

Case 1: $b_n > \mu_n$. In this case, \eqref{eq:fairness_proof_key_inequality} is equivalent to
\begin{align}
(b_n - \mu_n) \cdot h(b_n) \ge \int_{\mu_n}^{b_n} h(z) dz
\end{align}
This is immediately true due to monotonicity of $h(z)$ which comes from \eqref{eq:montone_allocation}. 
Case 2: $b_n < \mu_n$. In this case, \eqref{eq:fairness_proof_key_inequality} is equivalent to
\begin{align}
 \int_{b_n}^{\mu_n} h(z) dz  \ge (\mu_n - b_n) \cdot h(b_n) 
\end{align}
Again, this is immediately true due to monotonicity of $h(z)$. \\

Now we prove the opposite direction, i.e. if we have a truthful payment mechanism, which we denote as $\mathcal{RF}'$,  an increased allocation of features cannot decrease accuracy. Our definition of a truthful payment function implies the following two inequalities $\forall \ b > a$
\begin{align}
a \cdot h(a) - \mathcal{RF}'(\cdot, a, \cdot) &\ge a \cdot h(b) - \mathcal{RF}'(\cdot, b, \cdot) \\
b \cdot h(b) - \mathcal{RF}'(\cdot, b, \cdot) &\ge b \cdot h(a) - \mathcal{RF}'(\cdot, a, \cdot)
\end{align}  
These two inequalities imply
\begin{align}
a \cdot h(a) + b \cdot h(b) &\ge a \cdot h(b) + b \cdot h(a) \implies h(b) (b - a) \ge h(a) (b - a)
\end{align}  
Since by construction $b - a > 0$, we can divide both sides of the inequality by $b-a$ to get
\begin{align}
h(b) \ge h(a) \iff  \mathcal{G}_n(Y_n, \mathcal{M}(\mathcal{AF}^*(b, p_n))) \ge \mathcal{G}_n(Y_n, \mathcal{M}(\mathcal{AF}^*(a, p_n)))
\end{align}  
Since the allocation function $\mathcal{AF}^*(z, p_n)$ is increasing in $z$, this completes the proof.
\end{proof}
	\section{Price Update - Proof of Theorem \ref{thm:avg_regret}} \label{sec:appendix_price_update}
%To reduce notational overload, we suppress the dependence of $\mathcal{G}$ on $Y_n$ for the remainder of this section. However, all results go through regardless of any prediction task $Y_n$. Further, we drop the dependence of $p_n, b_n, Y_n$ on $n$ i.e. $p_n = P, b_n = b, Y_n = Y$. 

\begin{thm*}[\textbf{Theorem \ref{thm:avg_regret}}]
Let Assumptions \ref{assumption:robustness_to_noise}, \ref{assumption:lipschitz_revenue} and  \ref{assumption:bounded_bids} hold. Let $p_{n: n \in [N]}$ be the output of Algorithm \ref{alg-price-update-function}. Let $\mathcal{L}$ be the Lipschitz constant of $\mathcal{RF}^*$ with respect to price (where $\mathcal{L}$ is defined as in Assumption \ref{assumption:lipschitz_revenue}). Let $\mathcal{B}_{\max} \in \mathbb{R}$ be the maximum element of $\mathcal{B}$ (where  $\mathcal{B}$ is defined as in Assumption \ref{assumption:bounded_bids}). Then by choosing algorithm hyper-parameters $\epsilon = \frac{1}{\mathcal{L} \sqrt{N}}, \ \delta =  \sqrt{\frac{\log(|\mathcal{B}_{\text{net}}(\epsilon)|)}{N}}$ for some positive constant $C > 0$, the total average regret is bounded by, 
\[
\frac{1}{N}\Ex[\mathcal{R}(N)] \leq C \mathcal{B}_{\max}
\sqrt{\frac{\log(\mathcal{B}_{\max} \mathcal{L} \sqrt{N} )}{N}} = O(\sqrt{\frac{\log(N)}{N}}).
\]
where the expectation is taken over the randomness in Algorithm \ref{alg-price-update-function}.
Hence, Property \ref{prop:rev_maximizing} (Revenue Maxmization) holds.
\end{thm*}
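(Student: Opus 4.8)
The plan is to recognize Algorithm~\ref{alg-price-update-function} as the Multiplicative Weights (MW) algorithm run over the finite set of ``experts'' $\mathcal{B}_{\text{net}}(\epsilon)$, apply the textbook MW regret bound to compete with the best price \emph{in the net}, and then close the gap to the best price $p^*\in\mathbb{R}_+$ in hindsight by a discretization argument using the Lipschitz assumption. The first thing to check is that the normalized gains $g^i_n = \mathcal{RF}^*(c^i,b_n,Y_n)/\mathcal{B}_{\max}$ lie in $[0,1]$, so that the MW analysis applies. Writing $h(z) = \mathcal{G}(Y_n,\mathcal{M}(\mathcal{AF}^*(z,p_n)))$, Assumption~\ref{assumption:robustness_to_noise} makes $h$ nondecreasing in $z$, hence $\int_0^{b_n} h(z)\,dz \le b_n h(b_n)$ and therefore $0 \le \mathcal{RF}^*(c^i,b_n,Y_n) \le b_n h(b_n) \le b_n \le \mathcal{B}_{\max}$, using $\mathcal{G}\in[0,1]$ and $b_n\in\mathcal{B}$. (This also re-derives individual rationality, consistent with $r_n\in\mathbb{R}_+$.)

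Next I would invoke the standard MW guarantee. Because $p_n$ is sampled from the current weight distribution \emph{before} $b_n$ is revealed (as the comment in Algorithm~\ref{alg-price-update-function} notes), the interaction is exactly the adversarial online-learning setup with $K := |\mathcal{B}_{\text{net}}(\epsilon)|$ experts and reward vectors $(g^i_n)_i$, so for every buyer sequence $\{(b_n,Y_n)\}$ and every $c^i\in\mathcal{B}_{\text{net}}(\epsilon)$,
\[
\sum_{n=1}^N g^i_n - \Ex\Big[\textstyle\sum_{n=1}^N g^{(\mathrm{alg})}_n\Big] \;\le\; \frac{\ln K}{\delta} + \delta\sum_{n=1}^N g^i_n \;\le\; \frac{\ln K}{\delta} + \delta N .
\]
Multiplying through by $\mathcal{B}_{\max}$, the expected revenue shortfall of Algorithm~\ref{alg-price-update-function} relative to the best \emph{net} price is at most $\mathcal{B}_{\max}\big(\ln K/\delta + \delta N\big)$, and crucially this bound is uniform over $\{(b_n,Y_n)\}$ — that uniformity is what later permits moving the outer $\sup$ over buyer sequences (and over $p^*$) outside the expectation in the definition of $\mathcal{R}(N)$.

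To pass from the best net price to $p^*\in\mathbb{R}_+$, note that a price $p > \mathcal{B}_{\max} \ge b_n$ only degrades the signal further and cannot raise $\mathcal{RF}^*$, so the $\sup$ over $p^*$ may be restricted to the interval spanned by $\mathcal{B}$, which has diameter $D \le \mathcal{B}_{\max}$ and is $\epsilon$-covered by $\mathcal{B}_{\text{net}}(\epsilon)$. Given such a $p^*$, choose $c^i$ with $|c^i - p^*| \le \epsilon$; Assumption~\ref{assumption:lipschitz_revenue} gives $\sum_n \mathcal{RF}^*(p^*,b_n,Y_n) - \sum_n \mathcal{RF}^*(c^i,b_n,Y_n) \le N\mathcal{L}\epsilon$. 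Combining, $\Ex[\mathcal{R}(N)] \le N\mathcal{L}\epsilon + \mathcal{B}_{\max}(\ln K/\delta + \delta N)$. Since a minimal $\epsilon$-net of a diameter-$D$ subset of $\mathbb{R}$ has $K = O(D/\epsilon)$, setting $\epsilon = (\mathcal{L}\sqrt N)^{-1}$ gives $K = O(\mathcal{B}_{\max}\mathcal{L}\sqrt N)$ and makes the first term equal to $\sqrt N$; setting $\delta = \sqrt{\ln K/N}$ makes the MW term $2\mathcal{B}_{\max}\sqrt{N\ln K}$. Dividing by $N$ and absorbing the dominated $N^{-1/2}$ term yields $\frac1N\Ex[\mathcal{R}(N)] \le C\mathcal{B}_{\max}\sqrt{\log(\mathcal{B}_{\max}\mathcal{L}\sqrt N)/N} = O(\sqrt{\log N/N}) \to 0$, which is Property~\ref{prop:rev_maximizing}.

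The genuinely routine parts are the MW constants and the $O(D/\epsilon)$ net-size estimate. The step I expect to require the most care is the reduction in the previous paragraph: rigorously justifying that the $\sup$ over all $p^*\in\mathbb{R}_+$ can be restricted to the bounded range that $\mathcal{B}_{\text{net}}(\epsilon)$ covers (i.e. that arbitrarily high prices are never optimal for $\mathcal{RF}^*$ under Assumption~\ref{assumption:robustness_to_noise}), together with the interchange of the outer $\sup$ over adversarial buyer sequences with the expectation over the algorithm's internal randomness, which hinges on the MW bound holding in the worst case over sequences rather than only in expectation over some distribution on inputs.
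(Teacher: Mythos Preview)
Your proposal is correct and follows essentially the same approach as the paper: verify $g^i_n\in[0,1]$ via Assumption~\ref{assumption:robustness_to_noise}, apply the standard Multiplicative Weights regret bound over the experts in $\mathcal{B}_{\text{net}}(\epsilon)$, close the gap to the continuum optimum $p^*$ via the Lipschitz property and the $O(\mathcal{B}_{\max}/\epsilon)$ net-size bound, and then substitute the stated choices of $\epsilon$ and $\delta$. If anything, you are slightly more explicit than the paper in two places---your derivation of $\mathcal{RF}^*\ge 0$ from monotonicity of $h$, and your flagging of the restriction of $p^*$ to the bounded range---both of which the paper simply asserts.
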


\begin{proof}
Our proof here is an adaptation of the classic result from \cite{mw}. We provide the arguments here for completeness and for consistency with the properties and notation we introduce in our work. It is easily seen by Assumption \ref{assumption:robustness_to_noise} that the revenue function $\mathcal{RF}^*$ is non-negative. Now since by construction the gain function, $\mathcal{G} \in [0, 1]$, the range of $\mathcal{RF}^*$ is in $[0, \mathcal{B}_{\max}]$. This directly implies that for all $i$ and $n$, $g^i_n \in [0, 1]$ (recall $g_n^i$ is the (normalized) revenue gain if we played price $i$ for every buyer $n$).

We first prove a regret bound for the best fixed price in hindsight within $\mathcal{B}_{\text{net}}(\epsilon)$. 
Let $g^{\text{alg}}_n$ be the expected (normalized) gain of Algorithm \ref{alg-price-update-function} for buyer $n$. By construction,
\[
g^{\text{alg}}_n = \frac{\sum_{i=1}^{|\mathcal{B}_{\text{net}}(\epsilon)|} w^i_n g^i_n}{W_n}
\]

Observe we have the following inductive relationship regarding $W_n$
\begin{align}
W_{n+1} &= \sum_{i=1}^{|\mathcal{B}_{\text{net}}(\epsilon)|} w^i_{n+1} \\
&= \sum_{i=1}^{|\mathcal{B}_{\text{net}}(\epsilon)|} w^i_n + \delta w^i_n g^i_n \\
&= W_{n} + \delta \sum_{i=1}^{|\mathcal{B}_{\text{net}}(\epsilon)|} w^i_n g^i_n \\
&= W_{n} (1+ \delta g^{\text{alg}}_n) \\
&= W_{1} \Pi_{i=1}^n (1+ \delta g^{\text{alg}}_n) \\
&\stackrel{(a)}= |\mathcal{B}_{\text{net}}(\epsilon)| \cdot \Pi_{i=1}^n (1+ \delta g^{\text{alg}}_n) 
\end{align}
where (a) follows since $W_1$ was initialized to be $|\mathcal{B}_{\text{net}}(\epsilon)|$.

Taking logs and utilizing the inequality $\log(1 + x) \le x$ for $x \ge 0$, we have 
\begin{align}
\log(W_{N+1}) &= \log(|\mathcal{B}_{\text{net}}(\epsilon)|) + \sum_{i=1}^N \log(1 + \delta g^{\text{alg}}_n)) \\
&\le  \log(|\mathcal{B}_{\text{net}}(\epsilon)|) +  \sum_{i=1}^N \delta g^{\text{alg}}_n 
\end{align}

Now using that $\log(1 + x) \ge x - x^2$ for $x \ge 0$, we have for all prices $c^i \in \mathcal{B}_{\text{net}}(\epsilon)$,
\begin{align}
\log(W_{N+1}) &\ge \log(w^i_{n+1}) \\
&= \sum_{n=1}^N \log(1 + \delta g_n^i)) \\
&\ge \sum_{n=1}^N \delta g_n^i - (\delta g_n^i)^2  \\
&\stackrel{(a)}\ge \sum_{i=1}^N \delta g_n^i - \delta^2 N  
\end{align}
where (a) follows since $g^i_n \in [0, 1]$.

Thus for all prices $c^i \in \mathcal{B}_{\text{net}}(\epsilon)$
\[
\sum_{n=1}^N \delta   g^{\text{alg}}_n \ge \sum_{n=1}^N \delta g_n^i  -\log(|\mathcal{B}_{\text{net}}(\epsilon)|) - \delta^2 N \\
\]
Dividing by $\delta N$ and picking $\delta = \sqrt{\frac{\log(|\mathcal{B}_{\text{net}}(\epsilon)|)}{N}}$, we have for all prices $c^i \in \mathcal{B}_{\text{net}}(\epsilon)$
\[
\frac{1}{N}\sum_{n=1}^N g^{\text{alg}}_n \ge \frac{1}{N} \sum_{i=1}^N g_n^i  - 2 \sqrt{\frac{\log(|\mathcal{B}_{\text{net}}(\epsilon)|)}{N}}
\]

So far we have a bound on how well Algorithm \ref{alg-price-update-function} performs against prices in $\mathcal{B}_{\text{net}}(\epsilon)$. We now extend it to all of $\mathcal{B}$. Let $g^{\text{opt}}_n$ be the (normalized) revenue gain from buyer $n$ if we had played the optimal price, $p^*$ (as defined in Property \ref{prop:rev_maximizing}). Note that by Assumption \ref{assumption:bounded_bids}, we have $p^* \in \mathcal{B}$. Then by the construction of $|\mathcal{B}_{\text{net}}(\epsilon)|$, there exists $c^i \in \mathcal{B}_{\text{net}}(\epsilon)$ such that $| c^i - p^*| \le \epsilon$. Then by Assumption \ref{assumption:lipschitz_revenue}, we have that
\[
| g^{\text{opt}}_n -  g^i_n | = \frac{1}{\mathcal{B}_{\max}}| \mathcal{RF}^*(p^*, b_n, Y_n) - \mathcal{RF}^*(c^i, b_n, Y_n) | \le \frac{\mathcal{L} \epsilon}{\mathcal{B}_{\max}}
\]

We thus have 
\[
\frac{1}{N}\sum_{n=1}^N g^{\text{alg}}_n \ge \frac{1}{N} \sum_{i=1}^N g_n^{\text{opt}}  - 2 \sqrt{\frac{\log(|\mathcal{B}_{\text{net}}(\epsilon)|)}{N}} - \frac{\mathcal{L} \epsilon}{\mathcal{B}_{\max}}
\]

Multiplying throughout by $\mathcal{B}_{\max}$, we get
\[
\frac{1}{N}\sum_{n=1}^N \Ex[\mathcal{RF}^*(p_n, b_n, Y_n)] \ge \frac{1}{N} \mathcal{RF}^*(p^*, b_n, Y_n) - 2 \mathcal{B}_{\max} \sqrt{\frac{\log(|\mathcal{B}_{\text{net}}(\epsilon)|)}{N}} - \mathcal{L} \epsilon
\] 
Now setting $\epsilon = \frac{1}{\mathcal{L} \sqrt{N}}$ and noting that $|\mathcal{B}_{\text{net}}(\epsilon)| \le  \frac{3\mathcal{B}_{\max}}{\epsilon}$, for some positive constant $C > 0$, we have 
\[
\frac{1}{N}\sum_{n=1}^N \Ex[\mathcal{RF}^*(p_n, b_n, Y_n)] \ge \frac{1}{N} \mathcal{RF}^*(p^*, b_n, Y_n) - C 
\mathcal{B}_{\max} \sqrt{\frac{\log(\mathcal{B}_{\max} \mathcal{L} \sqrt{N} )}{N}}
\] 

\end{proof}
	\section{Fairness} \label{sec:appendix_fairness}
\begin{thm*}[\textbf{Theorem \ref{thm:shapley_approx}}] 
Let $\psi_{n, \text{shapley}}$ be the unique vector satisfying Property \ref{prop:fair} as given in \eqref{eq:shapley_mechanism}.  For Algorithm \ref{alg-payment-division}, pick the following hyperparameter: $K > \frac{M\log(2 / \delta)}{2 \epsilon^2}$, where $\delta, \epsilon > 0$. Then with probability $1 - \delta$, the output $\hat{\psi_n}$ of Algorithm \ref{alg-payment-division}, achieves the following
\begin{align}
\|\psi_{n, \text{shapley}} - \hat{\psi_n}\|_{\infty} < \epsilon 
\end{align} 
\end{thm*}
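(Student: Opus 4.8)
The plan is to observe that Algorithm~\ref{alg-payment-division} is nothing more than a Monte Carlo estimator of the Shapley value~\eqref{eq:shapley_mechanism}, and then to control its error feature-by-feature using a concentration inequality together with a union bound over the $M$ coordinates.

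First I would rewrite \eqref{eq:shapley_mechanism} as an expectation over a uniformly random permutation. For $\sigma$ drawn uniformly from $\sigma_{[M]}$, set $\Delta_m(\sigma) := \mathcal{G}(Y_n, \mathcal{M}(\wbX_{[\sigma < m]\cup m})) - \mathcal{G}(Y_n, \mathcal{M}(\wbX_{[\sigma < m]}))$, the marginal contribution of feature $m$ to the set of features preceding it. The standard combinatorial identity is that the weight $\tfrac{|T|!(M-|T|-1)!}{M!}$ appearing in \eqref{eq:shapley_mechanism} is exactly $\Pr_\sigma([\sigma < m] = T)$ --- there are $|T|!$ orderings of the predecessors of $m$ and $(M-|T|-1)!$ of its successors among $M!$ equally likely permutations --- so that $\psi_{n,\text{shapley}}(m) = \Ex_\sigma[\Delta_m(\sigma)]$. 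Since Algorithm~\ref{alg-payment-division} draws $\sigma_1,\dots,\sigma_K$ i.i.d. uniform from $\sigma_{[M]}$ and returns $\hat{\psi}_n(m) = \tfrac1K\sum_{k=1}^K \Delta_m(\sigma_k)$, each $\hat{\psi}_n(m)$ is an unbiased average of $K$ i.i.d. bounded terms with mean $\psi_{n,\text{shapley}}(m)$.

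Next, because $\mathcal{G}$ takes values in $[0,1]$, each increment $\Delta_m(\sigma_k)$ lies in an interval of length at most one, so Hoeffding's inequality gives, for each fixed $m$, $\Pr(|\hat{\psi}_n(m) - \psi_{n,\text{shapley}}(m)| \ge \epsilon) \le 2\exp(-2K\epsilon^2)$. A union bound over $m \in [M]$ then yields $\Pr(\|\hat{\psi}_n - \psi_{n,\text{shapley}}\|_\infty \ge \epsilon) \le 2M\exp(-2K\epsilon^2)$, and forcing the right-hand side below $\delta$ requires $K \ge \log(2M/\delta)/(2\epsilon^2)$. Finally I would invoke the elementary inequality $\log(2M/\delta) \le M\log(2/\delta)$ valid for $M \ge 1$, $\delta \in (0,1]$ (it reduces to $\log M \le (M-1)\log 2 \le (M-1)\log(2/\delta)$, i.e. $M \le 2^{M-1}$), which shows that the hyperparameter choice $K > M\log(2/\delta)/(2\epsilon^2)$ in the statement is sufficient.

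The only place that needs real care --- and the step I would expect to have to state precisely --- is the range of the increments $\Delta_m(\sigma_k)$: since the gain function need not be monotone in the set of features used, a priori $\Delta_m(\sigma_k) \in [-1,1]$ rather than $[0,1]$, which affects the constant in the Hoeffding exponent, and one then recovers the stated form (up to the absolute constant) via the same union-bound-plus-crude-$\log$ estimate. The permutation-expectation rewriting and the union bound themselves are routine, and the running-time claim ($O(M^2)$ calls, since the loop runs $MK$ times with $K = O(M)$) is immediate from inspection of the algorithm.
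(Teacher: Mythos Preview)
Your proposal is correct and follows exactly the paper's approach: rewrite~\eqref{eq:shapley_mechanism} as an expectation over a uniform random permutation, observe that Algorithm~\ref{alg-payment-division} is an i.i.d.\ Monte Carlo estimator of this expectation, apply Hoeffding coordinatewise, and union bound over the $M$ features. If anything you are more careful than the paper --- you explicitly justify the passage from $\log(2M/\delta)$ to $M\log(2/\delta)$ and flag the $[-1,1]$ vs.\ $[0,1]$ range issue for $\Delta_m(\sigma_k)$, both of which the paper glosses over.
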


\begin{proof}
It is easily seen that $\psi_{n, \text{shapley}}$ can be formulated as the following expectation
\begin{equation}\label{eq:shapley_mechanism_expectation}
\psi_{n, \text{shapley}}(m) = \mathbb{E}_{\sigma \sim \text{Unif}(\sigma_{S_n})} [\mathcal{G}_n(Y_n, \mathcal{M}_n(\bX_{[\sigma < m \ \cup \ m]})) - \mathcal{G}_n(Y_n, \mathcal{M}_n(\bX_{[\sigma < m]})]
\end{equation}

The random variable $\hat{\psi}^k_n(m)$ is distributed in the following manner: 
\begin{align}
\mathbb{P}\Big(\hat{\psi}^k_n(m) = \mathcal{G}_n(Y_n, \mathcal{M}(\bX_{[\sigma_k < m \ \cup \ m]})) - \mathcal{G}_n(Y_n, \mathcal{M}(\bX_{[\sigma_k < m]})); \sigma \in \sigma_{S_n} \Big) = \frac{1}{S_n!}
\end{align}
We then have  
\begin{align}
\mathbb{E}[\hat{\psi}_n(m)] = \frac{1}{K} \sum^{K}_{k=1} \mathbb{E}[\hat{\psi}^k_n(m)] = \psi_{n, \text{shapley}}
\end{align}

Since $\hat{\psi}_n(m)$ has bounded support between $0$ and $1$, and the $\hat{\psi}^k_n(m)$ are i.i.d, we can apply Hoeffding's inequality to get the following bound
\begin{align}
\mathbb{P}\Big(|\psi_{n, \text{shapley}} - \hat{\psi}_n(m)| > \epsilon \Big) < 2\exp(\frac{-2\epsilon^2}{K})
\end{align}
By applying a Union bound over all $m \in S_n \le M$, we have 
\begin{align}
\mathbb{P}\Big( \|\psi_{n, \text{shapley}} - \hat{\psi}_n\|_{\infty} > \epsilon \Big) < 2M\exp(\frac{-2\epsilon^2}{K})
\end{align}
Setting $\delta = 2M\exp(\frac{-2\epsilon^2}{K})$ and solving for $K$ completes the proof.
\end{proof}

\begin{thm*}[\textbf{Theorem \ref{thm:shapley_robust}}] 
Let Assumption \ref{assumption:robustness_to_replication} hold. For Algorithm \ref{alg-payment-division-robust}, pick the following hyperparameters: $K \ge \frac{M\log(2 / \delta)}{2 (\frac{\epsilon}{3})^2}, \ \lambda =  \log(2)$, where $\delta, \epsilon > 0$. Then with probability $1 - \delta$, the output, $\psi_n$, of Algorithm \ref{alg-payment-division-robust} is $\epsilon$-``Robust to Replication" i.e. Property \ref{prop:replication_robustness} (Robustness-to-Replication) holds. Additionally Conditions 2-4 of Property \ref{prop:fair} continue to hold for $\psi_n$ with $\epsilon$-precision.
\end{thm*}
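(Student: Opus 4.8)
The plan is to reduce the claim to one combinatorial lemma about exact Shapley values and then absorb the sampling error of SHAPLEY-APPROX. By Theorem \ref{thm:shapley_approx} (invoked with target accuracy $\epsilon/3$, together with a union bound over the features involved), with probability $1-\delta$ every Shapley estimate produced by SHAPLEY-APPROX that we use is within $\epsilon/3$ of the corresponding exact Shapley value, both in the original market $[M]$ and in the market $[M]^+$ after replication; I condition on this event from now on. I work in the strategic scenario highlighted in the remark after Property \ref{prop:replication_robustness}: a single seller --- the owner of feature $m$ --- replaces her stream by $c+1$ identical copies $m, m^+_1,\dots, m^+_c$, the other streams remaining fixed.

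First I would compute the effect of replication on the two inputs to Step 2 of Algorithm \ref{alg-payment-division-robust}. Put $s(m)=\sum_{j\in[M]\setminus\{m\}}\mathcal{SM}(X_m,X_j)$ and $f(x)=e^{-\lambda x}$, so the algorithm outputs $\psi_n(m)=\hat{\psi}_n(m)\,f(s(m))$. Because the copies are identical, reflexivity in Definition \ref{def:similarity_metric} gives $\mathcal{SM}(X_{m'},X_{m''})=1$ between any two of them, so in $[M]^+$ each copy $m'$ satisfies $s^+(m')=c+s(m)$ and $f(s^+(m'))=e^{-\lambda c}f(s(m))$. Choosing $\lambda=\log 2$ makes $(c+1)f(s^+(m'))=(c+1)2^{-c}f(s(m))\le f(s(m))$ --- this is exactly the inequality in Proposition \ref{nec_weight}, and it is what lets the contributions of the $c+1$ copies (and the $c+1$ accumulated sampling errors) collapse back to a single copy's worth.

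The heart of the proof is the lemma that, for exact Shapley values, $\sum_{m'\in\{m,m^+_1,\dots,m^+_c\}}\psi^+_{n,\text{shapley}}(m')\le (c+1)\,\psi_{n,\text{shapley}}(m)$. I would establish it by a telescoping/coupling argument: by Assumption \ref{assumption:robustness_to_replication}, in any ordering of the features of $[M]^+$ only the first-appearing copy of $m$ has a nonzero marginal contribution, and that contribution equals $v(A\cup\{m\})-v(A)$ with $v(\cdot)=\mathcal{G}(Y_n,\mathcal{M}(\cdot))$ and $A$ the set of non-$m$ features preceding it; hence the left-hand side equals $\mathbb{E}_{A\sim P}[v(A\cup\{m\})-v(A)]$, where $P(A=T)=\frac{|T|!\,(c+1)\,(M+c-1-|T|)!}{(M+c)!}$. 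Writing $P_{\text{orig}}$ for the uniform-insertion law that defines $\psi_{n,\text{shapley}}(m)$, an elementary cancellation gives $P(A=T)/P_{\text{orig}}(A=T)=(c+1)\prod_{i=0}^{c-1}\frac{M-|T|+i}{M+1+i}\le c+1$, and since the marginal gains are nonnegative (as is assumed throughout) the lemma follows by change of measure. Assembling, on the good event: $\psi^+_n(m)+\sum_i\psi^+_n(m^+_i)=\sum_{m'}\hat{\psi}^+_n(m')f(s^+(m'))\le e^{-\lambda c}f(s(m))\big(\sum_{m'}\psi^+_{n,\text{shapley}}(m')+(c+1)\frac{\epsilon}{3}\big)\le (c+1)e^{-\lambda c}f(s(m))\big(\psi_{n,\text{shapley}}(m)+\frac{\epsilon}{3}\big)\le f(s(m))\hat{\psi}_n(m)+\frac{2\epsilon}{3}f(s(m))\le\psi_n(m)+\epsilon$, using the lemma, $\psi_{n,\text{shapley}}(m)\le\hat{\psi}_n(m)+\epsilon/3$, and $(c+1)2^{-c}\le1$.

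Finally, Conditions 2--4 of Property \ref{prop:fair} with $\epsilon$-precision follow from three observations: $\psi_{n,\text{shapley}}$ satisfies them exactly (Theorem \ref{thm:shapley_algorithm}); on the good event $\|\hat{\psi}_n-\psi_{n,\text{shapley}}\|_\infty\le\epsilon/3$; and the weights $f(s(m))$ lie in $[0,1]$ and depend only on the $X_j$'s, not on $Y_n$ --- so additivity is preserved since the same weight multiplies $\psi^{(1)}_n,\psi^{(2)}_n$ and $\psi'_n$, the zero-element condition is preserved since $\psi_n(m)\le\hat{\psi}_n(m)\le\epsilon/3$ when $m$ is null, and symmetry is preserved (interchangeable features share the same $s(\cdot)$) up to the $\epsilon/3$ slack. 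The step I expect to be the real obstacle is the lemma of the third paragraph: it is precisely where Assumption \ref{assumption:robustness_to_replication} is indispensable, it hinges on controlling how replication skews the distribution of insertion positions, and it is also the reason the statement must be read as a single seller deviating --- if several streams are replicated simultaneously the analogous likelihood ratio becomes a product over the deviating features and can blow up, so Property \ref{prop:replication_robustness} can genuinely fail (e.g. two perfectly complementary, mutually dissimilar features).
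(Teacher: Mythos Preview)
Your proof is correct and follows the same skeleton as the paper's: control the sampling error via Theorem \ref{thm:shapley_approx} with target $\epsilon/3$, use reflexivity of $\mathcal{SM}$ together with $\lambda=\log 2$ to obtain $(c+1)e^{-\lambda c}\le 1$, and compare replicated to original Shapley values. The one place you go further is in actually proving the comparison lemma $\sum_{m'}\psi^+_{n,\text{shapley}}(m')\le (c+1)\,\psi_{n,\text{shapley}}(m)$ via an explicit likelihood-ratio/change-of-measure bound; the paper's proof in Appendix \ref{sec:appendix_fairness} asserts the corresponding inequality at its step (d) without a self-contained argument, and the combinatorial computation you give is essentially what the paper carries out only later, in the proof of Proposition \ref{nec_weight}. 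Your restriction to a single deviating seller is the right game-theoretic reading of Property \ref{prop:replication_robustness} (cf.\ the remark following it) and, as you correctly observe with the complementary-features example, it is also the regime in which the key Shapley comparison is actually valid --- the paper's nominal setup allows all sellers to replicate simultaneously, but its step (d) is not justified in that generality either.
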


\begin{proof}
To reduce notational overhead, we drop the dependence on $n$ of all variables for the remainder of the proof. Let $S = \{X_1, X_2, \ldots, X_K\}$ refer to the original set of allocated features without replication. Let $S^+ = \{X_{(1, 1)}, X_{(1, 2)}, \ldots, X_{(1, c_1)}, X_{(2, 1)}, \ldots, X_{(K, c_K)}\}$ (with $c_i \in \mathbb{N}$), be an appended version of $S$ with replicated versions of the original features, i.e. $X_{(m, i)}$ is the $(i-1)$-th replicated copy of feature $X_m$.  

Let $\hat{\psi}, \hat{\psi^+}$ be the respective outputs of Step 1 of Algorithm \ref{alg-payment-division-robust} for $S, S^+$ respectively. The total revenue allocation to seller $m$ in the original and replicated setting is given by the following:
\begin{align}
\psi(m) &= \hat{\psi}(m)\exp(- \lambda \sum_{j \in S_m \setminus \{m\} } \mathcal{SM}(X_m, X_j)) \\
\psi^+(m) &= \sum_{i \in c_m} \hat{\psi}^+\Big((m, i)\Big)\exp(- \lambda \sum_{(j, k) \in S_m^+  \setminus \{(m, i)\}} \mathcal{SM}(X_{(m, i)}, X_{(j, k)})) 
\end{align}

For Property \ref{prop:replication_robustness} to hold, it suffices to show that $\psi^+(m) \le \psi(m) + \epsilon$. We have that
\begin{align*}
&\sum_{i \in c_m} \hat{\psi}^+\Big((m, i)\Big)\exp (- \lambda \sum_{(j, k) \in S_m^+  \setminus \{(m, i)\}} \mathcal{SM}(X_{(m, i)}, X_{(j, k)})) \\
&\stackrel{(a)}\le \sum_{i \in c_m}  \hat{\psi}^+\Big((m, i)\Big)\exp(- \lambda \sum_{j \in [c_m] \setminus {i}} \mathcal{SM}(X_{(m, i)}, X_{(m, j)}))\exp(- \lambda \sum_{l \in S_m \setminus \{m\}} \mathcal{SM}(X_{(m, i)}, X_{(l, 1)})) \\
&\stackrel{(b)}=  \sum_{i \in c_m} \hat{\psi}^+\Big((m, i)\Big)\exp(- \lambda (c_m -1)) \exp(- \lambda \sum_{l \in S_m \setminus \{m\}} \mathcal{SM}(X_{(m, i)}, X_{(l, 1)})) \\
&\stackrel{(c)}\le c_m \bigg(\hat{\psi}^+\Big((m, 1)\Big) + \frac{1}{3}\epsilon \bigg) \exp(- \lambda (c_m -1)) \exp(- \lambda \sum_{l \in S_m \setminus \{m\}} \mathcal{SM}(X_{(m, 1)}, X_{(l, 1)})) \\ 
&\le c_m \bigg( \hat{\psi}^+\Big((m, 1)\Big) + \frac{1}{3}\epsilon \bigg) \exp(- \lambda (c_m -1)) \exp(- \lambda \sum_{j \in S_m \setminus \{m\} } \mathcal{SM}(X_m, X_j)) 
\end{align*}
(a) follows since $\lambda, \mathcal{SM}(\cdot) \ge 0$; (b) follows by condition (i) of Definition \ref{def:similarity_metric}; (c) follows from Theorem \ref{thm:shapley_approx}; \\

Hence it suffices to show that $ c_m \bigg( \hat{\psi}^+\Big((m, 1)\Big) + \frac{1}{3}\epsilon \bigg) \exp(- \lambda (c_m -1)) \le \hat{\psi}(m) + \epsilon \ \forall c_m \in \mathbb{N}$. We have 
\begin{align*}
c_m \exp(- \lambda (c_m -1))  \bigg( \hat{\psi}^+\Big((m, 1)\Big) + \frac{1}{3}\epsilon \bigg) &\stackrel{(d)}\le c_m \exp(- \lambda (c_m -1)) \Big(\frac{\psi(m)}{c_m} + \frac{2}{3}\epsilon \Big)\\
&\stackrel{(e)}\le c_m \exp(- \lambda (c_m -1))\Big(\psi(m) + \frac{2}{3}\epsilon \Big) \\
&\stackrel{(f)}\le c_m \exp(- \lambda (c_m -1))\Big(\hat{\psi}(m) + \epsilon \Big) \\
&\stackrel{(g)}\le \Big(\hat{\psi}(m) + \epsilon \Big) 
\end{align*}
where (d) and (f) follow from Theorem \ref{thm:shapley_approx}; (e) follows since $c_m \in \mathbb{N}$; (g) follows since $c_m  \exp(- \lambda (c_m -1)) \le 1 \ \forall c_m \in \mathbb{N}$ by picking $\lambda = \log(2)$.

The fact that Conditions 2-4 of Property \ref{prop:fair} continue to hold for follow  $\psi_n$ with $\epsilon$-precision follow easily from Theorem \ref{thm:shapley_approx} and the construction of $\psi_n$.
\end{proof}

\begin{prop}[\textbf{Proposition \ref{prop:shapley_incompatibility}}]
If the identities of sellers in the marketplace is anonymized, the balance condition in Property \ref{prop:fair} and Property \ref{prop:replication_robustness} cannot simultaneously hold.
\end{prop}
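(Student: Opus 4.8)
The plan is to exhibit one fixed prediction task together with two market configurations related by replication, on which the Balance condition and $\epsilon$-robustness-to-replication become directly contradictory, using nothing beyond anonymity of $\mathcal{PD}$. The first move is to pin down what ``anonymized identities'' buys us: it should mean that the output of $\mathcal{PD}$ is invariant under relabeling of the sellers, i.e.\ that $\mathcal{PD}$ is a function of the \emph{multiset} of feature vectors $\{X_1,\dots,X_M\}$ (and $Y_n$) rather than of the indexed tuple. The key consequence is that coincident features get equal shares: if $X_i = X_j$ as vectors in $\mathbb{R}^T$, then applying the transposition swapping $i$ and $j$ leaves the multiset — hence the output — unchanged while interchanging coordinates $i$ and $j$ of $\psi_n$, so $\psi_n(i) = \psi_n(j)$; by induction, if every feature in a market is identical, every seller gets the same share.

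Next I would build the bad instance. Fix any $Y \in \mathbb{R}^T$ and take two sellers $A,B$ whose features coincide, $X_A = X_B$ (the common vector can be anything — this is exactly the configuration of Example~\ref{example:shapley_not_robust}). In the honest market $\{A,B\}$, the equal-share fact plus Balance force $\psi(A) = \psi(B) = \frac{1}{2}$. Now let $A$ replicate its feature $c$ times (and let $B$ not replicate), producing $[M]^+ = \{A, A^+_1,\dots,A^+_c, B\}$, a market of $c+2$ mutually identical features; by the same two facts each of these $c+2$ sellers receives exactly $\frac{1}{c+2}$, so $A$ together with its copies captures $\psi^+(A) + \sum_{i=1}^c \psi^+(A^+_i) = \frac{c+1}{c+2}$.

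Finally, $\epsilon$-robustness-to-replication applied to $m = A$ demands $\frac{c+1}{c+2} \le \psi(A) + \epsilon = \frac{1}{2} + \epsilon$, i.e.\ $\epsilon \ge \frac{c}{2(c+2)}$. Since $c$ is an arbitrary positive integer and $\frac{c}{2(c+2)}$ increases to $\frac{1}{2}$, no fixed $\epsilon < \frac{1}{2}$ can make this hold; in particular exact robustness ($\epsilon = 0$) is incompatible with Balance under anonymity, and already $c=1$ forces $\epsilon \ge \frac{1}{6}$. This is the whole proposition.

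The argument is short once the setup is right, and it uses only Balance (not Symmetry, Zero Element, or Additivity) together with the weakest form of robustness. The one non-routine step — and the only place where care is genuinely needed — is the formalization of anonymity in the first paragraph: making explicit that ``anonymized'' has to mean permutation-invariance of $\mathcal{PD}$, and therefore equal treatment of identical feature vectors, which is what turns Balance into a rigid constraint $\frac{1}{c+2}$ per seller in the replicated instance. Everything after that is arithmetic, so I would keep that first step precise and treat the rest briskly.
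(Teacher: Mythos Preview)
Your argument is correct and follows the same counterexample as the paper --- two sellers with identical features, one of whom replicates --- with the refinement that you let the number of copies $c$ vary and thereby obtain an explicit lower bound $\epsilon \ge \tfrac{c}{2(c+2)} \to \tfrac{1}{2}$, whereas the paper just takes $c=1$ and observes $\tfrac{2}{3} > \tfrac{1}{2}$. Your derivation of equal shares from anonymity (permutation-invariance) alone, rather than from the Symmetry axiom, is also slightly cleaner than the paper's, which appeals to both Balance and Symmetry in the base case before switching to anonymity for the replicated scenario.
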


\begin{proof}
We show this through an extremely simple counter-example consisting of three scenarios.  

In the first scenario, the marketplace consists of exactly two sellers, $A , B$, each selling identical features i.e. $X_A = X_B$. By Condition 1 and 2 of Property \ref{prop:fair}, both sellers must receive an equal allocation i.e. $\psi_1(A) = \psi_1(B) = \frac{1}{2}$ for any prediction task. 

Now consider a second scenario, where the marketplace against consists of the same two sellers, $A$ and $B$, but this time seller $A$ replicates his or her feature once and sells it again in the marketplace as $A'$. Since by assumption the identity of sellers is anonymized, to achieve the ``balance" condition in Property \ref{prop:fair}, we require $\psi_2(A) = \psi_2(B) = \psi_2(A') = \frac{1}{3}$. Thus the total allocation to seller $A$ is $\psi_2(A) + \psi_2(A') = \frac{2}{3} > \frac{1}{2} = \psi_1(A)$ i.e. Property \ref{prop:replication_robustness}  does not hold.

Finally consider a third scenario, where the marketplace consists of three sellers $A , B$ and $C$, each selling identical features i.e. $X_A = X_B = X_C$. It is easily seen that to achieve ``balance", we require  $\psi_3(A) = \psi_3(B) = \psi_3(C) = \frac{1}{3}$. 

Since the marketplace cannot differentiate between $A'$ and $C$, we either have balance or Property \ref{prop:replication_robustness} i.e. ``robustness to replication".
\end{proof}
	\section{Efficiency} \label{sec:appendix_efficiency}
\begin{cor*}[\textbf{Corollary~\ref{cor:marketplace_is_efficient}}]
	$\mathcal{AF}^*, \mathcal{RF}^*, \mathcal{PF}^*$ run in ${O}(M)$. $\mathcal{PD}^*_{a}$, $\mathcal{PD}^*_{b}$ run in ${O}(M^2)$ time. Hence, Property \ref{prop:efficient} holds.
\end{cor*}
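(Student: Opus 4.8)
The plan is to bound each of the five constructed functions separately by reading off the loop structure of its definition, and then to add the per-step costs to verify Property~\ref{prop:efficient}. Throughout I treat as constants (independent of $M$ and of the buyer index $n$) the ambient dimension $T$, the number of quadrature nodes used to approximate the integral in~\eqref{eq:myerson}, the sampling count $K$ of Algorithm~\ref{alg-payment-division}, the penalty parameter $\lambda$, and the cardinality $|\mathcal{B}_{\text{net}}(\epsilon)|$ of the $\epsilon$-net; the only quantities tracked are $M$ and $n$. The workhorse observation is the standing assumption of Section~\ref{sec:comp_efficiency} that one call to $\mathcal{M}$ and one call to $\mathcal{G}$ each cost $O(M)$; together with the fact that $\mathcal{AF}^*$ (Examples~\ref{example:alloc_function_real} and~\ref{example:alloc_function_bits}) perturbs each of the $M$ features in $O(T)$ time (with $T$ treated as a constant, as in the assumption on $\mathcal{M}$ and $\mathcal{G}$), this yields that a single composition $\mathcal{G}\bigl(Y_n,\mathcal{M}(\mathcal{AF}^*(\cdot,\cdot))\bigr)$ costs $O(M)$.

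\medskip\noindent\textbf{The $O(M)$ functions.} $\mathcal{AF}^*$ is $O(M)$ by the above. For $\mathcal{RF}^*$, the first term of~\eqref{eq:myerson} is one composition of the above form, hence $O(M)$, and the integral is evaluated by numerical quadrature on a fixed grid of $O(1)$ nodes, i.e.\ a constant number of $O(M)$ evaluations; so $\mathcal{RF}^*$ is $O(M)$. For $\mathcal{PF}^*$ (Algorithm~\ref{alg-price-update-function}) the work attributable to a single buyer $n$ is: forming $W_n$ and sampling $p_n$, each $O(|\mathcal{B}_{\text{net}}(\epsilon)|)$; and, for each of the $|\mathcal{B}_{\text{net}}(\epsilon)|$ experts, one call to $\mathcal{RF}^*$ ($O(M)$) plus the multiplicative-weights update ($O(1)$). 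This totals $O(|\mathcal{B}_{\text{net}}(\epsilon)|\cdot M)=O(M)$, and --- crucially for the second clause of Property~\ref{prop:efficient} --- the only state Algorithm~\ref{alg-price-update-function} carries between buyers is the length-$|\mathcal{B}_{\text{net}}(\epsilon)|$ weight vector, so the cost incurred at buyer $n$ does not depend on $n$.

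\medskip\noindent\textbf{The $O(M^2)$ functions.} In Algorithm~\ref{alg-payment-division} ($\mathcal{PD}^*_a$) the double loop runs over $m\in[M]$ and $k\in[K]$; inside it, drawing $\sigma_k\sim\text{Unif}(\sigma_{[M]})$ costs $O(M)$ (Fisher--Yates), evaluating $G$ and $G^+$ is two compositions of the above form (hence $O(M)$), and the averaging is $O(K)$ per coordinate, so the total is $O(M\cdot K\cdot M)=O(M^2)$. Algorithm~\ref{alg-payment-division-robust} ($\mathcal{PD}^*_b$) first invokes $\mathcal{PD}^*_a$ ($O(M^2)$) and then, in its Step~2, computes for each $m\in[M]$ the sum $\sum_{j\in[M]\setminus\{m\}}\mathcal{SM}(X_m,X_j)$ --- that is $M-1$ similarity evaluations, each $O(T)$ --- followed by one exponential and one multiplication; this is $O(M)$ per $m$ and $O(M^2)$ overall. (The $\epsilon$-net constructed at the top of Algorithms~\ref{alg-payment-division} and~\ref{alg-payment-division-robust} is vestigial and, in any case, built only once.) Hence both payment-division functions are $O(M^2)$.

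\medskip\noindent\textbf{Assembling Property~\ref{prop:efficient}.} Walking through the per-step dynamics of Section~\ref{sec:market_dynamics} for a fixed buyer $n$: setting $p_n$ via $\mathcal{PF}^*$ is $O(M)$; allocating $\wbX_M$ via $\mathcal{AF}^*$ is $O(M)$; forming $\hat{Y}_n$ and its gain via $\mathcal{M},\mathcal{G}$ is $O(M)$; extracting $r_n$ via $\mathcal{RF}^*$ is $O(M)$; dividing $r_n$ via $\psi_n=\mathcal{PD}^*_b(\cdot)$ is $O(M^2)$. The total per step is $O(M^2)$ --- polynomial in $M$ --- and, by the remark above on the state of Algorithm~\ref{alg-price-update-function} together with the fact that no other step references past buyers, it does not grow with $N$; so Property~\ref{prop:efficient} holds. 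I expect the one genuinely delicate point to be this last ``independent of $N$'' claim: one must be careful that the net resolution $\epsilon$, although chosen in Theorem~\ref{thm:avg_regret} as a function of the horizon $N$, should here be regarded as fixed marketplace infrastructure, so that the per-buyer cost of Algorithm~\ref{alg-price-update-function} is $O(|\mathcal{B}_{\text{net}}(\epsilon)|\cdot M)$ uniformly in $n$; all the remaining steps are routine loop counts.
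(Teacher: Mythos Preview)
Your proposal is correct and follows the same approach as the paper's own proof: read off the loop structure of each constructed function and count calls to $\mathcal{M}$ and $\mathcal{G}$, using the standing assumption that each such call costs $O(M)$. The one minor accounting difference is in $\mathcal{PD}^*_a$: the paper takes $K=\Theta(M)$ (as dictated by Theorem~\ref{thm:shapley_approx}) but implicitly treats each call to $\mathcal{M},\mathcal{G}$ as unit cost, whereas you treat $K$ as a fixed constant but charge $O(M)$ per call---both bookkeepings land on the same $O(M^2)$ bound, and your explicit flag about the $\epsilon$-net size being regarded as fixed infrastructure (despite its $N$-dependence in Theorem~\ref{thm:avg_regret}) is a point the paper leaves implicit.
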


\begin{proof}
This is immediately seen by studying the four functions: (i) $\mathcal{AF}^*$ simply tunes the quality of each feature $X_j$ for $j \in [M]$, which is a linear time operation in $M$; (ii) $\mathcal{RF}^*$ again runs in linear time as we require a constant number of calls to $\mathcal{G}$ and $\mathcal{M}$; (iii) $\mathcal{PF}^*$ runs in linear time as we call $\mathcal{G}$ and $\mathcal{M}$ once for every price in $\mathcal{B}_{\text{net}}(\epsilon)$; (iv) $\mathcal{PD}^*_a$ has a running time of $\frac{M^2\log(2 / \delta)}{2 \epsilon^2}$ for any level of precision and confidence given by $\epsilon, \delta$ respectively i.e. we require $\frac{M\log(2 / \delta)}{2 \epsilon^2}$ calls to $\mathcal{G}$ and $\mathcal{M}$ to compute the Shapley Allocation for each feature $X_j$ for $j \in [M]$. The additional step in $\mathcal{PD}^*_b$ i.e. Step 2, is also a linear time operation in $M$ (note that the pairwise similarities between $X_i, X_j$ for any $i, j \in [M]$ can be precomputed).
\end{proof}

	\section{Optimal Balance-Preserving, Robust-to-Replication Penalty Functions} \label{sec:tight_weighting}
In this section we provide a necessary and sufficient condition for ``robustness-to-replication" any penalty function $f : \Rb_{+} \rightarrow \Rb_{+}$ must satisfy, where $f$ takes as argument the cumulative similarity of a feature with all other features. In Algorithm~\ref{alg-payment-division-robust}, we provide a specific example of such a penalty function given by exponential down-weighting. We have the following result holds
\begin{prop*}[\textbf{Proposition \ref{nec_weight}}] 
Let Assumption \ref{assumption:robustness_to_replication} hold. Then for a given similarity metric $\mathcal{SM}$, a penalty function $f$ is ``robust-to-replication"  if and only if it satisfies the following relation
\[
(c+1) f(x+c) \leq f(x)
\]
where $c \in \mathbb{Z}_{+}, x \in \Rb_{+}$.
\end{prop*}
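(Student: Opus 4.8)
The plan is to prove both directions by reducing ``$f$ is robust-to-replication'' to the single scalar inequality in the statement, using (a) an exact formula for the cumulative similarity of the copies of a replicated feature and (b) a tight amplification bound for the Shapley mass of those copies. Throughout, write the penalty-based division as $\psi_n(m) = \hat{\psi}_n(m)\, f(x_m)$, where $\hat{\psi}_n$ is the exact Shapley value of Theorem \ref{thm:shapley_algorithm} (so $\hat\psi_n \in [0,1]^{[M]}$, i.e. the prediction game is monotone) and $x_m = \sum_{j \in [M]\setminus\{m\}} \mathcal{SM}(X_m,X_j)$; ``$f$ robust-to-replication'' means Property \ref{prop:replication_robustness} holds with $\epsilon = 0$ for every market, task, and learning algorithm obeying Assumption \ref{assumption:robustness_to_replication}.

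\textbf{Sufficiency.} Assume $(c+1)f(x+c)\le f(x)$ for all $c\in\mathbb{Z}_{+},x\in\mathbb{R}_{+}$, and suppose seller $m$ replicates into $c_m = c+1$ identical copies. First I would compute, using reflexivity of $\mathcal{SM}$ (so $\mathcal{SM}(X_{(m,i)},X_{(m,i')})=1$) and the fact that the other features are unchanged, that the cumulative similarity of each copy equals $x^{+} = (c_m-1)+x_m = c + x_m$. The crux is the amplification bound
\[
\textstyle\sum_{i=1}^{c_m}\hat{\psi}_n^{+}\big((m,i)\big)\ \le\ c_m\,\hat{\psi}_n(m),
\]
which I would prove by coupling: in a uniformly random permutation of the replicated feature set, the order $\sigma$ in which the \emph{original} features first appear has law $\mu_m$ with $\mu_m(\sigma)=\prod_{t}\big(c_{\sigma_t}\big/\sum_{l\ge t}c_{\sigma_l}\big)\le c_m/M!$ (the bound since only $m$ has more than one copy), while by Assumption \ref{assumption:robustness_to_replication} only the first copy of $m$ to appear has a nonzero marginal, so $\sum_i\hat\psi_n^{+}((m,i)) = \mathbb{E}_{\sigma\sim\mu_m}\big[\mathcal{G}(Y_n,\mathcal{M}(\bX_{P\cup m}))-\mathcal{G}(Y_n,\mathcal{M}(\bX_{P}))\big]$ where $P$ is the set of features before $m$ in $\sigma$; monotonicity makes these marginals nonnegative and the claim follows from $\mu_m\le c_m\cdot\mathrm{Unif}$. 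Combining, $\sum_i\hat\psi_n^{+}((m,i))f(x^{+}) = f(c+x_m)\sum_i\hat\psi_n^{+}((m,i)) \le (c+1)f(x_m+c)\,\hat\psi_n(m) \le f(x_m)\hat\psi_n(m)=\psi_n(m)$, which is Property \ref{prop:replication_robustness} with $\epsilon=0$.

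\textbf{Necessity.} Fix $c\in\mathbb{Z}_{+}$, $x\in\mathbb{R}_{+}$, and for each integer $k\ge 1$ I would build a witnessing scenario: features $m,s_1,\dots,s_k$ with a task $Y$, algorithm $\mathcal{M}$ and gain $\mathcal{G}$ under which $\{m,s_1,\dots,s_k\}$ are perfect substitutes (a coalition meeting this set attains full gain, a disjoint one attains none — this trivially satisfies Assumption \ref{assumption:robustness_to_replication}), and with the underlying vectors chosen so that $\mathcal{SM}(X_m,X_{s_l})=x/k$ for each $l$ (possible because $\mathcal{SM}$ is unconstrained by $\mathcal{G}$, e.g.\ via cosine similarity). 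By symmetry $\hat\psi(m)=\tfrac1{k+1}$ and $x_m=x$, so $\psi(m)=f(x)/(k+1)$; after $m$ replicates $c$ extra times there are $k+c+1$ substitutes, each copy carries Shapley mass $\tfrac1{k+c+1}$ and cumulative similarity $c+x$, so the total allocation to $m$ and its copies is $\tfrac{(c+1)f(x+c)}{k+c+1}$. Robustness forces $\tfrac{(c+1)(k+1)}{k+c+1}f(x+c)\le f(x)$ for every $k$, and letting $k\to\infty$ gives $(c+1)f(x+c)\le f(x)$.

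\textbf{Main obstacle.} The delicate part is the sufficiency direction, specifically the amplification bound $\sum_i\hat\psi_n^{+}((m,i))\le c_m\hat\psi_n(m)$: Example \ref{example:shapley_not_robust} shows the copies' Shapley mass can strictly exceed the original's, so the constant $c_m$ is tight and the bound genuinely needs both the explicit first-appearance law $\mu_m$ and monotonicity of the game. A secondary point is making the necessity construction fully rigorous (exhibiting concrete $\mathcal{M},\mathcal{G},Y$ and vectors realizing ``perfect substitutes with prescribed pairwise similarity''), and — if one insists on the all-sellers-replicate reading of Property \ref{prop:replication_robustness} — extending the argument by noting $x^{+}_{(m,i)}\ge c+x_m$ (via $\mathcal{SM}\ge 0$, exactly as in step (a) of the proof of Theorem \ref{thm:shapley_robust}) together with the monotonicity of $f$ that is already implied by the hypothesis (e.g.\ $f(x+1)\le f(x)/2$).
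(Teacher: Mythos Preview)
Your argument is correct and runs along the same two rails as the paper's: sufficiency via the bound that the $c+1$ copies together carry at most $(c+1)$ times the original Shapley mass (the paper simply asserts the equivalent statement $\hat\psi_i\le\psi_i$ from Assumption~\ref{assumption:robustness_to_replication}; you prove it via an explicit first-appearance coupling), and necessity via a ``many substitutes'' scenario in which that ratio tends to $c+1$ as the number of substitutes grows.

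The one place you genuinely sharpen the paper is the necessity direction. The paper takes all $M$ original sellers to hold identical data and sends $M\to\infty$; this is exactly where its bound $\hat\psi_i\le\tfrac{M}{M+c}\psi_i$ is tight, but it also forces the cumulative similarity $x=M-1\to\infty$, so read literally the paper only derives $(c+1)f(x+c)\le f(x)$ for arbitrarily large $x$. Your construction decouples the cooperative game from the similarity metric by using $k$ perfect substitutes with pairwise similarity $x/k$, which lets you realize any prescribed $x\in\Rb_{+}$ before sending $k\to\infty$ --- this closes that gap. The residual caveat you already flag (simultaneously realizing prescribed similarities for a \emph{given} $\mathcal{SM}$ and making the features perfect substitutes under $\mathcal{G},\mathcal{M}$) is a real modeling assumption, but the paper's own proof is no more careful on this point.
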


\begin{proof}
Consider the case where a certain data seller with feature $X_i$ has original cumulative similarity $x$, and makes $c$ additional copies of its own data. The following relation is both necessary and sufficient to ensure robustness,
\[
\hat{\psi}_i (c+1)f(x+c) \leq \psi_i f(x) \label{shapley_change}
\]

We first show sufficiency. By Assumption \ref{assumption:robustness_to_replication}, the new Shapley value (including the replicated features) for a single feature $X_i$ denoted by $\hat{\psi}$, is no larger than the original Shapley value, $\psi$, for the same feature. Then it immediately follows that $(c+1) f(x+c) \leq f(x)$.

We now show that it is also necessary. We study how much the Shapley allocation changes when only one player duplicates data. The Shapley allocation for feature $X_i$ is defined as 
\[
\psi_i(v) = \sum_{S \subseteq N \setminus \{i\}}\frac{|S|!(|N| - |S| - 1)!}{|N|!}(v(S \cup \{i\}) - v(S))
\]
A key observation to computing the new Shapley value is that $v(S \cup \{i\}) - v(S) \geq 0$ if $i$ appears before all its copies. Define $M$ to be the number of original sellers (without copying) and $c$ are the additional copies. By a counting argument one can show that 
\begin{align*}
\hat{\psi}_i(v) &= \sum_{i=0}^{M-1} \frac{1}{(M+c)!} {M-i+c-1 \choose M-i-1}[v(S \cup \{i\}) - v(S)] \\
&= \sum_{i=0}^{M-1} \frac{M!}{(M+c)!} {M-i+c-1 \choose M-i-1} \frac{1}{M!}[v(S \cup \{i\}) - v(S)] \\
&= \sum_{i=0}^{M-1} \frac{M!}{(M+c)!} {M-i+c-1 \choose c} \frac{1}{M!}[v(S \cup \{i\}) - v(S)] \\
&\leq \frac{M}{M+c} \sum_{i=0}^{M-1}\frac{1}{M!}[v(S \cup \{i\}) - v(S)] \\
&= \frac{M}{M+c} \psi_i(v)
\end{align*}
Observe this inequality turns into an equality when all the original sellers have exactly the same data. We observe that for a large number of unique sellers then copying does not change the Shapley allocation too much $\simeq -c/M$. In fact, this bound tells us that when there are a large number of sellers, replicating a single data set a fixed number of times does not change the Shapley allocation too much, \textit{i.e.}, $\hat{\psi}_i \approx \hat{\psi}_i$ (with the approximation being tight in the limit as $M$ tends to infinity). Therefore, we necessarily need to ensure that 
\[
(c+1) f(x+c) \leq f(x)
\]
\end{proof}

\begin{remark}
If we make the {\em extremely loose} relaxation of letting $c \in \Rb_{+}$ instead of $\mathbb{Z}_{+}$, then the exponential weighting in Algorithm~\ref{alg-payment-division-robust} is minimal in the sense that it ensures robustness with least penalty in allocation. Observe that the penalty function (assuming differentiability) should also satisfy
\begin{align*}
\frac{f(c+x) - f(x)}{c} &\leq -f(x+c) \\
\lim_{c \rightarrow 0^{+}}\frac{f(c+x) - f(x)}{c}&\leq -f(x) \\
f^{'}(x) &\leq -f(x)
\end{align*}
By Gronwall's Inequality we can see that $f(x) \leq C e^{-Kx}$ for suitable $C, K \geq 0$. This suggests that the exponential class of penalty ensure robustness with the ``least'' penalty, and are minimal in that sense.
\end{remark}

\end{document}